\definecolor{darkred}{rgb}{0.8,0.1,0.1}
\definecolor{darkblue2}{rgb}{0.1,0.2,0.8}
\newcommand{\Choi}{Choi-Jamio\l{}kowski }
\definecolor{darkblue}{RGB}{0,76,156}
\definecolor{darkkblue}{RGB}{0,0,153}
\definecolor{blue2}{RGB}{102,178,255}
\newtheorem{definition}{Definition}
\newtheorem{proposition}{Proposition}
\newtheorem{lemma}[proposition]{Lemma}
\newtheorem{theorem}[proposition]{Theorem}
\newtheorem{remark}{Remark}
\newtheorem{corollary}[proposition]{Corollary}
\def\squareforqed{\hbox{\rlap{$\sqcap$}$\sqcup$}}
\def\qed{\ifmmode\squareforqed\else{\unskip\nobreak\hfil
\penalty50\hskip1em\null\nobreak\hfil\squareforqed
\parfillskip=0pt\finalhyphendemerits=0\endgraf}\fi}
\def\endenv{\ifmmode\;\else{\unskip\nobreak\hfil
\penalty50\hskip1em\null\nobreak\hfil\;
\parfillskip=0pt\finalhyphendemerits=0\endgraf}\fi}
\newenvironment{proof}{\noindent \textbf{{Proof.}~}}{\hfill $\blacksquare$}
\mathchardef\ordinarycolon\mathcode`\:
\def\vcentcolon{\mathrel{\mathop\ordinarycolon}}
\def\resetMathstrut@{%
    \setbox\z@\hbox{%
        \mathchardef\@tempa\mathcode`\[\relax
        \def\@tempb##1"##2##3{\the\textfont"##3\char"}%
        \expandafter\@tempb\meaning\@tempa \relax
    }%
    \ht\Mathstrutbox@\ht\z@ \dp\Mathstrutbox@\dp\z@}
\newcommand{\nc}{\newcommand}
\nc{\rnc}{\renewcommand}
\nc{\beg}{\begin{equation}}
\nc{\eeq}{{\end{equation}}}
\nc{\beqa}{\begin{eqnarray}}
\nc{\eeqa}{\end{eqnarray}}
\nc{\lbar}[1]{\overline{#1}}
\nc{\bra}[1]{\langle#1|}
\nc{\ket}[1]{|#1\rangle}
\nc{\ketbra}[2]{|#1\rangle\!\langle#2|}
\nc{\braket}[2]{\langle#1|#2\rangle}
\nc{\proj}[1]{| #1\rangle\!\langle #1 |}
\nc{\avg}[1]{\langle#1\rangle}
\nc{\Rank}{\operatorname{Rank}}
\nc{\smfrac}[2]{\mbox{$\frac{#1}{#2}$}}
\nc{\tr}{\operatorname{Tr}}
\nc{\ox}{\otimes}
\nc{\dg}{\dagger}
\nc{\dn}{\downarrow}
\nc{\cA}{{\cal A}}
\nc{\cB}{{\cal B}}
\nc{\cC}{{\cal C}}
\nc{\cD}{{\cal D}}
\nc{\cE}{{\cal E}}
\nc{\cF}{{\cal F}}
\nc{\cG}{{\cal G}}
\nc{\cH}{{\cal H}}
\nc{\cI}{{\cal I}}
\nc{\cJ}{{\cal J}}
\nc{\cK}{{\cal K}}
\nc{\cL}{{\cal L}}
\nc{\cM}{{\cal M}}
\nc{\cN}{{\cal N}}
\nc{\cO}{{\cal O}}
\nc{\cP}{{\cal P}}
\nc{\cQ}{{\cal Q}}
\nc{\cR}{{\cal R}}
\nc{\cS}{{\cal S}}
\nc{\cT}{{\cal T}}
\nc{\cV}{{\cal V}}
\nc{\cX}{{\cal X}}
\nc{\cY}{{\cal Y}}
\nc{\cZ}{{\cal Z}}
\nc{\cW}{{\cal W}}
\nc{\csupp}{{\operatorname{csupp}}}
\nc{\qsupp}{{\operatorname{qsupp}}}
\nc{\var}{{\operatorname{var}}}
\nc{\rar}{\rightarrow}
\nc{\lrar}{\longrightarrow}
\nc{\polylog}{{\operatorname{polylog}}}
\nc{\wt}{{\operatorname{wt}}}
\nc{\av}[1]{{\left\langle {#1} \right\rangle}}
\nc{\supp}{{\operatorname{supp}}}
\nc{\dia}{{\diamondsuit }}
\def\di{\diamondsuit}
\def\ve{\varepsilon}
\def\x{\xi}
\nc{\SSS}{{{\mathbb S}}}
\nc{\RR}{{{\mathbb R}}}
\nc{\CC}{{{\mathbb C}}}
\nc{\FF}{{{\mathbb F}}}
\nc{\NN}{{{\mathbb N}}}
\nc{\ZZ}{{{\mathbb Z}}}
\nc{\PP}{{{\mathbb P}}}
\nc{\QQ}{{{\mathbb Q}}}
\nc{\UU}{{{\mathbb U}}}
\nc{\EE}{{{\mathbb E}}}
\nc{\id}{{\operatorname{id}}}
\nc{\CHSH}{{\operatorname{CHSH}}}
\nc{\be}{\begin{equation}}
\nc{\ee}{{\end{equation}}}
\nc{\bea}{\begin{eqnarray}}
\nc{\eea}{\end{eqnarray}}
\nc{\Hom}[2]{\mbox{Hom}(\CC^{#1},\CC^{#2})}
\nc{\rU}{\mbox{U}}
\nc{\ob}[1]{#1}
\nc{\SEP}{{\text{SEP}}}
\nc{\NS}{{\text{NS}}}
\nc{\LOCC}{{\operatorname{LOCC}}}
\nc{\PPT}{{\operatorname{PPT}}}
\nc{\EXT}{{\text{EXT}}}
\nc{\Sym}{{\operatorname{Sym}}}
\nc{\HH}{\mathbb{H}}
\nc{\ERLO}{{E_{\text{r,LO}}}}
\nc{\ERLOCC}{{E_{\text{r,LOCC}}}}
\nc{\ERPPT}{{E_{\text{r,PPT}}}}
\nc{\ERLOCCinfty}{{E^{\infty}_{\text{r,LOCC}}}}
\nc{\Aram}{{\operatorname{\sf A}}}
\rnc{\bar}{\;\rule{0pt}{9.5pt}\right|\;}
\nc{\lset}{\left\{\left.}
\nc{\rset}{\right\}}
\nc{\lsetr}{\left\{}
\nc{\rsetr}{\right.\right\}}
\nc{\barr}{\left|\rule{0pt}{9.5pt}\;}
\let\id\1
\nc{\norm}[2]{\left\lVert#1\right\rVert_{#2\!}}
\nc{\lnorm}[2]{\left\lVert#1\right\rVert_{\ell_{#2}}}
\def\grd@save@target#1{%
  \def\grd@target{#1}}
\def\grd@save@start#1{%
  \def\grd@start{#1}}
\tikzset{
  grid with coordinates/.style={
    to path={%
      \pgfextra{%
        \edef\grd@@target{(\tikztotarget)}%
        \tikz@scan@one@point\grd@save@target\grd@@target\relax
        \edef\grd@@start{(\tikztostart)}%
        \tikz@scan@one@point\grd@save@start\grd@@start\relax
        \draw[minor help lines,magenta] (\tikztostart) grid (\tikztotarget);
        \draw[major help lines] (\tikztostart) grid (\tikztotarget);
        \grd@start
        \pgfmathsetmacro{\grd@xa}{\the\pgf@x/1cm}
        \pgfmathsetmacro{\grd@ya}{\the\pgf@y/1cm}
        \grd@target
        \pgfmathsetmacro{\grd@xb}{\the\pgf@x/1cm}
        \pgfmathsetmacro{\grd@yb}{\the\pgf@y/1cm}
        \pgfmathsetmacro{\grd@xc}{\grd@xa + \pgfkeysvalueof{/tikz/grid with coordinates/major step}}
        \pgfmathsetmacro{\grd@yc}{\grd@ya + \pgfkeysvalueof{/tikz/grid with coordinates/major step}}
        \foreach \x in {\grd@xa,\grd@xc,...,\grd@xb}
        \node[anchor=north] at (\x,\grd@ya) {\pgfmathprintnumber{\x}};
        \foreach \y in {\grd@ya,\grd@yc,...,\grd@yb}
        \node[anchor=east] at (\grd@xa,\y) {\pgfmathprintnumber{\y}};
      }
    }
  },
  minor help lines/.style={
    help lines,
    step=\pgfkeysvalueof{/tikz/grid with coordinates/minor step}
  },
  major help lines/.style={
    help lines,
    line width=\pgfkeysvalueof{/tikz/grid with coordinates/major line width},
    step=\pgfkeysvalueof{/tikz/grid with coordinates/major step}
  },
  grid with coordinates/.cd,
  minor step/.initial=.2,
  major step/.initial=1,
  major line width/.initial=2pt,
}
\tikzset{
  treenode/.style = {align=center, inner sep=0pt, text centered,
    font=\sffamily},
  arn_n/.style = {treenode, circle, white, font=\sffamily\bfseries, draw=black,
    fill=black, text width=1.5em},
  arn_r/.style = {treenode, circle, red, draw=red, 
    text width=1.5em, very thick},
  arn_x/.style = {treenode, rectangle, draw=black,
    minimum width=0.5em, minimum height=0.5em}
}
\nc{\STAB}{{{\operatorname{Stab}}}}
\nc{\PWP}{{{\operatorname{CPWP}}}}
 \nc{\SCPO}{{{\operatorname{CSPO}}}}
 \nc{\bu}{{{\textbf{u}}}}
  \nc{\bv}{{{\textbf{v}}}}
    \nc{\bw}{{{\textbf{w}}}}
        \nc{\by}{{{\textbf{y}}}}
  \nc{\sn}{{{\operatorname{sn}}}}
\begin{document}
\title{Quantifying the magic of quantum channels}
 \author{Xin Wang}
 \email{wangxinfelix@gmail.com}
\affiliation{Joint Center for Quantum Information and Computer Science, University of Maryland, College Park, Maryland 20742, USA}
 \affiliation{Institute for Quantum Computing, Baidu Research, Beijing 100193, China}

\author{Mark M. Wilde}
\email{mwilde@lsu.edu}
\affiliation{Hearne Institute for Theoretical Physics, Department of Physics and Astronomy,
Center for Computation and Technology, Louisiana State University, Baton Rouge, Louisiana 70803, USA}

\author{Yuan Su}
  \email{buptsuyuan@gmail.com}
\affiliation{Joint Center for Quantum Information and Computer Science, University of Maryland, College Park, Maryland 20742, USA}
 \affiliation{Department of Computer Science, Institute for Advanced Computer Studies, University of Maryland, College Park, USA}
\begin{abstract} 
To achieve universal quantum computation via general fault-tolerant schemes, stabilizer operations must be supplemented with other non-stabilizer quantum resources. 
Motivated by this necessity, we develop a resource theory for magic quantum channels to characterize and quantify the quantum
``magic'' or non-stabilizerness of noisy quantum circuits. For qudit quantum computing with odd dimension $d$, it is known that quantum states with non-negative Wigner function can be efficiently simulated classically. 
First, inspired by this observation, we introduce a resource theory based on completely positive-Wigner-preserving quantum operations as free operations, and we show that they can be efficiently simulated via a classical algorithm.
Second, we introduce two efficiently computable magic measures for quantum channels, called the mana and thauma of a quantum channel. As applications, we show that these measures not only provide fundamental limits on the distillable magic of quantum channels, but they also lead to lower bounds for the task of synthesizing non-Clifford gates.
Third, we propose a classical algorithm for simulating noisy quantum circuits, whose sample complexity can be quantified by the mana of a quantum channel.  We further show that this algorithm can outperform another approach for simulating noisy quantum circuits, based on channel robustness.
Finally, we explore the threshold of non-stabilizerness for basic quantum circuits under  depolarizing noise.
\end{abstract}  

\date{\today}
\maketitle 
\tableofcontents

\section{Introduction}

\subsection{Background}

One of the main obstacles to physical realizations of quantum computation is decoherence that occurs during the execution of quantum algorithms.  Fault-tolerant quantum computation (FTQC) \cite{Shor1996,Campbell2017c} provides a framework to overcome this difficulty by encoding quantum information into quantum error-correcting codes, and it allows reliable quantum computation when the physical error rate is below a certain threshold value. 

The fault-tolerant approach to quantum computation allows for a limited set of transversal, or manifestly fault-tolerant, operations, which are usually taken to be the stabilizer operations. However, the stabilizer operations alone do not enable universality because they can be simulated efficiently on a classical computer, a result known as the Gottesman-Knill theorem \cite{Gottesman1997,Aaronson2004a}. 
The addition of non-stabilizer quantum resources, such as \textit{non-stabilizer operations}, can lead to universal quantum computation~\cite{Bravyi2005}. With this perspective,  it is natural to consider the resource-theoretic approach \cite{Chitambar2018} to quantify and characterize non-stabilizer quantum resources, including both quantum states and channels.

One solution for the above scenario is to implement a non-stabilizer operation via state injection \cite{Zhou2000} of so-called ``magic states,'' which are costly to prepare via magic state distillation \cite{Bravyi2005} (see also \cite{Bravyi2012,Jones2013,Haah2017,Campbell2018,Hastings2018,Krishna2018a,Chamberland2018}). The usefulness of such magic states also motivates the resource theory of magic states \cite{Veitch2012,Mari2012,Veitch2014,Howard2016,Bravyi2016,Wang2018}, where the free operations are the stabilizer operations and the free states are the stabilizer states (abbreviated as ``Stab''). On the other hand, since a key step of fault-tolerant quantum computing is to implement non-stabilizer operations, a natural and fundamental problem is to quantify the non-stabilizerness or ``magic'' of quantum operations. As we are at the stage of Noisy Intermediate-Scale Quantum (NISQ) technology, a resource theory of magic for noisy quantum operations is desirable both to exploit the power and to identify the limitations of NISQ devices in fault-tolerant quantum computation.

\subsection{Overview of results}

In this paper, we develop a framework for the resource theory of magic quantum channels, based on qudit systems with odd prime dimension $d$.
Related work on this topic has appeared recently \cite{Seddon2019}, but the set of free operations that we take in our resource theory is larger, given by the completely positive-Wigner-preserving operations as we detail below.
We note here that $d$-level fault-tolerant quantum computation based on qudits with prime $d$ is of considerable interest for both theoretical and practical purposes \cite{Gottesman1999a,Howard2014,Campbell2014,Anwar2012,Dawkins2015}.

Our paper is structured as follows:
\begin{itemize}

\item In Section~\ref{sec: preliminary}, we first review the stabilizer formalism \cite{Gottesman1997} and the discrete Wigner function \cite{Wootters1987,Gross2006,Gross2007}. We further review various magic measures of quantum states and introduce various classes of free operations, including the stabilizer operations and beyond. 

\item In Section~\ref{sec:magic channel measure}, we introduce and characterize the completely positive-Wigner-preserving (CPWP) operations. We then introduce two efficiently computable magic measures for quantum channels. The first is the mana of quantum channels, whose state version was introduced in \cite{Veitch2014}. The second is the max-thauma of quantum channels, inspired by the magic state measure in \cite{Wang2018}. We prove several desirable properties of these two measures, including reduction to states, faithfulness, additivity for tensor products of channels, subadditivity for serial composition of channels, an amortization inequality, and monotonicity under CPWP superchannels. 

\item In Section~\ref{sec:magic gen}, we explore the ability of quantum channels to generate magic states. We first introduce the amortized magic of a quantum channel as the largest amount of magic that can be generated via a quantum channel. Furthermore, we  introduce an information-theoretic notion of the distillable magic of a quantum channel. In particular, we show that both the amortized magic and distillable magic of a quantum channel can be bounded from above by its mana and max-thauma.

\item In Section~\ref{sec: magic cost}, we apply our magic measures for quantum channels in order to evaluate the magic cost of quantum channels, and we explore further applications in quantum gate synthesis.  In particular, we show that  at least four $T$ gates are required to perfectly implement a controlled-controlled-NOT gate.

\item In Section~\ref{sec: classical simulation}, we propose a classical algorithm, inspired by \cite{PWB15}, for simulating quantum circuits, which is relevant for the broad class of noisy quantum circuits that are currently being run on NISQ devices. This algorithm has sample complexity that scales with respect to the mana of a quantum channel. We further show by concrete examples that the new algorithm can outperform a previous approach for simulating noisy quantum circuits, based on channel robustness \cite{Seddon2019}.
 \end{itemize}

\section{Preliminaries}\label{sec: preliminary}

\subsection{The stabilizer formalism}

For most known fault-tolerant schemes, the restricted set of quantum operations is the stabilizer operations, consisting of preparation and measurement in the computational basis and a restricted set of unitary operations. Here we review the basic elements of the stabilizer states and operations for systems with a dimension that is a product of odd primes. Throughout this paper, a Hilbert space implicitly has an odd dimension, and if the dimension is not
prime, it should be understood to be a tensor product of
Hilbert spaces each having odd prime dimension.

 Let $\cH_d$ denote  a Hilbert space of dimension $d$,  and let $\{\ket j\}_{j=0,\cdots,d-1}$ denote the standard computational basis. For a prime number $d$, we define the unitary boost and shift operators $X,Z\in\cL(\cH_d)$ in terms of their action on the computational basis:
\begin{align}
X\ket j &= \ket{j\oplus 1} , \\
Z\ket j &=\omega^j \ket j, \quad\omega=e^{2\pi i /d},
\end{align}
where $\oplus$ denotes addition modulo $d$.
We define the Heisenberg--Weyl operators as
\begin{align}
T_{\bu}= \tau^{-a_1a_2}Z^{a_1}X^{a_2},
\end{align}
where $\tau=e^{(d+1)\pi i/d}$, $\bu=(a_1,a_2)\in \ZZ_d\times \ZZ_d$.

For a system with composite Hilbert space $\cH_{A}\ox\cH_B$, the Heisenberg--Weyl operators are
the tensor product of the subsystem Heisenberg--Weyl operators:
\begin{align}
T_{\bu_A\oplus \bu_B} = T_{\bu_A} \ox T_{\bu_B},
\end{align} 
where $\bu_A\oplus \bu_B$ is an element of $\ZZ_{d_A}\times\ZZ_{d_A}\times\ZZ_{d_B}\times\ZZ_{d_B}$.

The Clifford operators $\cC_d$ are defined to be the set of unitary operators that map Heisenberg--Weyl operators to Heisenberg--Weyl operators under unitary conjugation up to phases:
\begin{align}
U\in \cC_d \text{ iff } \forall \bu, \exists \theta,\bu', \text{ s.t. }
UT_\bu U^{\dagger} = e^{i\theta}T_{\bu'}.
\end{align}
These operators form the Clifford group.

The pure stabilizer states can be obtained by applying Clifford operators to the state $\ket 0$:
\begin{align}
\{S_j\}=\{U\proj0U^\dagger: U\in \cC_d \}.
\end{align}

A state is defined to be a magic or non-stabilizer state if it cannot be written as a convex combination of pure stabilizer states.

\subsection{Discrete Wigner function}

\label{sec:dWF}

The discrete Wigner function \cite{Wootters1987,Gross2006,Gross2007} was used to show the existence of bound magic states~\cite{Veitch2012}. For an overview of discrete Wigner functions, we refer to \cite{Veitch2012,Veitch2014} for more details.
See also \cite{Ferrie2011} for a review of quasi-probability representations in quantum
theory, with applications to quantum information
science. 

For each point $\bu \in \ZZ_d\times \ZZ_d$ in the discrete phase space, there is a corresponding operator $A_\bu$, and the value of the discrete Wigner representation of a state $\rho$ at this point is given by
\begin{align}
W_{\rho}(\bu)\coloneqq\frac1d \tr [A_\bu\rho],
\label{eq:def-disc-Wigner-fcn}
\end{align}
where $d$ is the dimension of the Hilbert space and $\{A_\bu\}_{\bu}$ are the phase-space point operators:
\begin{equation}
A_0 =\frac1d \sum_\bu T_\bu,\qquad
A_\bu =T_\bu A_0 T_\bu^\dag.
\end{equation}
The discrete Wigner function can be defined more generally for a Hermitian operator~$X$ acting on a space of dimension $d$ via the same formula:
\begin{equation}
W_{X}(\bu)=\frac1d \tr [A_\bu X] .
\end{equation}

For the particular case of a measurement operator $E$ satisfying $0\leq E\leq \mathds{1}$, the discrete Wigner representation is defined as
 \begin{equation}\label{eq:wig of measurement}
 	W(E|\bu)\coloneqq \tr\!\big[EA_{\bu}\big],
 \end{equation}
 i.e., without the prefactor $1/d$. The reason for this will be clear in a moment and is related to the distinction between a frame and a dual frame \cite{Ferrie_2008,Ferrie_2009,PWB15}.

Some nice properties of the set $\{A_\bu\}_{\bu}$ are listed as follows:
\begin{enumerate}
\item  $A_\bu$ is Hermitian;
\item $\sum_\bu A_\bu/d=\1$;
\item $\tr [A_\bu A_{\bu'}] =d \, \delta (\bu,\bu')$;
\item $\tr [A_\bu] =1$;
\item $\rho=\sum_\bu W_{\rho}(\bu) A_\bu$;
\item $\{A_\bu\}_{\bu}=\{A_\bu^T\}_{\bu}$.
\end{enumerate}

From the second property above and the definition in \eqref{eq:def-disc-Wigner-fcn}, we conclude the following equality for a quantum state $\rho$:
\begin{equation}
\sum_{\bu} W_{\rho}(\bu) = 1. \label{eq:dWF-normalized}
\end{equation}
For this reason, the discrete Wigner function is known as a quasi-probability distribution. More generally, for a Hermitian operator $X$, we have that
\begin{equation}
\sum_{\bu} W_{X}(\bu) = \tr[X],
\end{equation}
so that for a subnormalized state $\omega$, satisfying $\omega \geq 0$ and $\tr[\omega]\leq 1 $, we have that $
\sum_{\bu} W_{\omega}(\bu) \leq  1$.

Following the convention in \eqref{eq:wig of measurement} for measurement operators, we find the following for a positive operator-valued measure (POVM) $\{E^x\}_x$ (satisfying $E^x\geq 0 \ \forall x$ and $\sum_x E^x= \mathds{1}$):
\begin{equation}
\sum_{x} W(E^x|\bu) = 1 \label{eq:dWF-normalized-measurement-ops},
\end{equation}
so that the quasi-probability interpretation is retained for a POVM. That is, $W(E^x|\bu)$ can be interpreted as the conditional quasi-probability of obtaining outcome $x$ given input $\bu$.

We can quantify the amount of negativity in the discrete Wigner function of a state $\rho$ via the sum negativity, which is equal to the absolute sum of the negative elements of the Wigner function~\cite{Veitch2014}:
\begin{align}
\sn(\rho)\coloneqq \sum_{\bu:W_{\rho}(\bu)<0}|W_{\rho}(\bu)|=\frac{1}{2} \left(\sum_{\bu} |W_{\rho}(\bu)|- W_{\rho}(\bu) \right)
 = \frac{1}{2} \left(\sum_{\bu} |W_{\rho}(\bu)|\right) - \frac{1}{2}.
\end{align}
By definition, we find that $\sn(\rho)\geq 0$.
The \emph{mana} of a state $\rho$ is defined as \cite{Veitch2014}
\begin{align}
\cM(\rho)\coloneqq  \log \left(\sum_{\bu} |W_{\rho}(\bu)|\right) = \log (2 \cdot \sn (\rho)+1) \geq 0.
\label{eq:mana-state}
\end{align}
We define the mana more generally, as in \cite{Wang2018}, for a positive semi-definite operator $X$ via the formula
\begin{align}
\cM(X)  \coloneqq \log \!\left(\sum_{\bu} |W_{X}(\bu)|\right) 
 = \log \!\left( 2 \left[\sum_{\bu : W_{X}(\bu)<0} |W_{X}(\bu)|\right] + \tr[X] \right).
 \label{eq:mana-PSD-op}
\end{align}

We denote the set of quantum states with a non-negative Wigner function by $\cW_+$ (Wigner polytope), i.e.,
\begin{align}
\cW_+\coloneqq\{\rho: \forall \bu, W_{\rho}(\bu)\ge 0, \rho \ge 0, \tr [\rho]=1\}.
\end{align}
It is known that quantum states with non-negative Wigner function are classically simulable and thus are useless in magic state distillation \cite{Veitch2012}, which can be seen as the analog of states with positive partial transpose (PPT) in entanglement distillation~\cite{Peres1996,Horodecki1998}.

Motivated by the Rains bound \cite{Rains2001} and its variants \cite{Wang2016,Wang2017e,Fang2017,Tomamichel2015a,Tomamichel2016,Wang2016a,Wang2017d} in entanglement theory, the set of sub-normalized states with non-positive mana was introduced as follows~\cite{Wang2018} to explore the resource theory of magic states:
\begin{align}
	\cW\coloneqq\left\{\sigma: \sum_{\bu} |W_{\sigma}(\bu)|\le 1, \sigma \ge 0\right\}
	= \left\{\sigma: \cM(\sigma)\le 0, \sigma \ge 0\right\}.
\end{align}
It follows from definitions and the triangle inequality that $\tr[\sigma] \leq 1$ if $\sigma \in \cW$ (alternatively one can conclude this by inspecting the right-hand side of \eqref{eq:mana-PSD-op}).

Furthermore, we define $\widehat \cW_+$ to be the set of Hermitian operators with non-negative Wigner function:
\begin{align}
\widehat \cW_+\coloneqq\{V: \forall \bu, W_{V}(\bu)\ge 0\}.
\end{align}

The Wigner trace norm and Wigner spectral norm of an Hermitian operator $V$ are defined as follows, respectively:
\begin{align}
\|V\|_{W,1}&\coloneqq\sum_\bu |W_{V}(\bu)| =  \sum_\bu |\tr [A_\bu V]/d|,\\
\|V\|_{W,\infty}&\coloneqq d \max_{\bu} |W_{V}(\bu)| = \max_\bu |\tr [A_\bu V]|. \label{eq: inf norm wigner}
\end{align}
The Wigner trace and spectral norms are dual to each other in the following sense:
\begin{align}
\|V\|_{W,1} & \coloneqq\max_C \{|\tr [VC]|: \|C\|_{W,\infty}\le 1 \}, \\
\|V\|_{W,\infty} & \coloneqq\max_C \{|\tr [VC]|: \|C\|_{W,1}\le 1\},
\end{align}
with $C$ ranging over Hermitian operators within the same space.

\subsection{Stabilizer channels and beyond}

A stabilizer operation (SO) consists of the following types of quantum operations: Clifford operations, tensoring in stabilizer states, partial trace, measurements in the computational basis, and post-processing conditioned on these measurement results. Any quantum protocol composed of these quantum operations can be written in terms of the following Stinespring dilation representation: $\cE(\rho)=\tr_E[U(\rho\ox\rho_E)U^\dagger]$, where $U$ is a Clifford unitary and the ancilla $\rho_E$ is a stabilizer state.

The authors of~\cite{Ahmadi2017} generalized the set of stabilizer operations to stabilizer-preserving operations, which are those  that transform stabilizer states to stabilizer states and which form the largest set of physical operations that can be considered free for the resource theory of non-stabilizerness. More recently, Ref.~\cite{Seddon2019} introduced the completely stabilizer-preserving operations (CSPO); i.e., a quantum operation $\Pi$ is called completely stabilizer-preserving  if for any reference system $R$,
\begin{align}
\forall \rho_{RA} \in {\STAB}, \quad (\operatorname{id}_R\ox \Pi_{A\to B})(\rho_{RA}) \in {\STAB}.
\end{align}

\subsection{Magic measures of quantum states}

We review some of the magic measures of quantum states in Table~\ref{table:zoo}.
\begin{table}
		\begin{tabular}{l|c|c}
			\hline
			Measures & Acronym & Definition  \\
			\hline
			Mana \cite{Veitch2014} & $\cM(\rho)$  &
			\footnotesize $\log \sum_\bu |\tr A_\bu\rho|/d$   \\
			Robustness of magic \cite{Howard2016}& $\cR(\rho)$  &
			\footnotesize $\inf\{2r+1: \frac{\rho+r\sigma}{1+r}=\tau, \ \ \sigma,\tau\in \STAB\}$   \\		
			Relative entropy of magic \cite{Veitch2014} & $R_\cM(\rho)$  &
			\footnotesize $\inf_{\sigma\in \STAB}D(\rho\|\sigma)$   \\
			Regularized relative entropy of magic \cite{Veitch2014} & $R^{\infty}_\cM(\rho)$  &
			\footnotesize $\lim_{n\to \infty} R^{\infty}_\cM(\rho^{\ox n})/n$   \\
			Max-thauma \cite{Wang2018} & $\theta_{\max}(\rho)$  &
			\footnotesize $\inf_{\sigma\in \cW}D_{\max}(\rho\|\sigma)$   \\
			Thauma \cite{Wang2018} & $\theta(\rho)$  &
			\footnotesize $\inf_{\sigma\in \cW}D(\rho\|\sigma)$   \\
			Regularized thauma \cite{Wang2018} & $\theta^{\infty}(\rho)$  &
			\footnotesize $\lim_{n\to \infty} \theta(\rho^{\ox n})/n$   \\
			Min-thauma \cite{Wang2018} & $\theta_{\min}(\rho)$  &
			\footnotesize $\inf_{\sigma\in \cW}D_0(\rho\|\sigma)$ \\
			\hline
		\end{tabular}
		\caption{Partial zoo of magic measures.}
		\label{table:zoo}
	\end{table}
In particular, the max-thauma of a quantum state $\rho$ is defined as follows \cite{Wang2018}:
	\begin{align}
	\theta_{\max}(\rho) & \coloneqq\min_{\sigma\in \cW} D_{\max}(\rho \| \sigma) 
	 \coloneqq \min_{\sigma\in \cW} \left[\min \{\lambda: \rho \leq 2^{\lambda} \sigma\}\right]   \\
	&=\log_2 \min\left\{ \Vert V \Vert_{W,1} 
	:  \rho \le V \right\},  	\label{eq:SDP theta min}
	\end{align}
	where the max-relative entropy $D_{\max}(\rho \| \sigma)$ was defined in \cite{Datta2009}.

\section{Quantifying the non-stabilizerness of a quantum channel}

\label{sec:magic channel measure}

\subsection{Completely Positive-Wigner-Preserving operations}

\label{sec:PWP channel}

A quantum circuit consisting of an initial quantum state, unitary evolutions, and measurements, each having non-negative Wigner functions, can be classically simulated \cite{PWB15}. It is thus natural to consider free operations to be those that completely preserve the positivity of the Wigner function. Indeed, any such quantum operations are proved to be efficiently simulated via classical algorithms in Section~\ref{sec: classical simulation} and thus become reasonable free operations for the resource theory of magic.
\begin{definition}[Completely PWP operation]
A Hermiticity-preserving linear  map $\Pi$ is called completely positive Wigner preserving (CPWP) if for any system $R$ with odd dimension, the following holds
\begin{align}
\forall \rho_{RA} \in {\cW_+}, \quad (\operatorname{id}_R\ox \Pi_{A\to B})(\rho_{RA}) \in {\cW_+}\ .
\end{align}
\end{definition}

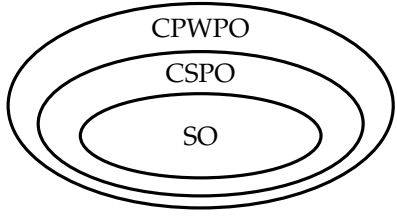
\begin{figure}
    \centering
    \begin{tikzpicture}[scale=0.8]
  \draw[very thick] (0,-0.1) ellipse (2.7cm and 1.2cm);
  \draw[very thick] (0,-0.3) ellipse (2cm and 0.7cm);
  \draw[very thick] (0,0.2) ellipse (3.2cm and 1.7cm);
  \node[] at (0,-0.3) {SO};
  \node[] at (0,0.75) {CSPO};
  \node[] at (0,1.5) {CPWPO};
\end{tikzpicture}
\caption{Relationship between stabilizer operations, completely stabilizer-preserving operations, and completely PWP operations.}
\label{fig: CQ EB 1}
\end{figure}
Figure~\ref{fig: CQ EB 1} depicts the relationship between stabilizer operations, completely stabilizer-preserving operations, and completely PWP operations.

We now recall the definition of the discrete Wigner function of a quantum channel from \cite{Mari2012}, which is strongly related to the Wigner function of a quantum channel as defined in \cite[Eq.~(95)]{BRS12}.
\begin{definition}[Discrete Wigner function of a quantum channel]
\label{def:dWf-channel}
Given a quantum channel $\cN_{A\to B}$, its discrete Wigner function is defined as 
\begin{align}
\cW_{\cN}(\bv|\bu) & \coloneqq \frac{1}{d_B} \tr [((A_A^{\bu})^T\ox A_B^{\bv})J^{\cN}_{AB}] \\
& = \frac{1}{d_B} \tr [A_B^{\bv}\cN(A_A^{\bu})] .
\label{eq:dWF-channel}
\end{align}
Here $J^{\cN}_{AB} =\sum_{ij} \ketbra{i}{j}_A \ox \cN(\ketbra{i}{j}_{A'})$ denotes the  \Choi matrix \cite{Choi1975,Jamiokowski1972} of the channel $\cN$, where $\{\ket{i}_A\}_i$ and $\{\ket{i}_{A'}\}_i$ are orthonormal bases on isomorphic Hilbert spaces $\cH_A$ and $\cH_{A'}$, respectively. More generally, the discrete Wigner function of a Hermiticity-preserving linear map $\cP_{A\to B}$ can be defined using the same formula in \eqref{eq:dWF-channel}, by substituting $\cN$ therein with $\cP$.
\end{definition}

From the definition above and the properties recalled in Section~\ref{sec:dWF}, it follows for a quantum channel $\cN_{A \to B}$ that
\begin{equation}
\sum_{\bv} \cW_{\cN}(\bv|\bu) = 1   \quad \forall \bu,
\label{eq:normalized-dWF-channel}
\end{equation}
because
\begin{align}
\sum_{\bv} \cW_{\cN}(\bv|\bu) & = \sum_{\bv} \tr [A_B^{\bv}\cN(A_A^{\bu})]/d_B  =  \tr \left[ \left(\sum_{\bv}\frac{A_B^{\bv}}{ d_B} \right)\cN(A_A^{\bu})\right] \\ 
& =  \tr [I_B \cN(A_A^{\bu})] = \tr [\cN(A_A^{\bu})]   = \tr [A_A^{\bu}] = 1,
\end{align}
where the penultimate equality follows from the fact that $\cN$ is trace preserving (in fact here we did not require complete positivity or even linearity). Due to the normalization in \eqref{eq:normalized-dWF-channel}, $\cW_{\cN}(\bv|\bu)$ can be interpreted as a conditional quasi-probability distribution.

Furthermore, the discrete Wigner function of a channel allows one to determine the output Wigner function from the input Wigner function by propagating the quasi-probability distribitions, just as one does in the classical case. When there is no reference system, such a statement was proved in \cite{Mari2012}. Here we slightly extend this result to the case with a reference system in the following lemma.

\begin{lemma}
\label{lem:prop-dWF}
For an input state $\rho_{AR}$ and a quantum channel $\cN_{A\to B}$ with respective Wigner functions $W_{\rho_{AR}}(\bu,\by)$ and $W_{\cN}(\bv|\bu)$, the Wigner function $W_{\cN(\rho_{AR})}(\bv,\by)$ of the output state $\cN_{A \to B}(\rho_{AR})$ is given by
\begin{equation}
W_{\cN_{A \to B}(\rho_{AR})}(\bv,\by) = \sum_{\bu} W_{\cN}(\bv|\bu) \ W_{\rho_{AR}}(\bu,\by).
\end{equation}
\end{lemma}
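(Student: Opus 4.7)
The plan is to establish the identity by expanding the input state in the phase-space point operator basis on $AR$, propagating by linearity through $\cN_{A\to B} \otimes \id_R$, and then contracting with the output phase-space operator on $BR$. The essential ingredients are already stated in Section~\ref{sec:dWF}: the resolution $\rho_{AR} = \sum_{\bu,\by'} W_{\rho_{AR}}(\bu,\by') \, A_A^{\bu}\otimes A_R^{\by'}$ (property 5 for the bipartite system), the orthogonality $\tr[A^{\bu} A^{\bu'}] = d\,\delta(\bu,\bu')$ (property 3), and the Heisenberg-picture form $W_{\cN}(\bv|\bu) = \tr[A_B^{\bv}\,\cN(A_A^{\bu})]/d_B$ from Definition~\ref{def:dWf-channel}.

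Concretely, I would begin by writing the left-hand side using the bipartite version of \eqref{eq:def-disc-Wigner-fcn},
\begin{equation}
W_{\cN(\rho_{AR})}(\bv,\by) \;=\; \frac{1}{d_B d_R}\, \tr\!\big[(A_B^{\bv}\otimes A_R^{\by})\,(\cN_{A\to B}\otimes \id_R)(\rho_{AR})\big].
\end{equation}
Substituting the phase-space expansion of $\rho_{AR}$ and using linearity of $\cN$ yields a double sum over $(\bu,\by')$ in which the trace factorizes across $B$ and $R$. Property~3 on the $R$ factor produces $d_R\,\delta(\by,\by')$, collapsing the $\by'$ sum to $\by' = \by$, while the $B$ factor gives exactly $d_B\, W_{\cN}(\bv|\bu)$ by the defining formula for the channel Wigner function. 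The prefactor $1/(d_B d_R)$ cancels, and the remaining sum over $\bu$ is the claimed identity.

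I do not anticipate a real obstacle; the argument is a routine application of the bipartite phase-space resolution of identity together with orthogonality. The one place where care is warranted is to avoid the Choi-matrix form in \eqref{eq:dWF-channel} (which carries a transpose on $A_A^{\bu}$) and instead use the equivalent Heisenberg-picture expression $\tr[A_B^{\bv}\cN(A_A^{\bu})]/d_B$, since that is what arises directly from pushing $\cN_{A\to B}$ through the expansion. Finally, it is worth noting that the proof does not use complete positivity or even trace preservation of $\cN$; only linearity and Hermiticity-preservation are invoked, so the same propagation rule holds for any Hermiticity-preserving linear map, which will be useful later when dealing with differences of channels.
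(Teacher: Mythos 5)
Your proposal is correct and follows essentially the same route as the paper's proof: expand $\rho_{AR}$ in the bipartite phase-space point operator basis, push $\cN_{A\to B}\otimes\id_R$ through by linearity, factorize the trace across the $B$ and $R$ tensor factors, apply orthogonality (property 3) on $R$ to collapse one sum, and identify the remaining $B$-trace as $d_B\,W_{\cN}(\bv|\bu)$. Your closing remarks — that complete positivity and trace preservation are not needed, only linearity — are also accurate and consistent with the paper's later use of this lemma for Hermiticity-preserving maps.
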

\begin{proof}
The proof is straightforward:
\begin{align}
W_{\cN(\rho_{AR})}(\bv,\by) & = \frac{1}{d_B d_R} \tr [(A_B^\bv \ox A_R^{\by}) \cN_{A \to B}(\rho_{AR})] \label{eq:dWF-state-from-channel-1}\\
& = \frac{1}{d_B d_R }\sum_{\bu,\bw}\tr [(A_B^\bv \ox A_R^{\by}) \cN_{A \to B}(W_{\rho_{AR}}(\bu,\bw)A_A^\bu\ox A_R^{\bw})]\\
& = \frac{1}{d_B d_R }\sum_{\bu,\bw} W_{\rho_{AR}}(\bu,\bw) \tr [A_B^\bv \cN_{A \to B}(A_A^\bu) \ox A_R^{\by} A_R^{\bw})]\\
& = \frac{1}{d_B d_R } \sum_{\bu,\bw} W_{\rho_{AR}}(\bu,\bw) \tr [A_B^\bv \cN_{A \to B}(A_A^\bu)] d_R\ \delta(\by,\bw)\\
& = \sum_{\bu} W_{\cN}(\bv|\bu) \ W_{\rho_{AR}}(\bu,\by).
\end{align}
All steps follow from definitions and the properties of the phase-space point operators recalled in Section~\ref{sec:dWF}.
In particular, we made use of the fact that $\rho_{AR} = \sum_{\bv,\bw} W_{\rho_{AR}}(\bv,\bw)A_A^\bv\ox A_R^{\bw}$ in the second equality.
\end{proof}

\bigskip
\begin{theorem}\label{th: PWP}
The following statements about CPWP operations are equivalent:
\begin{enumerate}
\item The quantum channel $\cN$ is CPWP;
\item The discrete Wigner function of the \Choi matrix $J_\cN$ is non-negative;
\item $W_{\cN}(\bv|\bu)$ is non-negative for all $\bu$ and $\bv$ (i.e., $W_{\cN}(\bv|\bu)$ is a conditional probability distribution or classical channel).
\end{enumerate}
\end{theorem}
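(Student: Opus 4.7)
The plan is to establish the three statements via the cyclic chain $(1) \Rightarrow (2) \Rightarrow (3) \Rightarrow (1)$.

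For $(1) \Rightarrow (2)$, I would use the observation that in odd dimension, the maximally entangled state $\Phi_{AA'} = \ketbra{\Phi}{\Phi}_{AA'}$, with $\ket{\Phi}_{AA'} = \tfrac{1}{\sqrt{d_A}} \sum_j \ket{j}_A\ket{j}_{A'}$, is a stabilizer state and therefore lies in $\cW_+$ (stabilizer states in odd dimension have non-negative discrete Wigner function). Since the Choi matrix factors as $J_\cN = d_A \cdot (\id_A \otimes \cN_{A' \to B})(\Phi_{AA'})$, the CPWP hypothesis applied with reference system $A$ forces $(\id_A \otimes \cN)(\Phi_{AA'}) \in \cW_+$, and rescaling by $d_A>0$ preserves non-negativity of the Wigner function.

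For $(2) \Rightarrow (3)$, I would invoke property~6 of the phase-space point operators, namely $\{A_\bu^T\}_\bu = \{A_\bu\}_\bu$, which supplies a permutation $\pi$ of $\ZZ_{d_A} \times \ZZ_{d_A}$ with $(A_A^\bu)^T = A_A^{\pi(\bu)}$. Substituting this into Definition~\ref{def:dWf-channel} and comparing with the general formula for the discrete Wigner function of a Hermitian operator on $AB$ yields
\begin{equation*}
\cW_\cN(\bv | \bu) = \frac{1}{d_B} \tr\!\left[\big(A_A^{\pi(\bu)} \otimes A_B^\bv\big) J_\cN\right] = d_A \, W_{J_\cN}(\pi(\bu), \bv),
\end{equation*}
so non-negativity of $W_{J_\cN}$ transfers directly to non-negativity of $\cW_\cN(\bv|\bu)$. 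The normalization~\eqref{eq:normalized-dWF-channel} then automatically upgrades $\cW_\cN(\bv|\bu)$ to a conditional probability distribution.

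For $(3) \Rightarrow (1)$, I would apply Lemma~\ref{lem:prop-dWF} with an arbitrary reference system $R$: for any $\rho_{RA} \in \cW_+$,
\begin{equation*}
W_{(\id_R \otimes \cN)(\rho_{RA})}(\by, \bv) = \sum_\bu \cW_\cN(\bv|\bu)\, W_{\rho_{RA}}(\by, \bu),
\end{equation*}
which is a sum of products of non-negative quantities, hence non-negative. Since $\cN$ is a quantum channel the output is a bona fide quantum state, so it lies in $\cW_+$, and therefore $\cN$ is CPWP.

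The argument is essentially mechanical; the only step that requires any bookkeeping is $(2) \Rightarrow (3)$, where the transpose $(A_A^\bu)^T$ appearing in the definition of $\cW_\cN$ must be absorbed via property~6 of the phase-space point operators. I do not anticipate any substantive obstacle beyond this identification.
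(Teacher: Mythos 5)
Your proof is correct and follows essentially the same cyclic chain $(1)\Rightarrow(2)\Rightarrow(3)\Rightarrow(1)$ as the paper, with the same ingredients: the maximally entangled state as a non-negatively-represented input, the transpose-permutation property $\{A_\bu^T\}_\bu=\{A_\bu\}_\bu$, and Lemma~\ref{lem:prop-dWF} for the propagation step. The only cosmetic difference is that the paper justifies $\Phi_d\in\cW_+$ by exhibiting an explicit stabilizer circuit (CNOT on $\ket{+}\otimes\ket{0}$), whereas you assert directly that stabilizer states in odd dimension have non-negative discrete Wigner function; both are standard and equivalent.
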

\begin{proof}
$1\to 2$: Let us first apply the (stabilizer) qudit controlled-NOT gate $\rm{CNOT_d}$ to the stabilizer state $\ket + \ox \ket 0$ to prepare the maximally entangled state $\Phi_d\in \cW_+$.
Since $\cN$ completely preserves the positivity of the Wigner function, it follows that
\begin{align}
J_\cN/d=(\operatorname{id}_R\ox \cN)(\Phi_d) \in \cW_+.
\end{align}

$2\to 3$: We find that
\begin{align}
W_{\cN}(\bv|\bu)=\tr [J_{\cN}((A_A^{\bu})^T\ox A_B^{\bv})]/d_B=  \tr [J_{\cN}(A_A^{\bu'}\ox A_B^{\bv})]/d_B \ge 0.
\label{eq:dWF-channel-non-neg}
\end{align}
In the last inequality, we note that $A_A^{\bu'}=(A_A^{\bu})^T$ and we can always find such $\bu'$ since $\{A_A^\bu\}_{\bu}=\{(A_A^\bu)^T\}_{\bu}$. The fact that $W_{\cN}(\bv|\bu)$ is a conditional probability distribution follows from the inequality in \eqref{eq:dWF-channel-non-neg} and the constraint in \eqref{eq:normalized-dWF-channel}.

$3\to 1$:
If the channel $\cN$ has a non-negative Wigner function, then for an input state $\rho_{AR}$ such that $\rho_{AR}\in\cW_+$, it follows from Lemma~\ref{lem:prop-dWF} that
\begin{align}
W_{\cN(\rho_{AR})}(\bv,\by) = \sum_{\bu} W_{\cN}(\bv|\bu) \ W_{\rho_{AR}}(\bu,\by)   
 \ge 0,
\end{align}
concluding the proof.
\end{proof}

\bigskip
We remark here that the equivalence between $2$ and $3$ above was proved in \cite{Mari2012}, and our contribution is to show the equivalence between 2, 3, and the completely positive Wigner preserving property, which considers information processing in the presence of reference systems.

\subsection{Quantum (CPWP)\ superchannels}

A superchannel $\Xi_{\left(  A\rightarrow B\right)  \rightarrow\left(
C\rightarrow D\right)  }$\ is a quantum-physical evolution of a quantum
channel $\mathcal{N}_{A\rightarrow B}$ \cite{CDP08,Gour2018}, which leads to an output
channel $\mathcal{K}_{C\rightarrow D}$ as%
\begin{equation}
\mathcal{K}_{C\rightarrow D}=\Xi_{\left(  A\rightarrow B\right)
\rightarrow\left(  C\rightarrow D\right)  }(\mathcal{N}_{A\rightarrow
B}).\label{eq:output-of-superch}%
\end{equation}
The output channel $\mathcal{K}_{C\rightarrow D}$ taking system $C$ to system $D$ can be
denoted by $\Xi(\mathcal{N})$ for short. The key property of a quantum
superchannel is that the output map%
\begin{equation}
\left(  \operatorname{id}_{R}\otimes\Xi_{\left(  A\rightarrow B\right)
\rightarrow\left(  C\rightarrow D\right)  }\right)  (\mathcal{N}%
_{RA\rightarrow RB})
\end{equation}
is a legitimate quantum channel for all input bipartite channels
$\mathcal{N}_{RA\rightarrow RB}$, where the reference system $R$ is
arbitrary and $\operatorname{id}_{R}$ denotes the identity
superchannel. A superchannel $\Xi_{\left(  A\rightarrow B\right)
\rightarrow\left(  C\rightarrow D\right)  }$\ has a physical realization in
terms of a pre-processing channel $\mathcal{E}_{C\rightarrow AM}$ and a
post-processing channel $\mathcal{D}_{BM\rightarrow D}$ \cite{CDP08,Gour2018}, so that%
\begin{equation}
\Xi_{\left(  A\rightarrow B\right)  \rightarrow\left(  C\rightarrow
D\right)  }(\mathcal{N}_{A\rightarrow B})=\mathcal{D}_{BM\rightarrow D}%
\circ\mathcal{N}_{A\rightarrow B}\circ\mathcal{E}_{C\rightarrow AM}.
\end{equation}
The superchannel $\Xi_{\left(  A\rightarrow B\right)  \rightarrow\left(
C\rightarrow D\right)  }$ is in one-to-one correspondence with a bipartite
channel $\mathcal{P}_{CB\rightarrow AD}$, defined as%
\begin{equation}
\mathcal{P}_{CB\rightarrow AD}:=\mathcal{D}_{BM\rightarrow D}\circ
\mathcal{E}_{C\rightarrow AM}.
\end{equation}
Related to this, an arbitrary bipartite channel $\mathcal{P}_{CB\rightarrow
AD}^{\prime}$ is in one-to-one correspondence with a superchannel
$\Xi_{\left(  A\rightarrow B\right)  \rightarrow\left(  C\rightarrow
D\right)  }^{\prime}$ as long as it obeys the following non-signaling
constraint \cite[Theorem~4]{PHHH06}:%
\begin{equation}
\operatorname{Tr}_{D}\circ\mathcal{P}_{CB\rightarrow AD}^{\prime
}=\operatorname{Tr}_{D}\circ\mathcal{P}_{CB\rightarrow AD}^{\prime}%
\circ\mathcal{R}_{B}^{\pi},\label{eq:non-sig-superch-biparti}%
\end{equation}
where $\mathcal{R}_{B}^{\pi}$ is a replacer channel, defined as $\mathcal{R}%
_{B}^{\pi}(\omega_{B})=\operatorname{Tr}[\omega_{B}]\pi_{B}$ with $\pi_{B}$
the maximally mixed state. That is, the non-signaling constraint implies that
a trace out of system $D$ has the effect of tracing and replacing system $B$,
thus preventing $B$ from signaling to $A$.

The Choi operator of a quantum superchannel $\Xi_{\left(  A\rightarrow
B\right)  \rightarrow\left(  C\rightarrow D\right)  }$ is given by the Choi
operator of its corresponding bipartite channel $\mathcal{P}_{CB\rightarrow
AD}$ \cite{CDP08,Gour2018}:%
\begin{equation}
J_{CBAD}^{\Xi}:=\sum_{i,j,i^{\prime},j^{\prime}}|i\rangle\langle
j|_{C}\otimes|i^{\prime}\rangle\langle j^{\prime}|_{B}\otimes\mathcal{P}%
_{C^{\prime}B^{\prime}\rightarrow AD}(|i\rangle\langle j|_{C^{\prime}}%
\otimes|i^{\prime}\rangle\langle j^{\prime}|_{B^{\prime}}),
\end{equation}
where systems $B^{\prime}$ and $C^{\prime}$ are isomorphic to $B$ and $C$,
respectively. It obeys the following constraints:%
\begin{align}
J_{CBAD}^{\Xi}  & \geq0,\\
\operatorname{Tr}_{AD}[J_{CBAD}^{\Xi}]  & =I_{CB},\\
\operatorname{Tr}_{D}[J_{CBAD}^{\Xi}]  & =\operatorname{Tr}_{DB}%
[J_{CBAD}^{\Xi}]\otimes\pi_{B},
\end{align}
which correspond respectively to complete positivity of the corresponding
bipartite channel $\mathcal{P}_{CB\rightarrow AD}$, trace preservation of the
bipartite channel $\mathcal{P}_{CB\rightarrow AD}$, and the
$B\not \rightarrow A$ non-signaling constraint. Conversely, any operator
obeying the three constraints above is a bipartite channel corresponding to a
superchannel. One can employ the following propagation rule \cite{CDP08,Gour2018} to
determine the Choi operator of the output channel in
\eqref{eq:output-of-superch}:%
\begin{equation}
J_{CD}^{\mathcal{K}}=\operatorname{Tr}_{AB}\!\left[  \left(  \left(
J_{AB}^{\mathcal{N}}\right)  ^{T}\otimes I_{CD}\right)  J_{CBAD}^{\Xi
}\right]  ,\label{eq:prop-rule-superch}%
\end{equation}
where the superscript $T$ denotes the transpose operation.

By employing Definition~\ref{def:dWf-channel}, we define the discrete Wigner function of a
quantum superchannel~$\Xi_{(A\to B)\to (C\to D)}$, and we do so by means of its corresponding
bipartite channel $\mathcal{P}_{CB\rightarrow AD}$. That is, since
$\mathcal{P}_{CB\rightarrow AD}$ is a channel, it has a discrete Wigner
function
\begin{equation}
W_{\Xi}(\boldsymbol{u}_{A},\boldsymbol{v}_{D}|\boldsymbol{u}_{C}%
,\boldsymbol{v}_{B})=\frac{1}{d_{A}d_{D}}\operatorname{Tr}[\left(
A_{A}^{\boldsymbol{u}_{A}}\otimes A_{D}^{\boldsymbol{v}_{D}}\right)
\mathcal{P}_{CB\rightarrow AD}\left(  A_{C}^{\boldsymbol{u}_{C}}\otimes
A_{B}^{\boldsymbol{v}_{B}}\right)  ],
\label{eq:dWf-superch}
\end{equation}
where we use the subscript $\Xi$ to indicate its assocation with the
superchannel $\Xi$ and the choice of letters $\boldsymbol{u}$ and
$\boldsymbol{v}$ are made in the above way because, in what follows, we will
link up the discrete Wigner function $W_{\mathcal{N}}(\boldsymbol{v}%
_{B}|\boldsymbol{u}_{A})$ of a quantum channel $\mathcal{N}_{A\rightarrow B}$
with $W_{\Xi}$ and the notation given above is more convenient for doing
so. In addition to obeying the following property%
\begin{equation}
\sum_{\boldsymbol{u}_{A},\boldsymbol{v}_{D}}W_{\Xi}(\boldsymbol{u}%
_{A},\boldsymbol{v}_{D}|\boldsymbol{u}_{C},\boldsymbol{v}_{B})=1,
\end{equation}
so that $W_{\Xi}(\boldsymbol{u}_{A},\boldsymbol{v}_{D}|\boldsymbol{u}%
_{C},\boldsymbol{v}_{B})$ is a conditional quasi-probability distribution,
there is an extra constraint imposed on $W_{\Xi}(\boldsymbol{u}%
_{A},\boldsymbol{v}_{D}|\boldsymbol{u}_{C},\boldsymbol{v}_{B})$ related to the
non-signaling constraint $B\not \rightarrow A$ in \eqref{eq:non-sig-superch-biparti}. To see this, let%
\begin{equation}
W_{\Xi}(\boldsymbol{u}_{A}|\boldsymbol{u}_{C},\boldsymbol{v}_{B}%
):=\sum_{\boldsymbol{v}_{D}}W_{\Xi}(\boldsymbol{u}_{A},\boldsymbol{v}%
_{D}|\boldsymbol{u}_{C},\boldsymbol{v}_{B}).
\end{equation}
By employing the non-signaling constraint in
\eqref{eq:non-sig-superch-biparti}\ and properties of the phase-space point
operators, it is straightforward to conclude that the non-signaling constraint
$B\not \rightarrow A$ is equivalent to the following condition on the discrete
Wigner function $W_{\Xi}(\boldsymbol{u}_{A},\boldsymbol{v}_{D}%
|\boldsymbol{u}_{C},\boldsymbol{v}_{B})$:%
\begin{equation}
W_{\Xi}(\boldsymbol{u}_{A}|\boldsymbol{u}_{C},\boldsymbol{v}_{B}%
)=W_{\Xi}(\boldsymbol{u}_{A}|\boldsymbol{u}_{C},\boldsymbol{v}_{B}^{\prime
})\qquad\forall\boldsymbol{v}_{B},\boldsymbol{v}_{B}^{\prime}%
,\label{eq:non-sig-dWf-superch}%
\end{equation}
so that we can write%
\begin{equation}
W_{\Xi}(\boldsymbol{u}_{A}|\boldsymbol{u}_{C}):=W_{\Xi}(\boldsymbol{u}%
_{A}|\boldsymbol{u}_{C},\boldsymbol{v}_{B}).
\end{equation}
This can be interpreted as indicating that the output phase-space point
$\boldsymbol{u}_{A}$ is independent of $\boldsymbol{v}_{B}$ if system $D$ is
not available (i.e., has been marginalized). We note here that the conditions
in \eqref{eq:non-sig-dWf-superch} represent a direct generalization of
non-signaling constraints for classical probability distributions to
quasi-probability distributions. Furthermore, we also observe that the super-quasi-probability distribution in \eqref{eq:dWf-superch}  represents a generalization of the classical superchannels discussed in \cite{Matthews12,WM16}.

By employing the propagation rule in \eqref{eq:prop-rule-superch}\ and a
sequence of steps similar to those given in the proof of Lemma~\ref{lem:prop-dWF}, we conclude
that the discrete Wigner function of the output channel $\mathcal{K}%
_{C\rightarrow D}=\Xi_{\left(  A\rightarrow B\right)  \rightarrow\left(
C\rightarrow D\right)  }(\mathcal{N}_{A\rightarrow B})$ is given by%
\begin{equation}
W_{\Xi(\mathcal{N})}(\boldsymbol{v}_{D}|\boldsymbol{u}_{C})=\sum
_{\boldsymbol{u}_{A},\boldsymbol{v}_{B}}W_{\Xi}(\boldsymbol{u}%
_{A},\boldsymbol{v}_{D}|\boldsymbol{u}_{C},\boldsymbol{v}_{B})W_{\mathcal{N}%
}(\boldsymbol{v}_{B}|\boldsymbol{u}_{A}).\label{eq:output-dWf-superch-dWf},
\end{equation}
again generalizing the fully classical case from \cite{Matthews12,WM16}.

We now define CPWP superchannels as free superchannels that extend the notion of CPWP channels:
\begin{definition}
[CPWP\ superchannel] \label{def:CPWP-superch} A superchannel $\Xi_{\left(  A\rightarrow B\right)
\rightarrow\left(  C\rightarrow D\right)  }$ is completely CPWP\ preserving
(CPWP\ superchannel for short) if, for all CPWP\ channels $\mathcal{N}%
_{RA\rightarrow RB}$, the output channel $\Xi_{\left(  A\rightarrow
B\right)  \rightarrow\left(  C\rightarrow D\right)  }(\mathcal{N}%
_{RA\rightarrow RB})$ is CPWP, where $R$ is an arbitrary reference system.
\end{definition}

We then have the following theorem as a generalization of Theorem~\ref{th: PWP} (its proof
is very similar and so we omit it):

\begin{theorem}
\label{thm:CPWP-superchs-equiv}
The following statements about CPWP superchannels are equivalent:
\begin{enumerate}
\item The quantum
superchannel $\Xi$ is CPWP;
\item The discrete Wigner function of the Choi matrix
$J_{CBAD}^{\Xi}$ is non-negative;
\item The discrete Wigner function $W_{\Xi
}(\boldsymbol{u}_{A},\boldsymbol{v}_{D}|\boldsymbol{u}_{C},\boldsymbol{v}%
_{B})$ is non-negative for all $\boldsymbol{u}_{A}$, $\boldsymbol{v}_{D}$,
$\boldsymbol{u}_{C}$, and $\boldsymbol{v}_{B}$ (i.e., $W_{\Xi
}(\boldsymbol{u}_{A},\boldsymbol{v}_{D}|\boldsymbol{u}_{C},\boldsymbol{v}%
_{B})$ is a conditional probability distribution or classical bipartite
channel with a non-signaling constraint).
\end{enumerate}
\end{theorem}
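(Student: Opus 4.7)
The plan is to mirror the proof of Theorem~\ref{th: PWP}, leveraging the identification between the super-Choi operator $J^{\Xi}_{CBAD}$ and the Choi operator of the bipartite channel $\cP_{CB\to AD}$ that is in one-to-one correspondence with $\Xi$, together with a reference-system extension of the propagation rule \eqref{eq:output-dWf-superch-dWf}.

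For the implication $3\to 1$, I would first extend \eqref{eq:output-dWf-superch-dWf} to accommodate an arbitrary reference system $E$, obtaining
\[
W_{(\operatorname{id}_E\otimes\Xi)(\cN)}(\by_E,\bv_D\,|\,\by_E',\bu_C)=\sum_{\bu_A,\bv_B}W_{\Xi}(\bu_A,\bv_D\,|\,\bu_C,\bv_B)\,W_{\cN}(\by_E,\bv_B\,|\,\by_E',\bu_A),
\]
by repeating the derivation of Lemma~\ref{lem:prop-dWF} with the rule \eqref{eq:prop-rule-superch} in place of the ordinary Choi propagation. Whenever $W_{\Xi}\ge 0$ pointwise and $\cN_{EA\to EB}$ is CPWP (so that $W_{\cN}\ge 0$ by Theorem~\ref{th: PWP}), the right-hand side is non-negative, hence $(\operatorname{id}_E\otimes\Xi)(\cN)$ is CPWP and $\Xi$ is a CPWP superchannel.

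For $2\Leftrightarrow 3$, since $J^{\Xi}_{CBAD}$ is by definition the Choi operator of $\cP_{CB\to AD}$, the Wigner function of $J^{\Xi}_{CBAD}$ agrees with $W_{\Xi}(\bu_A,\bv_D\,|\,\bu_C,\bv_B)$ up to a constant prefactor and the transpose relabeling $A^{\bu}\mapsto(A^{\bu})^{T}$ already used in step $2\to 3$ of Theorem~\ref{th: PWP}; the equivalence follows at once.

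For the more delicate direction $1\to 2$, I would probe $\Xi$ with a CPWP channel $\cN_{RA\to RB}$ on a reference $R$ large enough to hold auxiliary copies of $C$ and $B$, built from Clifford unitaries acting on stabilizer states so that the Choi operator of the output $(\operatorname{id}_R\otimes\Xi)(\cN)$ coincides, after an appropriate permutation of tensor factors, with $J^{\Xi}_{CBAD}/(d_Cd_B)$. Because stabilizer states and Clifford unitaries are CPWP, the probe $\cN$ is CPWP; the CPWP hypothesis on $\Xi$ then forces the output Choi operator to be a CPWP state, yielding non-negativity of its Wigner function, which is exactly condition~$2$. The principal obstacle lies in verifying that this probe construction really recovers the \emph{full} super-Choi operator---including the $B\not\to A$ non-signaling structure in \eqref{eq:non-sig-superch-biparti}---and not merely a marginal; this is the one genuinely new technical ingredient compared with Theorem~\ref{th: PWP} and is the reason the authors describe the proof as ``very similar'' rather than identical.
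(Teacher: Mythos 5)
The paper in fact omits this proof entirely, remarking only that it ``is very similar'' to that of Theorem~\ref{th: PWP}; your proposal essentially supplies the argument the authors had in mind, and all three implications are structured correctly. The $3\to 1$ reduction via a reference-extended version of \eqref{eq:output-dWf-superch-dWf} is sound: extending $\Xi$ to $\operatorname{id}_{(E\to E)}\otimes\Xi$ and noting that the identity superchannel's Wigner function is a product of Kronecker deltas gives exactly the formula you wrote, and the non-negativity of the resulting $W_{(\operatorname{id}_E\otimes\Xi)(\cN)}$ then yields CPWP of the output by Theorem~\ref{th: PWP}. The $2\Leftrightarrow 3$ step is likewise the same transpose-relabeling argument as in Theorem~\ref{th: PWP}, applied to the bipartite channel $\cP_{CB\to AD}$.

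For $1\to 2$ your high-level idea---probe $\Xi$ with a CPWP channel $\cN_{RA\to RB}$ engineered so that the output's Choi operator reveals $J^{\Xi}_{CBAD}$---is the right one, but the systems you place in $R$ are off. To reconstruct $J^{\Xi}_{CBAD}$ from the Choi of the output channel $\cK_{RC\to RD}$, note that the $C$ and $D$ marginals are already supplied by the Choi construction on the $C\to D$ legs; what is missing are the $A$ and $B$ legs, so $R$ must hold copies of $A$ and $B$, not of $C$ and $B$. Concretely, take $R=A'B'$ with $A'\cong A$, $B'\cong B$, and let $\cN_{A'B'A\to A'B'B}$ trace out $A'$, route the input $A$ to the output $A'$, route the input $B'$ to the output $B$, and prepare the output $B'$ in a fixed stabilizer state; this is assembled from partial traces, Cliffords (relabelings), and stabilizer-state preparation, hence CPWP. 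Tracking the Choi operator of $\cK=(\operatorname{id}_R\otimes\Xi)(\cN)$ through the realization $\cD_{BM\to D}\circ\cN\circ\cE_{C\to AM}$ gives $J^{\cK}=\pi\otimes\pi\otimes J^{\Xi}_{CBAD}/(d_Cd_B)$ up to a permutation of tensor factors, rather than $J^{\Xi}/(d_Cd_B)$ on the nose as you claim; the extra maximally mixed factors are harmless since $\pi\in\cW_+$ and the Wigner function is multiplicative under tensor products, so non-negativity of $W_{J^{\cK}}$ still forces non-negativity of $W_{J^{\Xi}}$. Finally, your worry about recovering the non-signaling structure is unnecessary: $J^{\Xi}_{CBAD}$ \emph{automatically} satisfies the $B\not\to A$ constraint by virtue of being the Choi operator of a superchannel---the probe does not need to enforce it, only to exhibit $J^{\Xi}$ as a tensor factor of a CPWP-accessible Choi operator.
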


An interesting consequence of the third part of the above theorem is that every CPWP superchannel has a non-unique realization in terms of pre- and post-processing CPWP channels. This follows from the fact that every non-signaling classical bipartite channel can be realized in terms of pre- and post-processing classical channels (see the discussion surrounding \cite[Eq.~(7)]{Matthews12}), and these  pre- and post-processing classical channels can be identified as the discrete Wigner functions of pre- and post-processing CPWP channels.

 \subsection{Logarithmic negativity (mana) of a quantum channel}
 
To quantify the magic of quantum channels, we introduce the mana (or logarithmic negativity) of a quantum channel $\cN_{A\to B}$:
\begin{definition}[Mana of a quantum channel]
The mana of a quantum channel $\cN_{A\to B}$ is defined as
\begin{align}
\cM(\cN_{A\to B}) & \coloneqq \log \max_{\bu} \|\cN_{A \to B}(A_A^\bu)\|_{W,1} \label{eq:channel-mana}\\
& = \log \max_{\bu} \sum_{\bv}\frac{1}{d_B} \left|\tr [A_B^\bv \cN_{A \to B}(A_A^\bu)]\right|\\
& = \log \max_{\bu} \sum_{\bv}\frac{1}{d_B} \left|\tr [(A_A^\bu \ox A_B^\bv) J^\cN_{AB}]\right|
\label{eq:choi-mana}\\
& = \log \max_{\bu} \sum_{\bv}  |W_{\cN}(\bv|\bu)| .
\label{eq:dWf-mana-def}
\end{align}
More generally, we define the mana of a Hermiticity-preserving linear map $\cP_{A \to B}$ via the same formula above, but substituting $\cN$ with $\cP$.
\end{definition}

In the following, we are going to show that the mana of a quantum channel has many desirable properties, such as
\begin{enumerate}
\item Reduction to states: $\cM(\cN) = \cM(\sigma)$ when the channel $\cN$ is a replacer channel, acting as $\cN(\rho) = \tr[\rho]\sigma$ for an arbitrary input state $\rho$, with $\sigma$ a state.
\item Additivity under tensor products (Proposition~\ref{prop: M add}): $\cM(\cN_1\ox\cN_2)=\cM(\cN_1)+\cM(\cN_2)$.
\item Subadditivity under serial composition of channels (Proposition~\ref{prop: M subadd}): $\cM(\cN_2\circ\cN_1)\leq \cM(\cN_1)+\cM(\cN_2)$.
\item  Faithfulness (Proposition~\ref{prop:faithfulness-mana}): $\cM(\cN) \geq 0$ and $\cM(\cN)=0$ if and only if $\cN \in \rm{CPWP}$.
\item Amortization inequality (Proposition~\ref{prop: M amo}): $\forall \rho_{RA}$, \ $\cM((\operatorname{id}_R\ox \cN)(\rho_{RA}))-\cM(\rho_{RA})\le \cM(\cN)$.
\item Monotonicity under CPWP superchannels (Proposition~\ref{prop: M monotone}), which implies monotonicity under completely stabilizer-preserving superchannels.
\end{enumerate}  


\begin{proposition}[Reduction to states]
\label{prop:mana-reduction-to-states}
Let $\cN$ be a replacer channel,  acting as $\cN(\rho) = \tr[\rho]\sigma$ for an arbitrary input state $\rho$, with $\sigma$ a state. Then
\begin{equation}
\cM(\cN) = \cM(\sigma).
\end{equation}
\end{proposition}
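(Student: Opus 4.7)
The plan is to unfold the definition of the channel mana in \eqref{eq:channel-mana} and use the fact that, for a replacer channel, the action on any phase-space point operator $A_A^{\bu}$ is essentially trivial, reducing to a single fixed output state $\sigma$ (up to a scalar). Once this reduction is made, the optimization over $\bu$ disappears and the expression becomes exactly the state mana from \eqref{eq:mana-state}.

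Concretely, first I would compute $\cN(A_A^{\bu})$. Using the fourth property of phase-space point operators recalled in Section~\ref{sec:dWF}, namely $\tr[A_A^{\bu}] = 1$, and the definition of the replacer channel, one finds
\begin{equation}
\cN(A_A^{\bu}) = \tr[A_A^{\bu}]\,\sigma = \sigma
\end{equation}
for every $\bu$. Consequently, $\|\cN(A_A^{\bu})\|_{W,1} = \|\sigma\|_{W,1} = \sum_{\bv}|W_{\sigma}(\bv)|$ independently of $\bu$, so the $\max_{\bu}$ in \eqref{eq:channel-mana} is redundant.

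Second, substituting back and comparing with \eqref{eq:mana-state}, I obtain
\begin{equation}
\cM(\cN) = \log \sum_{\bv}|W_{\sigma}(\bv)| = \cM(\sigma),
\end{equation}
which is the desired equality. There is no real obstacle here: the entire argument is a direct unfolding of definitions, and the only nontrivial input is the normalization property $\tr[A_{\bu}] = 1$ of the phase-space point operators, which ensures that the replacer channel simply outputs $\sigma$ on each $A_A^{\bu}$ rather than some rescaled version of it.
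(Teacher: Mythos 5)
Your proof is correct and follows essentially the same route as the paper: unfold the definition of channel mana, use $\tr[A_\bu]=1$ to see that $\cN(A_\bu)=\sigma$ for every $\bu$, note that the maximization is then trivial, and identify the result with the state mana. Nothing to add.
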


\begin{proof}
Applying definitions and the fact that $\tr[A_\bu]=1$ for a phase-space point operator $A_{\bu}$, we find that
\begin{equation}
\cM(\cN) = \log \max_{\bu} \| \cN(A_{\bu})\|_{W,1} = \log \max_{\bu} \| \tr[A_{\bu}] \sigma \|_{W,1} = \log  \|  \sigma \|_{W,1} = \cM(\sigma),
\end{equation}
concluding the proof.
\end{proof}

\begin{proposition}[Additivity]
\label{prop: M add}
For quantum channels $\cN_1$ and $\cN_2$, the following additivity identity holds
\begin{align}
\cM(\cN_1\ox\cN_2)=\cM(\cN_1)+\cM(\cN_2).
\end{align}
More generally, the same additivity identity holds if $\cN_1$ and $\cN_2$ are Hermiticity-preserving linear maps.
\end{proposition}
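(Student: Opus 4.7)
The plan is to directly compute the discrete Wigner function of the tensor-product channel $\cN_1 \otimes \cN_2$ using the factorization property of the phase-space point operators, then show that the quantity being maximized in the definition of mana factors into a product, so that the logarithm turns it into a sum.

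First I would recall from the definition of the Heisenberg--Weyl operators that $T_{\bu_1 \oplus \bu_2} = T_{\bu_1} \otimes T_{\bu_2}$, and hence the phase-space point operators also factor as $A^{\bu_1 \oplus \bu_2} = A^{\bu_1} \otimes A^{\bu_2}$. Plugging this into the formula in \eqref{eq:dWf-mana-def} gives
\begin{align}
W_{\cN_1 \otimes \cN_2}(\bv_1\oplus\bv_2 | \bu_1\oplus\bu_2)
&= \frac{1}{d_{B_1} d_{B_2}} \tr\!\big[(A_{B_1}^{\bv_1} \otimes A_{B_2}^{\bv_2})\, (\cN_1 \otimes \cN_2)(A_{A_1}^{\bu_1} \otimes A_{A_2}^{\bu_2})\big] \\
&= W_{\cN_1}(\bv_1|\bu_1)\, W_{\cN_2}(\bv_2|\bu_2),
\end{align}
where the second line uses the multiplicativity of trace over tensor products.

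Next I would use this factorization inside the mana. Since $|xy| = |x|\,|y|$ for real numbers (note $W_{\cN_i}$ is real because $\cN_i$ is Hermiticity-preserving and $A^{\bu}$ is Hermitian), the double sum factors:
\begin{equation}
\sum_{\bv_1,\bv_2} |W_{\cN_1 \otimes \cN_2}(\bv_1\oplus\bv_2 | \bu_1\oplus\bu_2)|
= \left(\sum_{\bv_1} |W_{\cN_1}(\bv_1|\bu_1)|\right)\!\!\left(\sum_{\bv_2} |W_{\cN_2}(\bv_2|\bu_2)|\right).
\end{equation}
Because the two factors depend on disjoint variables $\bu_1$ and $\bu_2$ and each factor is non-negative, the joint maximum over $\bu_1 \oplus \bu_2$ decouples into a product of individual maxima. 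Taking the logarithm then yields $\cM(\cN_1 \otimes \cN_2) = \cM(\cN_1) + \cM(\cN_2)$.

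There is no real obstacle here; the argument is essentially the observation that the tensor product structure of the phase-space point operators makes the Wigner function multiplicative, and then the $\ell_1$-type norm defining mana is multiplicative on product distributions. The same proof goes through verbatim for Hermiticity-preserving linear maps, since complete positivity and trace preservation are never invoked; only Hermiticity (to guarantee real Wigner values) and linearity (used when expressing the Choi matrix, though here we use the direct trace form) are needed.
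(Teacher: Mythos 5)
Your proof is correct and is essentially the same argument as the paper's: the paper's proof rests on multiplicativity of the Wigner $1$-norm, i.e.\ $\|(\cN_1\ox\cN_2)(A_{\bu_1}\ox A_{\bu_2})\|_{W,1} = \|\cN_1(A_{\bu_1})\|_{W,1}\cdot\|\cN_2(A_{\bu_2})\|_{W,1}$, and then pulls the product apart under the $\max$ and the $\log$. What you have done is unpack that multiplicativity step explicitly, by deriving the factorization of the channel Wigner function $W_{\cN_1 \otimes \cN_2}(\bv_1\oplus\bv_2 \mid \bu_1\oplus\bu_2) = W_{\cN_1}(\bv_1\mid\bu_1)\, W_{\cN_2}(\bv_2\mid\bu_2)$ from the tensor-product structure of the phase-space point operators, and then noting that the absolute sums factor. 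The two routes are the same in substance; yours is a bit more verbose but has the virtue of making visible exactly why the Wigner $1$-norm is multiplicative on products, and of flagging the (correct and necessary) observation that Hermiticity-preservation is what makes the Wigner values real so that $|xy|=|x||y|$ applies.
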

\begin{proof}
The proof relies on basic properties of the Wigner 1-norm and composite phase-space point operators, i.e.,
\begin{align}
\cM(\cN_1\ox \cN_2) 
&= \log \max_{\bu_1,\bu_2} \|(\cN_1\ox\cN_2)(A_{\bu_1}\ox A_{\bu_2})\|_{W,1}\\
& = \log \max_{\bu_1,\bu_2} \left[\|\cN_1(A_{\bu_1})\|_{W,1} \cdot \|\cN_2(A_{\bu_2})\|_{W,1}\right]\\
& = \log \max_{\bu_1} \|\cN_1(A_{\bu_1})\|_{W,1} + \log \max_{\bu_2} \|\cN_2(A_{\bu_2})\|_{W,1}\\
& = \cM(\cN_1)+\cM(\cN_2).
\end{align}
This concludes the proof.
\end{proof}

\begin{proposition}[Subadditivity]
\label{prop: M subadd}
For quantum channels $\cN_1$ and $\cN_2$, the following subadditivity inequality holds
\begin{align}
\cM(\cN_2\circ \cN_1)\leq \cM(\cN_1)+\cM(\cN_2).
\end{align}
More generally, the same subadditivity inequality holds if $\cN_1$ and $\cN_2$ are Hermiticity-preserving linear maps.
\end{proposition}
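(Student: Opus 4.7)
The plan is to exploit the expansion of any Hermitian operator in the phase-space point operator basis, combined with the triangle inequality for the Wigner 1-norm. The key identity is that for a Hermiticity-preserving map $\cN_1$ applied to a phase-space point operator $A_{\bu}$, the output can be written as
\begin{equation}
\cN_1(A_{\bu}) = \sum_{\bv} W_{\cN_1(A_{\bu})}(\bv) A_{\bv} = \sum_{\bv} W_{\cN_1}(\bv|\bu)\, A_{\bv},
\end{equation}
by property (5) in Section~\ref{sec:dWF} and the definition of $W_{\cN_1}(\bv|\bu)$ in \eqref{eq:dWF-channel}. This is the observation I would open with.

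Next, I would apply $\cN_2$ to both sides by linearity and take the Wigner 1-norm, invoking the triangle inequality for $\|\cdot\|_{W,1}$:
\begin{align}
\|\cN_2(\cN_1(A_{\bu}))\|_{W,1} &= \Bigl\| \sum_{\bv} W_{\cN_1}(\bv|\bu)\, \cN_2(A_{\bv}) \Bigr\|_{W,1} \\
&\leq \sum_{\bv} |W_{\cN_1}(\bv|\bu)| \cdot \|\cN_2(A_{\bv})\|_{W,1} \\
&\leq \Bigl(\sum_{\bv} |W_{\cN_1}(\bv|\bu)|\Bigr) \cdot \max_{\bv}\|\cN_2(A_{\bv})\|_{W,1}.
\end{align}
The sum in the last line is exactly $\|\cN_1(A_{\bu})\|_{W,1}$ by the definition of the Wigner 1-norm and \eqref{eq:dWf-mana-def}. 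Hence
\begin{equation}
\|\cN_2(\cN_1(A_{\bu}))\|_{W,1} \leq \|\cN_1(A_{\bu})\|_{W,1} \cdot 2^{\cM(\cN_2)}.
\end{equation}

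Finally, I would take the maximum over $\bu$ on both sides, yielding $\max_{\bu}\|\cN_2(\cN_1(A_{\bu}))\|_{W,1} \leq 2^{\cM(\cN_1)} \cdot 2^{\cM(\cN_2)}$, and then apply $\log$ to conclude $\cM(\cN_2 \circ \cN_1) \leq \cM(\cN_1) + \cM(\cN_2)$. The argument uses only linearity and Hermiticity preservation, so the generalization to Hermiticity-preserving linear maps is immediate. There is no real obstacle here: everything is a direct calculation once one recognizes that the quantity $\sum_{\bv}|W_{\cN_1}(\bv|\bu)|$ is precisely $\|\cN_1(A_{\bu})\|_{W,1}$, which turns the triangle-inequality bound into a clean product structure mirroring the additivity proof in Proposition~\ref{prop: M add}.
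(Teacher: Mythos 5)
Your proof is correct and is essentially the same argument as the paper's: both expand $\cN_1(A_\bu)$ in the phase-space point operator basis, apply the triangle inequality for $\|\cdot\|_{W,1}$, bound the resulting sum by $\max_{\bv}\|\cN_2(A_\bv)\|_{W,1}$ times $\|\cN_1(A_\bu)\|_{W,1}$, and maximize over $\bu$. The only cosmetic difference is that the paper carries the logarithm through each step (introducing a ratio $\|\cN_2\circ\cN_1(A_\bu)\|_{W,1}/\|\cN_1(A_\bu)\|_{W,1}$), whereas you work multiplicatively and take the logarithm at the end, which is marginally cleaner.
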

\begin{proof}
Consider the following for an arbitrary phase-space point operator $A_\bu$:
\begin{align}
\log \| (\cN_2\circ \cN_1)(A_\bu)\|_{W,1} & = \log \frac{\| (\cN_2\circ \cN_1)(A_\bu)\|_{W,1}}{\|\cN_1(A_\bu)\|_{W,1}}
+ \log \|\cN_1(A_\bu)\|_{W,1} \\
& = \log \frac{\left\| \cN_2\left( \sum_{\bu'}W_{\cN_1(A_\bu)}(\bu') A_{\bu'} \right)\right\|_{W,1}}{\sum_{\bu'}|W_{\cN_1(A_\bu)}(\bu')|}
+ \log \|\cN_1(A_\bu)\|_{W,1} \\
& \leq \log \sum_{\bu'} \frac{|W_{\cN_1(A_\bu)}(\bu')|}{\sum_{\bu'}|W_{\cN_1(A_\bu)}(\bu')|} \left\| \cN_2\left(  A_{\bu'} \right)\right\|_{W,1}
+ \log \|\cN_1(A_\bu)\|_{W,1}\\
& \leq \log \max_{\bu'} \left\| \cN_2\left(  A_{\bu'} \right)\right\|_{W,1}
+ \log \max_{\bu}\|\cN_1(A_\bu)\|_{W,1} \\
& = \cM(\cN_1)+\cM(\cN_2).
\end{align}
Since the chain of inequalities holds for an arbitrary phase-space point operator $A_\bu$, we conclude the statement of the proposition.
\end{proof}

\begin{proposition}[Faithfulness]
\label{prop:faithfulness-mana}
Let $\cN_{A \to B}$ be a quantum channel. Then the mana of the channel $\cN$ satisfies $\cM(\cN) \geq 0$, and $\cM(\cN)=0$ if and only if $\cN \in \rm{CPWP}$.
\end{proposition}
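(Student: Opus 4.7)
The plan is to prove both the non-negativity claim and the biconditional by exploiting the normalization of the conditional quasi-probability $W_{\cN}(\bv|\bu)$ established in \eqref{eq:normalized-dWF-channel} and the characterization of CPWP channels in Theorem~\ref{th: PWP}. The whole argument reduces to a simple triangle-inequality observation about signed measures that sum to one.

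First I would establish non-negativity. Starting from the definition in \eqref{eq:dWf-mana-def}, for every fixed $\bu$ the triangle inequality gives
\begin{equation}
\sum_{\bv} |W_{\cN}(\bv|\bu)| \;\geq\; \left| \sum_{\bv} W_{\cN}(\bv|\bu) \right| \;=\; 1,
\end{equation}
where the last equality is the normalization \eqref{eq:normalized-dWF-channel}. Taking the maximum over $\bu$ and then the logarithm yields $\cM(\cN) \geq 0$.

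Next I would handle the easy direction of the biconditional. If $\cN \in \text{CPWP}$, then by Theorem~\ref{th: PWP} the quasi-probabilities are genuine probabilities, $W_{\cN}(\bv|\bu) \geq 0$ for all $\bu, \bv$, so $\sum_{\bv} |W_{\cN}(\bv|\bu)| = \sum_{\bv} W_{\cN}(\bv|\bu) = 1$ for each $\bu$, giving $\cM(\cN) = \log 1 = 0$.

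For the converse (the step I expect to be the main, though still short, obstacle), suppose $\cM(\cN) = 0$. Then $\sum_{\bv} |W_{\cN}(\bv|\bu)| \leq 1$ for every $\bu$. Combined with the lower bound from the triangle-inequality step, equality $\sum_{\bv} |W_{\cN}(\bv|\bu)| = 1 = \sum_{\bv} W_{\cN}(\bv|\bu)$ must hold. The equality case in the triangle inequality for a real summation that sums to $1$ and has total variation $1$ forces each term to have the same sign as the total, i.e., $W_{\cN}(\bv|\bu) \geq 0$ for all $\bv$. Since this holds for every $\bu$, we invoke the equivalence $3 \Rightarrow 1$ in Theorem~\ref{th: PWP} to conclude $\cN \in \text{CPWP}$, completing the proof.
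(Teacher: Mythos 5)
Your proof is correct and follows essentially the same route as the paper's: non-negativity from the normalization $\sum_{\bv} W_{\cN}(\bv|\bu)=1$ combined with the triangle inequality, and the biconditional via Theorem~\ref{th: PWP}. The only cosmetic difference is that you invoke the equality case of the triangle inequality directly, whereas the paper makes the same point via the explicit identity $\sum_{\bv}|W_{\cN}(\bv|\bu)| = 2\sum_{\bv:W_{\cN}(\bv|\bu)<0}|W_{\cN}(\bv|\bu)| + 1$.
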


\begin{proof}
To see the first claim, from the assumption that $\cN$ is a quantum channel and \eqref{eq:normalized-dWF-channel}, we find that
\begin{align}
\sum_{\bv} |\cW_{\cN}(\bv|\bu)| = 2 \left[ \sum_{\bv : \cW_{\cN}(\bv|\bu) < 0}|\cW_{\cN}(\bv|\bu)|\right] + 1 \geq 1 \quad \forall \bu.
\label{eq:key-for-faithful}
\end{align}
Taking a maximization over $\bu$ and applying a logarithm leads to the conclusion that $\cM(\cN) \geq 0$ for all channels $\cN$.

Now suppose that $\cN \in \rm{CPWP}$. Then by Theorem~\ref{th: PWP}, it follows that $\cW_{\cN}(\bv|\bu)$ is a conditional probability distribution, so that $\sum_{\bv} |\cW_{\cN}(\bv|\bu)| = \sum_{\bv} \cW_{\cN}(\bv|\bu) = 1$ for all $\bu$. It then follows from the definition that $\cM(\cN)=0$.

Finally, suppose that $\cM(\cN)=0$. By definition, this implies that $\max_{\bu } \sum_{\bv} |\cW_{\cN}(\bv|\bu)| = 1$. However, consider that the rightmost inequality in \eqref{eq:key-for-faithful} holds for all channels. So our assumption and this inequality imply that 
$\sum_{\bv : \cW_{\cN}(\bv|\bu) < 0}|\cW_{\cN}(\bv|\bu)| = 0$ for all $\bu$, which means that $\cW_{\cN}(\bv|\bu) \geq 0$ for all $\bu, \bv$. By Theorem~\ref{th: PWP}, it follows that $\cN \in \rm{CPWP}$.
\end{proof}

\begin{proposition}[Amortization inequality]\label{prop: M amo}
For any quantum channel $\cN_{A\to B}$, the following inequality holds
\begin{align}
\sup_{\rho_A} [\cM(\cN(\rho_A))-\cM(\rho_A)]\le \cM(\cN).
\label{eq:amortize-no-ref}
\end{align}
Furthermore, we have that
\begin{align}
\sup_{\rho_{RA}}[\cM((\operatorname{id}_R\ox \cN)(\rho_{RA}))-\cM(\rho_{RA})]\le \cM(\cN).
\label{eq:amortize-with-ref}
\end{align}
\end{proposition}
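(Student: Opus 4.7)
The plan is to reduce the amortization inequality to a direct application of Lemma~\ref{lem:prop-dWF} (the Wigner-function propagation rule for channels acting with a reference system), followed by the triangle inequality and a factorization of the resulting double sum. Since \eqref{eq:amortize-no-ref} is just the special case of \eqref{eq:amortize-with-ref} with trivial reference $R$, I would only prove \eqref{eq:amortize-with-ref}.

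First, I would fix an arbitrary bipartite input state $\rho_{RA}$ and invoke Lemma~\ref{lem:prop-dWF} to write the Wigner function of the output state on $RB$ as the (quasi)-propagation of the input Wigner function through the conditional quasi-distribution $W_{\cN}(\bv|\bu)$:
\begin{equation}
W_{(\operatorname{id}_R\ox\cN)(\rho_{RA})}(\by,\bv) \;=\; \sum_{\bu} W_{\cN}(\bv|\bu)\, W_{\rho_{RA}}(\by,\bu).
\end{equation}
Next, I would take absolute values, sum over all composite phase-space points, and apply the triangle inequality:
\begin{align}
\sum_{\by,\bv}\left|W_{(\operatorname{id}_R\ox\cN)(\rho_{RA})}(\by,\bv)\right|
&\le \sum_{\by,\bv,\bu} |W_{\cN}(\bv|\bu)|\, |W_{\rho_{RA}}(\by,\bu)| \\
&= \sum_{\by,\bu} |W_{\rho_{RA}}(\by,\bu)| \sum_{\bv}|W_{\cN}(\bv|\bu)|.
\end{align}
Factoring the maximum over $\bu$ out of the inner sum yields
\begin{equation}
\sum_{\by,\bv}\left|W_{(\operatorname{id}_R\ox\cN)(\rho_{RA})}(\by,\bv)\right|
\;\le\; \left(\max_{\bu}\sum_{\bv}|W_{\cN}(\bv|\bu)|\right)\!\left(\sum_{\by,\bu}|W_{\rho_{RA}}(\by,\bu)|\right).
\end{equation}
Taking logarithms of both sides and comparing with the definitions in \eqref{eq:mana-state} and \eqref{eq:dWf-mana-def}, this is exactly $\cM((\operatorname{id}_R\ox\cN)(\rho_{RA}))\le \cM(\cN)+\cM(\rho_{RA})$, which rearranges to \eqref{eq:amortize-with-ref}. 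Since $\rho_{RA}$ was arbitrary, the supremum bound follows.

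Honestly I do not expect any serious obstacle here: the only subtlety is making sure that composite phase-space point operators factor correctly across the $R$ and $A$ (respectively $B$) systems so that Lemma~\ref{lem:prop-dWF} applies with the reference in place, and that the inner sum $\sum_\bv|W_{\cN}(\bv|\bu)|$ depends only on $\bu$ on system $A$ (so that pulling out its maximum over $\bu$ is legitimate). Both are immediate from the tensor-product structure of the phase-space point operators and the definition of the channel's discrete Wigner function in \eqref{eq:dWF-channel}, so the argument is essentially a one-line consequence of Lemma~\ref{lem:prop-dWF} and the triangle inequality.
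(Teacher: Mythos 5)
Your proof is correct, and it takes a genuinely different route from the paper. You go directly through Lemma~\ref{lem:prop-dWF}: propagate the Wigner function of $\rho_{RA}$ through $\operatorname{id}_R\ox\cN$, apply the triangle inequality, and pull the $\bu$-maximum of $\sum_{\bv}|W_{\cN}(\bv|\bu)|$ out of the sum. The key subtlety you flag — that $W_{\cN}(\bv|\bu)$ in the propagation rule depends only on the $A$-phase-space point $\bu$, not on the reference index $\by$ — is indeed handled correctly by the lemma, since the identity on $R$ passes through untouched. The paper instead derives \eqref{eq:amortize-no-ref} as a corollary of two earlier-established structural facts: reduction to states (Proposition~\ref{prop:mana-reduction-to-states}) applied to a replacer channel $\cN'$ preparing $\rho_A$, combined with subadditivity under serial composition (Proposition~\ref{prop: M subadd}), giving $\cM(\cN(\rho_A)) = \cM(\cN\circ\cN') \le \cM(\cN)+\cM(\rho_A)$; it then lifts to \eqref{eq:amortize-with-ref} by substituting $\cN\to\operatorname{id}_R\ox\cN$ and invoking additivity (Proposition~\ref{prop: M add}) plus $\cM(\operatorname{id}_R)=0$. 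The paper's route makes the logical architecture explicit (amortization is an abstract consequence of reduction to states $+$ subadditivity $+$ additivity, a template that generalizes to other channel measures, as in Proposition~\ref{prop: theta amo}); your route is shorter and self-contained, essentially unfolding the computation hidden inside Proposition~\ref{prop: M subadd}'s proof and performing it once for the bipartite setting, at the cost of not reusing the general machinery.
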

\begin{proof}
The inequality in \eqref{eq:amortize-no-ref} is a direct consequence of reduction to states (Proposition~\ref{prop:mana-reduction-to-states}) and subadditivity of mana with respect to serial compositions (Proposition~\ref{prop: M subadd}). Indeed, letting $\cN'$ be a replacer channel that prepares the state $\rho_A$, we find that
\begin{equation}
\cM(\cN(\rho_A)) = \cM(\cN\circ \cN') \leq \cM(\cN) + \cM(\cN') = \cM(\cN) + \cM(\rho_A). \label{eq:amortize-no-ref-no-opt}
\end{equation}
for all input states $\rho_A$,
from which we conclude \eqref{eq:amortize-no-ref}.

  By applying the inequality in \eqref{eq:amortize-no-ref-no-opt} with the substitution $\cN \to \operatorname{id}\otimes \cN$, the additivity of the mana of a channel from Proposition~\ref{prop: M add}, and the fact that the identity channel is free (and thus has mana equal to zero), we finally conclude that
  \begin{align}
 \cM((\operatorname{id}_R\ox \cN)(\rho_{RA}))-\cM(\rho_{RA}) &\le \cM(\operatorname{id}_R\ox \cN)\\
 &=\cM(\operatorname{id}_R)+\cM(\cN)\\
 &= \cM(\cN),
  \end{align}
  from which we conclude \eqref{eq:amortize-with-ref}.
  \end{proof}

\begin{theorem}[Monotonicity]\label{prop: M monotone}
Let $\mathcal{N}_{A\rightarrow B}$ be a quantum channel, and let
$\Xi^{\operatorname{CPWP}}$ be a CPWP\ superchannel as given in Definition~\ref{def:CPWP-superch}. Then
$\mathcal{M}(\mathcal{N})$ is a channel magic measure in the sense that%
\begin{equation}
\mathcal{M}(\mathcal{N})\geq\mathcal{M}(\Xi^{\operatorname{CPWP}}(\mathcal{N})).
\label{eq:mono-mana-CPWP}
\end{equation}

\end{theorem}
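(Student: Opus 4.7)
The plan is to work directly with the discrete Wigner function representation of the output channel $\Xi^{\operatorname{CPWP}}(\mathcal{N})$ and exploit the non-negativity characterization of CPWP superchannels (Theorem~\ref{thm:CPWP-superchs-equiv}). Starting from the propagation rule \eqref{eq:output-dWf-superch-dWf}, I would take absolute values and apply the triangle inequality to get
\begin{equation}
\sum_{\boldsymbol{v}_D} |W_{\Xi(\mathcal{N})}(\boldsymbol{v}_D|\boldsymbol{u}_C)| \;\le\; \sum_{\boldsymbol{u}_A,\boldsymbol{v}_B,\boldsymbol{v}_D} |W_{\Xi}(\boldsymbol{u}_A,\boldsymbol{v}_D|\boldsymbol{u}_C,\boldsymbol{v}_B)|\,|W_{\mathcal{N}}(\boldsymbol{v}_B|\boldsymbol{u}_A)|.
\end{equation}

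The next step is to drop the absolute value on $W_{\Xi}$ using non-negativity (since $\Xi^{\operatorname{CPWP}}$ is CPWP, part~3 of Theorem~\ref{thm:CPWP-superchs-equiv} guarantees $W_{\Xi}\ge 0$), then swap the order of summation to push the $\boldsymbol{v}_D$-sum inside. By the non-signaling condition in \eqref{eq:non-sig-dWf-superch}, the marginal $\sum_{\boldsymbol{v}_D} W_{\Xi}(\boldsymbol{u}_A,\boldsymbol{v}_D|\boldsymbol{u}_C,\boldsymbol{v}_B) = W_{\Xi}(\boldsymbol{u}_A|\boldsymbol{u}_C)$ is independent of $\boldsymbol{v}_B$. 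This collapses the expression to
\begin{equation}
\sum_{\boldsymbol{u}_A} W_{\Xi}(\boldsymbol{u}_A|\boldsymbol{u}_C)\, \sum_{\boldsymbol{v}_B}|W_{\mathcal{N}}(\boldsymbol{v}_B|\boldsymbol{u}_A)|.
\end{equation}

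To finish, I would bound the inner $\boldsymbol{v}_B$-sum by its maximum over $\boldsymbol{u}_A$, which by definition \eqref{eq:dWf-mana-def} equals $2^{\mathcal{M}(\mathcal{N})}$, and then use that $\sum_{\boldsymbol{u}_A} W_{\Xi}(\boldsymbol{u}_A|\boldsymbol{u}_C) = 1$, which follows from the normalization $\sum_{\boldsymbol{u}_A,\boldsymbol{v}_D}W_{\Xi}(\boldsymbol{u}_A,\boldsymbol{v}_D|\boldsymbol{u}_C,\boldsymbol{v}_B)=1$ of the (now bona fide) conditional probability distribution. Taking a maximum over $\boldsymbol{u}_C$ on the left and applying $\log$ yields \eqref{eq:mono-mana-CPWP}.

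The only delicate point is invoking the non-signaling constraint at precisely the right moment: without it, marginalizing over $\boldsymbol{v}_D$ would leave a residual dependence on $\boldsymbol{v}_B$, and the inner sum $\sum_{\boldsymbol{v}_B}|W_{\mathcal{N}}(\boldsymbol{v}_B|\boldsymbol{u}_A)|$ could not be cleanly decoupled. Everything else is a bookkeeping exercise in sum manipulation, and I expect the argument to be short once the non-negativity (from Theorem~\ref{thm:CPWP-superchs-equiv}) and non-signaling (from \eqref{eq:non-sig-dWf-superch}) are combined. The second claim, monotonicity under completely stabilizer-preserving superchannels, then follows immediately once one observes that every CSPO superchannel is in particular CPWP, by the same embedding illustrated in Figure~\ref{fig: CQ EB 1}.
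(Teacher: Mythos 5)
Your proposal is correct and mirrors the paper's own proof step for step: propagate via \eqref{eq:output-dWf-superch-dWf}, apply the triangle inequality, use non-negativity of $W_{\Xi}$ from Theorem~\ref{thm:CPWP-superchs-equiv}, marginalize over $\boldsymbol{v}_D$, invoke non-signaling \eqref{eq:non-sig-dWf-superch}, bound the inner sum by its maximum, and use $\sum_{\boldsymbol{u}_A}W_{\Xi}(\boldsymbol{u}_A|\boldsymbol{u}_C)=1$. Your emphasis on where the non-signaling constraint is indispensable matches the key structural point of the argument.
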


\begin{proof}
Recalling the definition of the channel mana in terms of the discrete Wigner
function (see~\eqref{eq:dWf-mana-def}) and abbreviating $\Xi^{\text{CPWP}}$ as $\Xi$, consider that%
\begin{align}
\mathcal{M}(\Xi^{\text{CPWP}}(\mathcal{N}))  &  =\log\max_{\boldsymbol{u}%
_{C}}\sum_{\boldsymbol{v}_{D}}\left\vert W_{\Xi(\mathcal{N})}%
(\boldsymbol{v}_{D}|\boldsymbol{u}_{C})\right\vert \\
&  =\log\max_{\boldsymbol{u}_{C}}\sum_{\boldsymbol{v}_{D}}\left\vert
\sum_{\boldsymbol{u}_{A},\boldsymbol{v}_{B}}W_{\Xi}(\boldsymbol{u}%
_{A},\boldsymbol{v}_{D}|\boldsymbol{u}_{C},\boldsymbol{v}_{B})W_{\mathcal{N}%
}(\boldsymbol{v}_{B}|\boldsymbol{u}_{A})\right\vert \\
&  \leq\log\max_{\boldsymbol{u}_{C}}\sum_{\boldsymbol{v}_{D},\boldsymbol{u}%
_{A},\boldsymbol{v}_{B}}\left\vert W_{\Xi}(\boldsymbol{u}_{A}%
,\boldsymbol{v}_{D}|\boldsymbol{u}_{C},\boldsymbol{v}_{B})W_{\mathcal{N}%
}(\boldsymbol{v}_{B}|\boldsymbol{u}_{A})\right\vert \\
&  =\log\max_{\boldsymbol{u}_{C}}\sum_{\boldsymbol{v}_{D},\boldsymbol{u}%
_{A},\boldsymbol{v}_{B}}W_{\Xi}(\boldsymbol{u}_{A},\boldsymbol{v}%
_{D}|\boldsymbol{u}_{C},\boldsymbol{v}_{B})\left\vert W_{\mathcal{N}%
}(\boldsymbol{v}_{B}|\boldsymbol{u}_{A})\right\vert \\
&  =\log\max_{\boldsymbol{u}_{C}}\sum_{\boldsymbol{u}_{A},\boldsymbol{v}_{B}%
}W_{\Xi}(\boldsymbol{u}_{A}|\boldsymbol{u}_{C},\boldsymbol{v}%
_{B})\left\vert W_{\mathcal{N}}(\boldsymbol{v}_{B}|\boldsymbol{u}%
_{A})\right\vert \\
&  =\log\max_{\boldsymbol{u}_{C}}\sum_{\boldsymbol{u}_{A},\boldsymbol{v}_{B}%
}W_{\Xi}(\boldsymbol{u}_{A}|\boldsymbol{u}_{C})\left\vert W_{\mathcal{N}%
}(\boldsymbol{v}_{B}|\boldsymbol{u}_{A})\right\vert .
\label{eq:ch-mana-monotone-CPWP-superch}%
\end{align}
The second equality follows from \eqref{eq:output-dWf-superch-dWf}. The first
inequality follows from the triangle inequality. The third equality follows
from the assumption that the superchannel $\Xi$ is CPWP, so that its
discrete Wigner function is non-negative (see Theorem~\ref{thm:CPWP-superchs-equiv}). The fourth equality follows from
marginalizing $W_{\Xi}$ over $\boldsymbol{v}_{D}$. The fifth equality
follows from the non-signaling constraint in \eqref{eq:non-sig-dWf-superch}.
Continuing, we find that%
\begin{align}
\text{Eq.~\eqref{eq:ch-mana-monotone-CPWP-superch}}  &  =\log\max
_{\boldsymbol{u}_{C}}\sum_{\boldsymbol{u}_{A}}W_{\Xi}(\boldsymbol{u}%
_{A}|\boldsymbol{u}_{C})\sum_{\boldsymbol{v}_{B}}\left\vert W_{\mathcal{N}%
}(\boldsymbol{v}_{B}|\boldsymbol{u}_{A})\right\vert \\
&  \leq\log\max_{\boldsymbol{u}_{C}}\sum_{\boldsymbol{u}_{A}}W_{\Xi
}(\boldsymbol{u}_{A}|\boldsymbol{u}_{C})\left[  \max_{\boldsymbol{u}_{A}}%
\sum_{\boldsymbol{v}_{B}}\left\vert W_{\mathcal{N}}(\boldsymbol{v}%
_{B}|\boldsymbol{u}_{A})\right\vert \right] \\
&  =\log\max_{\boldsymbol{u}_{A}}\sum_{\boldsymbol{v}_{B}}\left\vert
W_{\mathcal{N}}(\boldsymbol{v}_{B}|\boldsymbol{u}_{A})\right\vert \\
&  =\mathcal{M}(\mathcal{N}).
\end{align}
The first equality follows from rearranging sums. The  inequality follows
from bounding $\sum_{\boldsymbol{v}_{B}}\left\vert W_{\mathcal{N}%
}(\boldsymbol{v}_{B}|\boldsymbol{u}_{A})\right\vert $ in terms of its maximum
value (so that it is no longer dependent on $\boldsymbol{u}_{A}$). The
penultimate equality follows because $\sum_{\boldsymbol{u}_{A}}W_{\Xi
}(\boldsymbol{u}_{A}|\boldsymbol{u}_{C})=1$, and the final one follows by definition.
\end{proof}

\begin{remark}
\label{rem:mono-extend-to-CP}
We note here that the monotonicity inequality in \eqref{eq:mono-mana-CPWP} holds more generally if $\mathcal{N}$ is a completely positive map that is not necessarily trace preserving.
\end{remark}

\subsection{Generalized thauma of a quantum channel}

\label{sec:gen-thauma}

In this section, we define a rather general measure of magic for a quantum channel, called the generalized thauma, which extends to channels the definition from \cite{Wang2018} for states. To define it, recall that a generalized divergence  $\mathbf{D}(\rho\|\sigma)$ is any function of a quantum state $\rho$ and a positive semi-definite operator $\sigma$ that obeys data processing \cite{Polyanskiy2010b,SW12}, i.e., $\mathbf{D}(\rho\|\sigma) \geq \mathbf{D}(\cN(\rho)\|\cN(\sigma))$ where $\cN$ is a quantum channel. 
Examples of generalized divergences, in addition to the trace distance and relative entropy, include the Petz--R\'enyi relative entropies \cite{P86}, the sandwiched R\'enyi relative entropies \cite{Muller-Lennert2013,Wilde2014a}, the Hilbert $\alpha$-divergences \cite{BG17}, and the $\chi^2$ divergences \cite{TKRWV10}.
One can then define the generalized channel divergence \cite{LKDW18}, as a way of quantifying the distinguishability of two quantum channels $\cN_{A \to B}$ and $\cP_{A \to B}$, as follows:
\begin{equation}
\mathbf{D}(\cN \| \cP) \coloneqq \sup_{\psi_{RA}} \mathbf{D}(\cN_{A \to B}(\psi_{RA})\|\cP_{A \to B}(\psi_{RA})),
\label{eq:gen-div}
\end{equation}
where the optimization is with respect to all pure states $\psi_{RA}$ such that system $R$ is isomorphic to the channel input system $A$ (note that one does not achieve a higher value of $\mathbf{D}(\cN \| \cP)$ by allowing for an optimization over mixed states $\rho_{RA}$ with an arbitrarily large reference system \cite{LKDW18}, as a consequence of purification, the Schmidt decomposition theorem, and data processing). More generally, $\cP_{A \to B}$ can be a completely positive map in the definition in \eqref{eq:gen-div}. Interestingly, the generalized channel divergence is monotone under the action of a superchannel $\Xi$:
\begin{equation}
\mathbf{D}(\cN \| \cP) \geq \mathbf{D}(\Xi(\cN) \| \Xi(\cP)),
\end{equation}
as shown in \cite[Section~V-A]{Gour2018}.

We then define generalized thauma as follows:
\begin{definition}[Generalized thauma of a quantum channel]
The generalized thauma of a quantum channel $\cN_{A \to B}$ is defined as
\begin{equation}
\boldsymbol{\theta}(\cN) \coloneqq \inf_{\cE : \cM(\cE)\le 0} \mathbf{D}(\cN \| \cE),
\label{eq:gen-thauma-channels}
\end{equation}
where the optimization is with respect to all completely positive maps $\cE$ having mana $\cM(\cE)\le 0$.
\end{definition}

It is clear that the above definition extends the generalized thauma of a state \cite{Wang2018}, which we recall is given by
\begin{equation}
\boldsymbol{\theta}(\rho) \coloneqq \inf_{\sigma\geq 0 : \cM(\sigma)\le 0} \mathbf{D}(\rho \| \sigma).
\label{eq:GT-state}
\end{equation}
We now prove that the generalized thauma of a quantum channel reduces to the state measure whenever the channel $\cN$ is a replacer channel:

\begin{proposition}[Reduction to states]
\label{prop:gen-thauma-reduction-to-states}
Let $\cN$ be a replacer channel,  acting as $\cN(\rho) = \tr[\rho]\sigma$ for an arbitrary input state $\rho$, where $\sigma$ is a state. Then
\begin{equation}
\boldsymbol{\theta}(\cN) = \boldsymbol{\theta}(\sigma).
\end{equation}
\end{proposition}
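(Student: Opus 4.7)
The plan is to establish the two inequalities $\boldsymbol{\theta}(\cN) \le \boldsymbol{\theta}(\sigma)$ and $\boldsymbol{\theta}(\cN) \ge \boldsymbol{\theta}(\sigma)$ separately, in each case transporting a competitor on one side of \eqref{eq:gen-thauma-channels} or \eqref{eq:GT-state} to the other.

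For the upper bound $\boldsymbol{\theta}(\cN) \le \boldsymbol{\theta}(\sigma)$, I would lift any state-level competitor to a channel-level one of replacer type. Given a positive semi-definite operator $\widetilde{\sigma}$ with $\cM(\widetilde{\sigma}) \le 0$, define the completely positive (but generally not trace-preserving) replacer map $\cE(\rho) \coloneqq \tr[\rho]\widetilde{\sigma}$. The argument of Proposition~\ref{prop:mana-reduction-to-states} relies only on $\tr[A_{\bu}] = 1$ and therefore goes through verbatim, yielding $\cM(\cE) = \cM(\widetilde{\sigma}) \le 0$, so $\cE$ is admissible in \eqref{eq:gen-thauma-channels}. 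For any pure input $\psi_{RA}$ the outputs factor as $\cN(\psi_{RA}) = \psi_R \ox \sigma$ and $\cE(\psi_{RA}) = \psi_R \ox \widetilde{\sigma}$; two applications of data processing (tensoring with $\psi_R$ on one side, tracing out $R$ on the other) then force $\mathbf{D}(\cN(\psi_{RA}) \| \cE(\psi_{RA})) = \mathbf{D}(\sigma \| \widetilde{\sigma})$, independently of $\psi_{RA}$. Taking the supremum over $\psi_{RA}$ and then the infimum over $\widetilde{\sigma}$ delivers $\boldsymbol{\theta}(\cN) \le \boldsymbol{\theta}(\sigma)$.

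For the lower bound $\boldsymbol{\theta}(\cN) \ge \boldsymbol{\theta}(\sigma)$, I would probe any channel-level competitor by a suitable pure stabilizer input. Let $\cE$ be a completely positive map with $\cM(\cE) \le 0$, and fix a pure stabilizer state $\psi_A = \ketbra{0}{0}_A$, which has a non-negative discrete Wigner function and therefore $\cM(\psi_A) = 0$. Set $\widetilde{\sigma} \coloneqq \cE(\psi_A)$. Expanding $\psi_A = \sum_{\bu} W_{\psi_A}(\bu)\, A_{\bu}$ and using the triangle inequality in the Wigner $1$-norm produces the elementary bound $\cM(\cE(\psi_A)) \le \cM(\cE) + \cM(\psi_A) \le 0$, which is essentially the single-system amortization inequality underlying Proposition~\ref{prop: M amo}; hence $\widetilde{\sigma}$ is admissible in \eqref{eq:GT-state}. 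Since $\cN(\psi_A) = \sigma$, restricting the supremum in \eqref{eq:gen-div} to the input $\psi_A$ (with a trivial reference, or more formally a fixed reference state in product with $\psi_A$) gives $\mathbf{D}(\cN \| \cE) \ge \mathbf{D}(\sigma \| \widetilde{\sigma}) \ge \boldsymbol{\theta}(\sigma)$, and the infimum over $\cE$ closes the inequality.

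The main obstacle I anticipate is the bookkeeping around the fact that the competitors $\cE$ in \eqref{eq:gen-thauma-channels} need not be trace preserving, so the operator $\widetilde{\sigma}$ in both directions is in general only a sub-state rather than a state. This means that the reduction-to-states identity for the mana and the ``tensoring with a fixed state is an isometry for $\mathbf{D}$'' step must be invoked in their completely-positive and positive-semi-definite versions, but these generalized forms are precisely the ones already built into the results of Sections~\ref{sec:magic channel measure} and~\ref{sec:gen-thauma}. Everything else is a direct combination of Proposition~\ref{prop:mana-reduction-to-states}, the sub-additivity bound $\cM(\cE(\rho)) \le \cM(\cE) + \cM(\rho)$ used in the proof of Proposition~\ref{prop: M amo}, and the data-processing property of the generalized divergence.
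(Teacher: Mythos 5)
Your proof is correct and follows essentially the same two-inequality structure as the paper's: for the upper bound you promote a sub-state competitor $\widetilde\sigma\in\cW$ to the replacer map $\cE(\cdot)=\tr[\cdot]\,\widetilde\sigma$, and for the lower bound you probe a channel-level competitor $\cE$ with a fixed free input and invoke the amortization-type bound $\cM(\cE(\psi))\le\cM(\cE)+\cM(\psi)$ to certify that $\cE(\psi)\in\cW$. The only cosmetic difference is your choice of probe state $\ketbra{0}{0}_A$ where the paper uses the maximally mixed state $\pi_A$; both have non-negative Wigner function, so either works.
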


\begin{proof}
First, denoting the maximally mixed state by $\pi$, consider that
\begin{align}
\boldsymbol{\theta}(\cN) & = \inf_{\cE : \cM(\cE) \leq 0} \sup_{\psi_{RA}}\mathbf{D}((\operatorname{id}_R\otimes \cN)(\psi_{RA}) \| (\operatorname{id}_R\otimes \cE)(\psi_{RA})) \\
& \geq  \inf_{\cE : \cM(\cE) \leq 0} \mathbf{D}(\pi_{R}\ox \cN(\pi_A) \| \pi_{R}\ox \cE(\pi_{A})) \\
& = \inf_{\cE : \cM(\cE) \leq 0} \mathbf{D}( \sigma_B \|  \cE(\pi)) \\
& = \inf_{\omega : \cM(\omega) \leq 0} \mathbf{D}( \sigma_B \|  \omega) \\
& = \boldsymbol{\theta}(\sigma).
\end{align}
The first equality follows from the definition. The inequality follows by choosing the input state suboptimally to be $\pi_R \ox \pi_A$. The second equality follows because the generalized divergence is invariant with respect to tensoring in the same state for both arguments. The third equality follows because $\pi$ is a free state with non-negative Wigner function and $\cE$ is a completely positive map with $\cM(\cE) \leq 0$. Since one can reach all and only the operators $\omega \in \cW$, the equality follows. Then the last equality follows from the definition.

To see the other inequality, consider that $\cE(\rho) = \tr[\rho] \omega$, for $\omega \in \cW$, is a particular completely positive map satisfying $\cM(\cE) = \cM(\omega) \leq 0$, so that
\begin{align}
\boldsymbol{\theta}(\cN) & = \inf_{\cE : \cM(\cE) \leq 0} \sup_{\psi_{RA}}\mathbf{D}((\operatorname{id}_R\otimes \cN)(\psi_{RA}) \| (\operatorname{id}_R\otimes \cE)(\psi_{RA})) \\
& \leq \inf_{\omega : \cM(\omega) \leq 0} \mathbf{D}(\psi_R \otimes \sigma_B \| \psi_R\otimes \omega_B) \\
& = \inf_{\omega : \cM(\omega) \leq 0} \mathbf{D}( \sigma_B \|  \omega_B) \\
& = \boldsymbol{\theta}(\sigma).
\end{align}
This concludes the proof.
\end{proof}

\bigskip
That the generalized thauma of channels proposed in \eqref{eq:gen-thauma-channels} is a good measure of magic for quantum channels is a consequence of the following proposition:
 \begin{theorem}[Monotonicity]
 \label{thm:gen-thauma-monotone-PWP-superch}
 Let $\cN_{A\to B}$ be a quantum channel, and let $\Xi^{\operatorname{CPWP}}$ be a CPWP superchannel as given in Definition~\ref{def:CPWP-superch}.
Then $\boldsymbol{\theta}(\cN)$ is a channel magic measure in the sense that
\begin{align}
\boldsymbol{\theta}(\cN_{A \to B}) \geq \boldsymbol{\theta}(\Xi^{\operatorname{CPWP}}(\cN_{A\to B}))\ .
\end{align}
\end{theorem}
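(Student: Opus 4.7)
The plan is to mimic the standard resource-theoretic monotonicity argument: show that the feasible set of ``free'' completely positive maps $\{\cE : \cM(\cE)\leq 0\}$ is closed under the action of any CPWP superchannel, and then use monotonicity of the generalized channel divergence to transport the bound through $\Xi^{\operatorname{CPWP}}$.

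More concretely, I would proceed as follows. Fix an arbitrary completely positive map $\cE_{A\to B}$ with $\cM(\cE)\le 0$, i.e., an arbitrary feasible candidate in the infimum defining $\boldsymbol{\theta}(\cN)$. Using the pre-/post-processing realization $\Xi(\cdot)=\cD_{BM\to D}\circ(\cdot)\circ\cE'_{C\to AM}$ of a superchannel, one sees that $\Xi^{\operatorname{CPWP}}(\cE)$ is again a completely positive map from $C$ to $D$. Next, invoke Remark~\ref{rem:mono-extend-to-CP}, which extends Theorem~\ref{prop: M monotone} to completely positive (not necessarily trace-preserving) inputs, to conclude
\begin{equation}
\cM\bigl(\Xi^{\operatorname{CPWP}}(\cE)\bigr)\le \cM(\cE)\le 0.
\end{equation}
Hence $\Xi^{\operatorname{CPWP}}(\cE)$ is itself a feasible candidate in the infimum defining $\boldsymbol{\theta}(\Xi^{\operatorname{CPWP}}(\cN))$.

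Now apply monotonicity of the generalized channel divergence under superchannels (the property cited from \cite[Section~V-A]{Gour2018}):
\begin{equation}
\mathbf{D}(\cN\,\|\,\cE)\;\ge\;\mathbf{D}\bigl(\Xi^{\operatorname{CPWP}}(\cN)\,\big\|\,\Xi^{\operatorname{CPWP}}(\cE)\bigr).
\end{equation}
The argument in \cite{Gour2018} is given for channels, but it extends verbatim to a CP second argument because it ultimately reduces to data processing of $\mathbf{D}$ under the Stinespring dilations of the pre- and post-processing maps, and data processing is part of the definition of a generalized divergence applied to completely positive maps. Combining these two facts gives, for every feasible $\cE$,
\begin{equation}
\mathbf{D}(\cN\,\|\,\cE)\;\ge\;\mathbf{D}\bigl(\Xi^{\operatorname{CPWP}}(\cN)\,\big\|\,\Xi^{\operatorname{CPWP}}(\cE)\bigr)\;\ge\;\inf_{\cF:\cM(\cF)\le 0}\mathbf{D}\bigl(\Xi^{\operatorname{CPWP}}(\cN)\,\big\|\,\cF\bigr)\;=\;\boldsymbol{\theta}\bigl(\Xi^{\operatorname{CPWP}}(\cN)\bigr).
\end{equation}
Taking the infimum of the left-hand side over all feasible $\cE$ yields $\boldsymbol{\theta}(\cN)\ge\boldsymbol{\theta}(\Xi^{\operatorname{CPWP}}(\cN))$, as required.

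The only genuinely non-trivial step is closure of the free set under $\Xi^{\operatorname{CPWP}}$, i.e., the claim $\cM(\Xi^{\operatorname{CPWP}}(\cE))\le\cM(\cE)$ for CP maps $\cE$ that are not necessarily trace preserving. All other ingredients, namely the existence of the pre-/post-processing realization (so that CP-ness is preserved) and the superchannel data-processing inequality for $\mathbf{D}$, are standard and are already stated in the paper. The mana monotonicity for CP inputs is precisely the content of Remark~\ref{rem:mono-extend-to-CP}; it holds because the proof of Theorem~\ref{prop: M monotone} via the discrete Wigner function uses only the triangle inequality, non-negativity of $W_{\Xi}$, and the non-signaling marginalization identity, none of which require $\cE$ to be trace preserving. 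Once that point is acknowledged, the remainder of the proof is a one-line chain of inequalities.
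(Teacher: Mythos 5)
Your argument is correct and follows essentially the same route as the paper's proof: apply monotonicity of the generalized channel divergence under superchannels to the pair $(\cN, \cE)$, and use the mana monotonicity for CP maps (Remark~\ref{rem:mono-extend-to-CP}) to see that $\Xi^{\operatorname{CPWP}}(\cE)$ remains in the feasible set, so that the infimum only decreases. Your write-up is a bit more explicit about why $\Xi^{\operatorname{CPWP}}(\cE)$ is CP and why the channel-divergence monotonicity applies when the second argument is merely CP, but the underlying chain of inequalities is identical to the one in the paper.
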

\begin{proof}
The idea is to utilize the generalized divergence and its basic property of data processing. In more detail, consider that
\begin{align}
\boldsymbol{\theta}(\cN_{A\to B})&= \inf_{\cE : \cM(\cE)\le 0} \mathbf{D}(\cN_{A\to B}\|\cE)\\
&\ge \inf_{\cE : \cM(\cE)\le 0} \mathbf{D}(\Xi^{\operatorname{CPWP}}(\cN_{A\to B})\|\Xi^{\operatorname{CPWP}}(\cE))\\
&\ge \inf_{\widehat\cE : \cM(\widehat\cE)\le 0} \mathbf{D}(\Xi^{\operatorname{CPWP}}(\cN_{A\to B})\|\widehat\cE)\\
&= \boldsymbol{\theta}(\Xi^{\operatorname{CPWP}}(\cN_{A\to B})).
\end{align}
The first inequality follows from the fact that the generalized divergence of channels is monotone under the action of a superchannel \cite[Section~V-A]{Gour2018}. The second inequality follows from the monotonicity of $\cM(\cN)$ given in Theorem~\ref{prop: M monotone} (and which extends more generally to completely positive maps as stated in Remark~\ref{rem:mono-extend-to-CP}). This monotonicity implies that $\cM(\cE) \geq \cM(\Xi^{\operatorname{CPWP}}(\cE))$ and leads to the second inequality.
\end{proof}

\bigskip

A generalized divergence is called \textit{strongly faithful} \cite{BHKW18} if for a state $\rho_A$ and a subnormalized state $\sigma_A$, we have
$\mathbf{D}(\rho_A\|\sigma_A)\geq 0$ in general and
$\mathbf{D}(\rho_A\|\sigma_A)=0$ if and only if $\rho_A=\sigma_A$. 

\begin{proposition}[Faithfulness]
\label{eq:faithful-gen-thauma}
Let $\mathbf{D}$ be a strongly faithful generalized divergence. Then the generalized thauma $\boldsymbol{\theta}(\cN)$ of a channel $\cN$ defined through $\mathbf{D}$ is non-negative and it is equal to zero if $\cN \in \rm{CPWP}$. If the generalized divergence is furthermore continuous and $\boldsymbol{\theta}(\cN) = 0$, then $\cN \in \rm{CPWP}$.
\end{proposition}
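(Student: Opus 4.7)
The plan is to prove the three clauses in turn. For non-negativity of $\boldsymbol{\theta}(\cN)$, I would fix any completely positive map $\cE$ with $\cM(\cE)\le 0$ and evaluate the generalized channel divergence on an input whose discrete Wigner function is non-negative; a convenient choice is the normalized maximally entangled state $\Phi_{RA}\in\cW_+$ already used in the proof of Theorem~\ref{th: PWP}. Complete positivity gives $(\operatorname{id}_R\ox\cE)(\Phi_{RA})\ge 0$, while propagating the Wigner function through $\cE$ by Lemma~\ref{lem:prop-dWF} together with the estimate $|\sum_{\bv}W_\cE(\bv|\bu)|\le\sum_{\bv}|W_\cE(\bv|\bu)|\le 2^{\cM(\cE)}\le 1$ (valid uniformly in $\bu$) yields $\tr[(\operatorname{id}_R\ox\cE)(\Phi_{RA})]\le 1$. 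The output is therefore a subnormalized state, and strong faithfulness of $\mathbf{D}$ gives $\mathbf{D}(\cN(\Phi_{RA})\|\cE(\Phi_{RA}))\ge 0$; passing to the supremum over input states and then the infimum over $\cE$ yields $\boldsymbol{\theta}(\cN)\ge 0$.

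For the second claim, if $\cN\in\operatorname{CPWP}$ then Proposition~\ref{prop:faithfulness-mana} gives $\cM(\cN)=0$, so $\cN$ itself is feasible in the infimum in \eqref{eq:gen-thauma-channels}. Strong faithfulness forces $\mathbf{D}(\cN(\psi)\|\cN(\psi))=0$ for every input $\psi$, whence $\boldsymbol{\theta}(\cN)\le\mathbf{D}(\cN\|\cN)=0$; combined with the non-negativity already established, $\boldsymbol{\theta}(\cN)=0$.

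For the continuous converse, suppose $\boldsymbol{\theta}(\cN)=0$ and pick a minimizing sequence $\{\cE_n\}$ of CP maps with $\cM(\cE_n)\le 0$ and $\mathbf{D}(\cN\|\cE_n)\to 0$. The constraint $\|\cE_n(A_\bu)\|_{W,1}\le 1$ for every $\bu$ bounds each $\cE_n(A_\bu)$ in the Wigner $1$-norm, and since $\{A_\bu\}_\bu$ is a basis for the finite-dimensional space of Hermitian operators on $\cH_A$ this forces $\{\cE_n\}$ into a compact subset of the CP cone; extract a convergent subsequence $\cE_n\to\cE^{\ast}$. Continuity of $\cM$ gives $\cM(\cE^{\ast})\le 0$, and continuity of $\mathbf{D}$ in its second argument gives, for each input $\psi$, the inequality $\mathbf{D}(\cN(\psi)\|\cE^{\ast}(\psi))=\lim_n\mathbf{D}(\cN(\psi)\|\cE_n(\psi))\le\lim_n\mathbf{D}(\cN\|\cE_n)=0$. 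Specializing to $\psi=\Phi_{RA}\in\cW_+$, the argument from the first paragraph shows that $\cE^{\ast}(\Phi_{RA})$ is subnormalized, so strong faithfulness combined with the preceding inequality pinches $\mathbf{D}(\cN(\Phi_{RA})\|\cE^{\ast}(\Phi_{RA}))=0$ and hence $\cN(\Phi_{RA})=\cE^{\ast}(\Phi_{RA})$. This is an equality of normalized Choi matrices, so $\cN=\cE^{\ast}$ as linear maps, giving $\cM(\cN)=\cM(\cE^{\ast})\le 0$; Proposition~\ref{prop:faithfulness-mana} then pinches this to $\cM(\cN)=0$, i.e., $\cN\in\operatorname{CPWP}$.

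The delicate step is (3): turning the infimum into an attained minimum requires compactness of the $\cM\le 0$ slice of the CP cone plus enough continuity of $\cE\mapsto\mathbf{D}(\cN\|\cE)$ to pass to the limit, and the concluding identification $\cN=\cE^{\ast}$ rests crucially on testing against an input that lies in $\cW_+$---without that hypothesis, $\cE^{\ast}(\psi)$ need not be subnormalized and strong faithfulness of $\mathbf{D}$ simply does not apply.
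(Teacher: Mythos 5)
Your proof is correct and follows essentially the same route as the paper's. Both rely on the same two ingredients: that any CP map $\cE$ with $\cM(\cE)\le 0$ sends $\cW_+$ states to subnormalized operators (the paper isolates this as Lemma~\ref{lem:mana-CP-maps}, whereas you inline its derivation via Lemma~\ref{lem:prop-dWF} and the bound $\sum_\bv|W_\cE(\bv|\bu)|\le 2^{\cM(\cE)}$), and the faithfulness of the mana (Proposition~\ref{prop:faithfulness-mana}) to close the converse. Two minor points of style: for non-negativity the paper lower-bounds the channel divergence by restricting the supremum to all inputs in $\cW_+$, whereas you test only against $\Phi_{RA}$ — a single admissible witness is already enough, so this is an equally valid specialization. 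For the third clause, the paper is terse, simply asserting that continuity yields an attained minimizer $\cE$ with $\mathbf{D}(\cN\|\cE)=0$; you make the underlying compactness argument explicit (the $\{\cM\le 0\}$ slice of the CP cone is closed and bounded in finite dimensions because $\{A_\bu\}$ spans and $\|\cE(A_\bu)\|_{W,1}\le 1$), which is exactly what justifies that assertion. No gaps.
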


\begin{proof}
From Lemma~\ref{lem:mana-CP-maps} in Appendix~\ref{app:tni-lemma}, it follows that any completely positive map $\cE$ subject to the constraint $\cM(\cE) \leq 0$ is trace non-increasing on the set $\cW_+$. It thus follows that $\cE_{A \to B}(\psi_{RA})$ is subnormalized for any input state $\psi_{RA} \in \cW_+$. By restricting the maximization to such input states in $\cW_+$, applying the faithfulness assumption, and applying the definition of generalized thauma, we conclude that $\boldsymbol{\theta}(\cN) \geq 0$.

Suppose that $\cN \in \rm{CPWP}$. Then by Proposition~\ref{prop:faithfulness-mana}, $\cM(\cN) = 0$ and so we can set $\cE = \cN$ in the definition of generalized thauma and conclude from the faithfulness assumption that $\boldsymbol{\theta}(\cN) = 0$.

Finally, suppose that $\boldsymbol{\theta}(\cN) = 0$. By the assumption of continuity, this means that there exists a completely positive map $\cE$ satisfying  $\mathbf{D}(\cN \| \cE) = 0$. By Lemma~\ref{lem:mana-CP-maps} in Appendix~\ref{app:tni-lemma} and the faithfulness assumption, this in turn means that $\cN_{A\to B}(\Phi_{RA}) = \cE_{A\to B}(\Phi_{RA})$ for the maximally entangled state $\Phi_{RA}\in\cW_+$, which implies that $\cN_{A\to B} = \cE_{A\to B}$. However, we have that $\cM(\cE) \leq 0$, implying that $\cM(\cN) = 0$, since $\cN$ is a channel and $\cM(\cN) \geq 0$ for all channels. By Proposition~\ref{prop:faithfulness-mana}, we conclude that $\cN \in \rm{CPWP}$.
\end{proof}

\bigskip
As discussed in \cite{LKDW18,SWAT17}, a generalized  divergence possesses the direct-sum property  on classical-quantum states if the following equality holds:
\begin{equation}
\mathbf{D}\!\left(\sum_{x}p_{X}(x)|x\rangle\langle x|_{X}\otimes\rho^{x}\middle\Vert\sum_{x}p_{X}(x)|x\rangle\langle x|_{X}\otimes\sigma^{x}\right)  =\sum_{x}p_{X}(x)\mathbf{D}(\rho^{x}\Vert\sigma^{x}),\label{eq:d-sum-prop}
\end{equation}
where $p_{X}$ is a probability distribution, $\{|x\rangle\}_{x}$ is an orthonormal basis, and $\{\rho^{x}\}_{x}$ and $\{\sigma^{x}\}_{x}$ are sets of states. We note that this property holds for trace distance, quantum relative entropy \cite{Umegaki1962}, and the Petz-R\'enyi \cite{P86} and sandwiched R\'enyi \cite{Muller-Lennert2013,Wilde2014a} quasi-entropies $\operatorname{sgn}(\alpha-1)\tr\left[\rho^\alpha\sigma^{1-\alpha}\right]$ and $\operatorname{sgn}(\alpha-1)\tr[(\sigma^{\frac{1-\alpha}{2\alpha}}\rho\sigma^{\frac{1-\alpha}{2\alpha}})^\alpha]$, respectively.

For such generalized divergences, which are additionally continuous, as well as convex in the second argument, we find that an exchange of the minimization and the maximization in the definition of the generalized thauma is possible:
\begin{proposition}[Minimax]
\label{prop:minimax-gen-thauma}
Let $\mathbf{D}$ be a generalized divergence that is continuous, obeys the direct-sum property in \eqref{eq:d-sum-prop}, and is convex in the second argument. Then the following exchange of min and max is possible in the generalized thauma:
\begin{equation}
\boldsymbol{\theta}(\cN) = \sup_{\psi_{RA}} \inf_{\cE : \cM(\cE) \leq 0 } \mathbf{D}(\cN_{A \to B}(\psi_{RA}) \| \cE_{A \to B}(\psi_{RA})).
\end{equation}
\end{proposition}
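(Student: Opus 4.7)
The plan is to invoke Sion's minimax theorem, but some preparation is required: directly applying Sion's to $(\cE,\rho_{RA})\mapsto \mathbf{D}(\cN(\rho_{RA})\|\cE(\rho_{RA}))$ fails because this function turns out to be convex in \emph{both} arguments (in $\rho$ by combining the direct-sum property with data processing; in $\cE$ by convexity of $\mathbf{D}$ in its second argument together with linearity of $\cE\mapsto\cE(\rho)$), which is the wrong pairing of signs for Sion's. The trivial direction $\boldsymbol{\theta}(\cN)\ge \sup_\psi\inf_\cE \mathbf{D}(\cN(\psi)\|\cE(\psi))$ is just the max-min inequality, so the real work is the reverse direction.

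For that, I would pass to \emph{ensembles} of pure states. Given a finite probability distribution $\omega=\{p_x,\psi^x_{RA}\}_x$, define
\[
V(\cE,\omega)\coloneqq \sum_x p_x \,\mathbf{D}\!\bigl(\cN(\psi^x_{RA})\,\big\|\,\cE(\psi^x_{RA})\bigr).
\]
By the direct-sum property, $V(\cE,\omega)=\mathbf{D}(\cN(\bar\rho_\omega)\|\cE(\bar\rho_\omega))$ for the classical-quantum extension $\bar\rho_\omega = \sum_x p_x \ketbra{x}{x}_X\ox \psi^x_{RA}$. The crucial gain is that $V$ is now \emph{linear} (hence concave) in $\omega$ while remaining convex in $\cE$. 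A Carath\'eodory-type truncation allows restricting to ensembles of size at most $(d_R d_A)^2+1$, keeping the parameter space finite-dimensional. The set $\{\cE\text{ CP}:\cM(\cE)\le 0\}$, viewed through Choi operators, is cut out by $J^\cE\ge 0$ and $\sum_\bv |W_\cE(\bv|\bu)|\le 1$ for each $\bu$, which is convex, closed, and bounded in finite dimensions, hence compact. Continuity of $V$ follows from the continuity of $\mathbf{D}$. Sion's minimax therefore yields
\[
\inf_\cE \sup_\omega V(\cE,\omega)=\sup_\omega \inf_\cE V(\cE,\omega).
\]

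The left-hand side equals $\boldsymbol{\theta}(\cN)$ because the linear functional $\omega\mapsto V(\cE,\omega)$ achieves its supremum on the extreme points of the probability simplex (Dirac measures), recovering $\sup_\psi \mathbf{D}(\cN(\psi)\|\cE(\psi))=\mathbf{D}(\cN\|\cE)$. To identify the right-hand side with $\sup_\psi\inf_\cE \mathbf{D}(\cN(\psi)\|\cE(\psi))$, the ``$\ge$'' direction is immediate by restricting $\omega$ to Dirac measures. For ``$\le$'', I would purify $\bar\rho_\omega$ to a pure state $\ket{\phi_\omega}_{X'XRA}$ on an enlarged reference system and invoke data processing (partial trace over $X'$): for every admissible $\cE$, $\mathbf{D}(\cN(\phi_\omega)\|\cE(\phi_\omega))\ge \mathbf{D}(\cN(\bar\rho_\omega)\|\cE(\bar\rho_\omega))=V(\cE,\omega)$. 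A standard Schmidt-decomposition step (as in the discussion after \eqref{eq:gen-div}) reduces $\ket{\phi_\omega}$ to an equivalent pure state with $R\cong A$ without changing the $\mathbf{D}$ value, so taking $\inf_\cE$ and then $\sup_\omega$ delivers $\sup_\omega\inf_\cE V\le \sup_\psi\inf_\cE \mathbf{D}$, completing the identification.

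The main obstacle I anticipate is the topological bookkeeping for Sion's theorem, in particular choosing topologies on the CP-map set and on ensembles that make $V$ jointly continuous while keeping both sets compact; the Carath\'eodory truncation and the finite-dimensional Choi parametrization should tame these issues, but the details need to be laid out carefully. Once that is settled, everything else (convexity of $\mathbf{D}$ in the second argument, the direct-sum property, and data processing) is supplied directly by the hypotheses of the proposition.
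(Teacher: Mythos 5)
Your proof is correct, and the finite-dimensional compactness/continuity caveats you flag are real but manageable exactly as you sketch. That said, the paper takes a more direct route that dissolves your initial concern at the source. You rightly note that $\rho_{RA}\mapsto\mathbf{D}(\cN(\rho_{RA})\|\cE(\rho_{RA}))$ is \emph{convex} on the set of density operators on $RA$ (via the classical-quantum extension, direct-sum, and data processing), which is the wrong sign for maximizing. But the supremum in the channel divergence can be re-parametrized as a supremum over reduced states $\psi_A=\tr_R\psi_{RA}$ on the channel input, because the value $\mathbf{D}(\cN(\psi_{RA})\|\cE(\psi_{RA}))$ depends only on $\psi_A$ by unitary freedom of purifications and isometry-invariance of $\mathbf{D}$. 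On \emph{that} convex domain the objective turns out to be \emph{concave}: for $\rho_A^\lambda=\lambda\psi_A^1+(1-\lambda)\psi_A^2$, one purifies to $\phi^\lambda_{R'RA}$, dephases on $R'$, and applies the direct-sum property to obtain $\mathbf{D}(\cN(\psi^\lambda_{RA})\|\cE(\psi^\lambda_{RA}))\geq\lambda\mathbf{D}(\cN(\psi^1)\|\cE(\psi^1))+(1-\lambda)\mathbf{D}(\cN(\psi^2)\|\cE(\psi^2))$. After this re-parametrization Sion's applies immediately with no post-processing. Your ensemble approach linearizes the maximization variable, applies Sion's in the enlarged space, and then contracts back via two identification steps (collapse to Dirac measures on one side; purify the classical-quantum extension and invoke data processing and the Schmidt reduction on the other). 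The core inequality --- purification dominates classical-quantum extension, by data processing plus direct-sum --- is identical in both arguments; the paper simply organizes it as a one-shot concavity claim, while yours runs it as a round trip through an auxiliary ensemble optimization. Both are legitimate; the paper's is shorter because the re-parametrization absorbs your identification steps into the statement that the maximization effectively ranges over $\psi_A$.
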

\begin{proof}
Let $\mathcal{E}$ be a fixed completely positive map such that $\mathcal{M}%
(\mathcal{E})\leq0$. Let $\psi_{RA}^{1}$ and $\psi_{RA}^{2}$ be input states to consider for the maximization.
Due to the unitary freedom of purifications and invariance of generalized divergence with respect to unitaries, we can equivalently consider the maximization to be over the convex set of density operators acting on the channel input system~$A$.
Define
\begin{equation}
\rho_{A}^{\lambda}=\lambda\psi_{A}^{1}+(1-\lambda)\psi_{A}^{2},
\end{equation}
for $\lambda\in\lbrack0,1]$. Then the state%
\begin{equation}
|\phi^{\lambda}\rangle_{R^{\prime}RA}:=\sqrt{\lambda}|0\rangle_{R^{\prime}%
}|\psi^{1}\rangle_{RA}+\sqrt{1-\lambda}|1\rangle_{R^{\prime}}|\psi^{2}%
\rangle_{RA}%
\end{equation}
purifies $\rho_{A}^{\lambda}$ and is related to a purification $|\psi^{\lambda
}\rangle_{RA}$ of $\rho_{A}^{\lambda}$ by an isometry. It then follows that%
\begin{align}
& \mathbf{D}(\mathcal{N}_{A\rightarrow B}(\psi_{RA}^{\lambda})\Vert
\mathcal{E}_{A\rightarrow B}(\psi_{RA}^{\lambda}))\nonumber\\
& =\mathbf{D}(\mathcal{N}_{A\rightarrow B}(\phi_{R^{\prime}RA}^{\lambda}%
)\Vert\mathcal{E}_{A\rightarrow B}(\phi_{R^{\prime}RA}^{\lambda}))\\
& \geq\lambda\mathbf{D}(|0\rangle\langle0|_{R^{\prime}}\otimes\mathcal{N}%
_{A\rightarrow B}(\psi_{RA}^{1})\Vert|0\rangle\langle0|_{R^{\prime}}%
\otimes\mathcal{E}_{A\rightarrow B}(\psi_{RA}^{1}))\nonumber\\
& \qquad +(1-\lambda)\mathbf{D}(|1\rangle\langle1|_{R^{\prime}}\otimes\mathcal{N}%
_{A\rightarrow B}(\psi_{RA}^{2})\Vert|1\rangle\langle1|_{R^{\prime}}%
\otimes\mathcal{E}_{A\rightarrow B}(\psi_{RA}^{2}))\\
& =\lambda\mathbf{D}(\mathcal{N}_{A\rightarrow B}(\psi_{RA}^{1})\Vert
\mathcal{E}_{A\rightarrow B}(\psi_{RA}^{1}))+(1-\lambda)\mathbf{D}%
(\mathcal{N}_{A\rightarrow B}(\psi_{RA}^{2})\Vert\mathcal{E}_{A\rightarrow
B}(\psi_{RA}^{2})).
\end{align}
The inequality follows from data processing, by applying a completely dephasing channel to the register $R'$. The last equality again follows from data processing. So the objective function is concave in the argument being maximized (again thinking of the maximization being performed over density operators on $A$ rather than pure states on $RA$).

By assumption, for a fixed input state $\psi_{RA}$, the objective function is convex in the second argument and the set of completely positive maps $\cE$ satisfying $\cM(\cE)$ is convex.

Then the Sion minimax theorem \cite{sion58} applies, and we conclude the statement of the proposition.
\end{proof}

\begin{remark}
Examples of generalized divergences to which Proposition~\ref{prop:minimax-gen-thauma} applies include the quantum relative entropy \cite{Umegaki1962}, the sandwiched R\'enyi relative entropy \cite{Muller-Lennert2013,Wilde2014a}, and the Petz--R\'enyi relative entropy \cite{P86}. The proposition applies to the latter two by working with the corresponding quasi-entropies and then lifting the result to the actual relative entropies.
\end{remark}

\subsection{Max-thauma of a quantum channel}

As a particular case of the generalized thauma of a quantum channel defined in \eqref{eq:gen-thauma-channels}, we consider the max-thauma of a quantum channel, which is the max-relative entropy divergence between the channel and the set of completely positive maps with non-positive mana.
Specifically, for a given quantum channel $\cN_{A\to B}$, the max-thauma of $\cN_{A\to B}$ is defined by
\begin{align}\label{eq:theta divergence}
\theta_{\max}(\cN)\coloneqq \min_{\cE : \cM(\cE)\le 0} D_{\max}(\cN\|\cE),
\end{align}
where the minimum is taken with respect to all completely positive maps $\cE$ satisfying $\cM(\cE)\le 0$ and
\begin{equation}
D_{\max}(\cN \| \cE) \coloneqq \sup_{\psi_{RA}} D_{\max}(\cN_{A\to B}(\psi_{RA})\| \cE_{A\to B}(\psi_{RA}))
\label{eq:max-div-channels-1}
\end{equation}
is the max-divergence of channels \cite{Cooney2016}. (More generally, $\cN$ and $\cE$ could be arbitrary completely positive maps  in \eqref{eq:max-div-channels-1}.) Note that it is known that \cite{Diaz2018,BHKW18}
\begin{equation}
D_{\max}(\cN\|\cE) = D_{\max}(\cN_{A\to B}(\Phi_{RA})\| \cE_{A\to B}(\Phi_{RA})) = \log \min \{t: J_{AB}^\cN \le t J_{AB}^{\cE}\},
\label{eq:max-rel-ent-channels}
\end{equation}
where $\Phi_{RA}$ is the maximally entangled state and $J_{AB}^\cN$ is the \Choi matrix of the channel $\cN_{A\to B}$ and similarly for $J_{AB}^{\cE}$.

Due to the properties of max-relative entropy, it follows that Theorem~\ref{thm:gen-thauma-monotone-PWP-superch} and Propositions~\ref{prop:gen-thauma-reduction-to-states}, \ref{eq:faithful-gen-thauma}, and~\ref{prop:minimax-gen-thauma} apply to the max-thauma of a channel, implying reduction to states, that it is monotone with respect to completely CPWP superchannels, faithful, and obeys a minimax theorem, so that
\begin{align}
\theta_{\max}(\cR) & = \theta_{\max}(\sigma), \ \text{ if } \ \cR(\rho) = \tr[\rho] \sigma \label{prop:max-thauma-reduction-to-states}\\
\theta_{\max}(\cN) & \geq \theta_{\max}(\Xi^{\operatorname{CPWP}}(\cN)), \\
\theta_{\max}(\cN) & \geq 0 \quad \text{  and  } \quad \theta_{\max}(\cN) = 0 \quad \text{  if and only if  }\quad \cN \in \rm{CPWP} ,
\label{eq:max-thauma-faithful}
\\
\theta_{\max}(\cN) & = \min_{\cE : \cM(\cE)\le 0}\max_{\psi_{RA}} D_{\max}(\cN_{A\to B}(\psi_{RA})\| \cE_{A\to B}(\psi_{RA})) \\
& = \max_{\psi_{RA}} \min_{\cE : \cM(\cE)\le 0} D_{\max}(\cN_{A\to B}(\psi_{RA})\| \cE_{A\to B}(\psi_{RA})),
\end{align}
where $\cN$ is a quantum channel and $\Xi^{\operatorname{CPWP}}$ is a CPWP superchannel.
 
 We can alternatively express the max-thauma of a channel as the following SDP:
  \begin{proposition}[SDP for max-thauma] 
  \label{prop:SDPs-max-thauma}
For a given quantum channel $\cN_{A\to B}$, its max-thauma $\theta_{\max}(\cN)$ can be written as the following SDP:
 \begin{equation}\label{eq:theta max channel min}
\begin{split}
\theta_{\max}(\cN) =\log \min \ & t \\
\text{s.t. } & J_{AB}^\cN \le Y_{AB}\\
& \sum_{\bv}|\tr [(A_A^{\bu}\ox A_B^{\bv})Y_{AB}]|/d_B \le t, \quad\forall \bu,
\end{split}
\end{equation}
where $J_{AB}^\cN$ is the \Choi matrix of the channel $\cN_{A\to B}$. Moreover, the dual SDP to the above is as follows:
\begin{equation}
\begin{split}
\theta_{\max}(\cN) =\log \max \ & \tr [J_{AB}^\cN V_{AB}] \\
\text{s.t. } &  \sum_\bv b_{\bv}\le 1\\
&0\le V_{AB}\le \sum_{\bv,\bu} (c_{\bv,\bu}-f_{\bv,\bu})A_A^{\bv}\ox A_B^{\bu}/d_{B},\\
&  c_{\bv,\bu} +f_{\bv,\bu} \le b_{\bv},\quad \forall \bu, \bv,\\
&  c_{\bv,\bu}\ge 0, f_{\bv,\bu}\ge 0, \quad \forall \bu, \bv.
\end{split}
\end{equation}
 \end{proposition}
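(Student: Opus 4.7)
My plan is to establish the primal SDP via a change of variables and then derive the dual through Lagrangian duality, with strong duality following from Slater's condition.

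For the primal, I would begin from the definition $\theta_{\max}(\cN) = \min_{\cE : \cM(\cE)\le 0} D_{\max}(\cN\|\cE)$ combined with the Choi-matrix formula \eqref{eq:max-rel-ent-channels}, namely $D_{\max}(\cN\|\cE) = \log\min\{t : J_{AB}^\cN \le t J_{AB}^\cE\}$, to rewrite the quantity as a joint minimization over $t \ge 0$ and CP maps $\cE$ subject to $J_{AB}^\cN \le t J_{AB}^\cE$ and $\sum_\bv |\tr[(A_A^\bu \ox A_B^\bv) J_{AB}^\cE]|/d_B \le 1$ for all $\bu$ (the latter expresses $\cM(\cE)\le 0$ via \eqref{eq:choi-mana}). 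Substituting $Y_{AB} \coloneqq t J_{AB}^\cE$ converts the first constraint into $J_{AB}^\cN \le Y_{AB}$ and, after multiplying the mana inequality by $t$, the second into $\sum_\bv |\tr[(A_A^\bu \ox A_B^\bv) Y_{AB}]|/d_B \le t$. The positivity $Y_{AB}\ge 0$ is automatic from $Y_{AB} \ge J_{AB}^\cN \ge 0$ and need not be imposed explicitly. Conversely, any feasible pair $(Y_{AB},t)$ with $t>0$ defines a CP map via $J_{AB}^\cE = Y_{AB}/t$ satisfying $\cM(\cE)\le 0$ (the case $t=0$ is excluded since $\cN \neq 0$), giving a bijection between the two feasible sets and hence the stated primal SDP.

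For the dual, I would first recast the absolute-value constraint in conic form by introducing scalar slacks $t'_{\bv,\bu}\ge 0$ with $-t'_{\bv,\bu} \le \tr[(A_A^\bu \ox A_B^\bv) Y_{AB}]/d_B \le t'_{\bv,\bu}$ and $\sum_\bv t'_{\bv,\bu} \le t$, then form the Lagrangian with multipliers $V_{AB}\ge 0$ for the operator inequality $Y \ge J^\cN$, nonnegative scalars $c_{\bv,\bu}, f_{\bv,\bu}$ for the two sides of each absolute-value bound, and $b_\bu\ge 0$ for each sum constraint. Minimizing out the primal variables yields: from $Y_{AB}$, the identification $V_{AB} = \sum_{\bv,\bu}(c_{\bv,\bu}-f_{\bv,\bu})(A_A^\bu \ox A_B^\bv)/d_B$; from $t$, the inequality $\sum_\bu b_\bu \le 1$; and from each slack $t'_{\bv,\bu}$, the inequality $c_{\bv,\bu}+f_{\bv,\bu} \le b_\bu$. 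After relabeling indices to match the paper's convention (swapping $\bu \leftrightarrow \bv$), and relaxing the operator equality for $V_{AB}$ to the upper bound $V_{AB} \le \sum_{\bv,\bu}(c_{\bv,\bu}-f_{\bv,\bu})A_A^\bv \ox A_B^\bu/d_B$ (which is without loss of generality since $\tr[J^\cN V]$ is monotone in $V$ for $J^\cN \ge 0$), one recovers the dual SDP exactly as stated.

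Strong duality is ensured by Slater's condition: the primal point $Y_{AB} = M(I_A\ox I_B)$ with $M$ large enough that $Y_{AB} > J_{AB}^\cN$, together with appropriate choices of $t$ and $t'_{\bv,\bu}$, is strictly feasible for the PSD constraint, and the remaining constraints are linear. The main obstacle I anticipate is the careful bookkeeping required to linearize the absolute value and to track the index labeling and sign conventions when matching my derived dual against the one stated; once these alignments are fixed, minimizing out each primal variable is a routine Lagrangian calculation.
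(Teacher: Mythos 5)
Your proposal is correct and follows essentially the same route as the paper: absorb the $D_{\max}$ scaling $t$ into the Choi operator to obtain the primal in terms of $Y_{AB}$, then pass to the dual via Lagrangian duality with strong duality from Slater's condition. Your per-term linearization of the absolute-value constraint with slacks $t'_{\bv,\bu}$ is in fact more careful than the paper's own intermediate step, which writes $-t \le \sum_\bv \tr[(A_A^{\bu}\ox A_B^{\bv})Y_{AB}]/d_B \le t$ (apparently a typo for the per-term form, since otherwise the $\ell_1$ constraint would not be recovered), and your explicit mention of Slater's condition supplies a detail the paper leaves implicit.
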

 
\begin{proof}
Consider the following chain of equalities:
\begin{align}
\theta_{\max}(\cN) &=\log  \min_{\cE:\cM(\cE)\le 0} D_{\max}(\cN\|\cE)\\
&=\log \min \left\{t: J_{AB}^\cN  \le tJ_{AB}^{\cE'},\  \cM(\cE')\le 0 \right\}\\
&=\log \min \left\{t: J_{AB}^\cN \le tJ_{AB}^{\cE'}, \sum_{\bv}|\tr [\cE'(A_A^\bu)A_B^\bv]|/d_B\le 1, \forall \bu \right\}\\
&=\log \min \left\{t: J_{AB}^\cN \le J_{AB}^\cE, \sum_{\bv}|\tr [\cE(A_A^\bu)A_B^\bv]|/d_B\le t, \forall \bu \right\}\\
&=\log \min \left\{t: J_{AB}^\cN \le Y_{AB}, \sum_{\bv}|\tr [(A_A^{\bu}\ox A_B^{\bv})Y_{AB}]|/d_B\le t, \forall \bu \right\}, \label{eq:abs theta max}
\end{align}
where the second equality follows from \eqref{eq:max-rel-ent-channels} and the last from the fact that $\cE$ is completely positive and thus in one-to-one correspondence with positive semi-definite bipartite operators.
We further rewrite the absolute-value constraint in \eqref{eq:abs theta max} and arrive at the following SDP: 
\begin{align}
\theta_{\max}(\cN)& =\log \min \left\{t: J_{AB}^\cN \le Y_{AB}, -t\le \sum_{\bv} \tr [(A_A^{\bu}\ox A_B^{\bv})Y_{AB}]/d_B\le t, \forall \bu \right\},
\end{align}

Then we use the Lagrangian method to obtain the dual SDP:
\begin{equation}
\begin{split}
\theta_{\max}(\cN) =\log \max \ & \tr [J_{AB}^\cN V_{AB}] \\
\text{s.t. } &  \sum_\bv b_{\bv}\le 1\\
&0\le V_{AB}\le \sum_{\bv,\bu} (c_{\bv,\bu}-f_{\bv,\bu})A_A^{\bv}\ox A_B^{\bu}/d_{B},\\
&  c_{\bv,\bu} +f_{\bv,\bu} \le b_{\bv},\quad \forall \bu, \bv,\\
&  c_{\bv,\bu}\ge 0, f_{\bv,\bu}\ge 0, \quad \forall \bu, \bv.
\end{split}
\end{equation}
This concludes the proof.
\end{proof}

\begin{corollary}[Max-thauma vs.~mana]
For a quantum channel $\cN_{A\to B}$, its max-thauma does not exceed its mana:
\begin{equation}
\theta_{\max}(\cN) \leq \cM(\cN).
\end{equation}
\end{corollary}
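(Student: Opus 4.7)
The plan is to prove the bound by exhibiting an explicit completely positive map that is feasible in the minimization defining $\theta_{\max}(\cN)$. The natural candidate is the rescaled map $\cE \coloneqq 2^{-\cM(\cN)} \cN$, which is manifestly completely positive since $\cN$ is.

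First I would verify that $\cE$ satisfies the constraint $\cM(\cE) \leq 0$. By the definition of mana for a Hermiticity-preserving linear map via \eqref{eq:channel-mana} and the obvious homogeneity of the Wigner $1$-norm under positive scalar multiplication, one has
\begin{equation}
\cM(\cE) = \log \max_{\bu} \| 2^{-\cM(\cN)} \cN(A_A^\bu) \|_{W,1} = \cM(\cN) - \cM(\cN) = 0.
\end{equation}
Second, I would compute $D_{\max}(\cN \| \cE)$. Using the Choi-matrix characterization in \eqref{eq:max-rel-ent-channels}, we have $J_{AB}^\cE = 2^{-\cM(\cN)} J_{AB}^\cN$, so
\begin{equation}
D_{\max}(\cN \| \cE) = \log \min\{t : J_{AB}^\cN \leq t\, J_{AB}^\cE\} = \log 2^{\cM(\cN)} = \cM(\cN).
\end{equation}
Since $\cE$ is a feasible point in the minimization defining $\theta_{\max}(\cN)$ in \eqref{eq:theta divergence}, the infimum is at most $\cM(\cN)$, which gives the claim.

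Equivalently, one can read this off directly from the primal SDP of Proposition~\ref{prop:SDPs-max-thauma}: taking $Y_{AB} = J_{AB}^\cN$ trivially satisfies $J_{AB}^\cN \leq Y_{AB}$, and the remaining constraint
\begin{equation}
\sum_{\bv} |\tr[(A_A^{\bu} \otimes A_B^{\bv}) Y_{AB}]| / d_B \leq t \quad \forall\, \bu
\end{equation}
is satisfied with $t = 2^{\cM(\cN)}$ by the Choi expression \eqref{eq:choi-mana} of the channel mana. There is no real obstacle here; the only point to be careful about is that mana of a CP map scales homogeneously under positive rescaling and that \eqref{eq:max-rel-ent-channels} applies to CP maps, both of which follow immediately from the definitions given earlier.
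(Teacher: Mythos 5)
Your proof is correct and is essentially the same as the paper's: the paper sets $Y_{AB}=J_{AB}^{\cN}$ in the primal SDP of Proposition~\ref{prop:SDPs-max-thauma}, which is exactly the reformulation you give in your final paragraph, and your primary presentation via the feasible CP map $\cE = 2^{-\cM(\cN)}\cN$ is the same observation phrased directly at the level of Eq.~\eqref{eq:theta divergence} and~\eqref{eq:max-rel-ent-channels}. Both the homogeneity of mana under positive rescaling and the applicability of \eqref{eq:max-rel-ent-channels} to CP maps are as you say, so there is no gap.
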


\begin{proof}
The proof is a direct consequence of the primal formulation in \eqref{eq:theta max channel min}. By setting $Y_{AB} = J^{\cN}_{AB}$, we find that
\begin{equation}
\theta_{\max}(\cN) \leq \log \min \{t : \max_{\bu}\sum_{\bv}|\tr [(A_A^{\bu}\ox A_B^{\bv})J_{AB}]|/d_B \le t\} = \cM(\cN),
\end{equation}
where the last equality follows from \eqref{eq:choi-mana}.
\end{proof}

 \begin{proposition}[Additivity]
 \label{prop:thauma-ch-additive}
For two given quantum channels $\cN_1$ and $\cN_2$, the max-thauma is additive in the following sense:
\begin{align}
    \theta_{\max}(\cN_1\ox\cN_2)=\theta_{\max}(\cN_1)+\theta_{\max}(\cN_2).    
\end{align}
 \end{proposition}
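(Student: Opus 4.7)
The plan is to prove additivity by showing the two inequalities separately, both via the SDP formulation in Proposition~\ref{prop:SDPs-max-thauma}, exploiting tensor-product structure in the primal for the $\leq$ direction and in the dual for the $\geq$ direction.

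For subadditivity, $\theta_{\max}(\cN_1\ox\cN_2) \le \theta_{\max}(\cN_1)+\theta_{\max}(\cN_2)$, I would take optimal primal solutions $(Y_1,t_1)$ and $(Y_2,t_2)$ of the SDP in~\eqref{eq:theta max channel min} for $\cN_1$ and $\cN_2$, respectively, and propose $(Y_1\ox Y_2, t_1 t_2)$ as a feasible primal solution for $\cN_1\ox\cN_2$. Feasibility of the PSD constraint $J^{\cN_1\ox\cN_2}_{A_1A_2B_1B_2}\le Y_1\ox Y_2$ follows from the tensor-product factorization of the Choi matrix together with the elementary inequality $B\ox D - A\ox C = (B-A)\ox D + A\ox(D-C)\ge 0$ whenever $0\le A\le B$ and $0\le C\le D$. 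For the Wigner-normalization constraint, the product structure of the composite phase-space point operators $A^{(\bu_1,\bu_2)}_{A_1A_2} = A^{\bu_1}_{A_1}\ox A^{\bu_2}_{A_2}$ gives
\begin{equation}
\sum_{\bv_1,\bv_2} |\tr[(A^{\bu_1}_{A_1}\ox A^{\bu_2}_{A_2}\ox A^{\bv_1}_{B_1}\ox A^{\bv_2}_{B_2})(Y_1\ox Y_2)]|/(d_{B_1}d_{B_2}) \le t_1 t_2
\end{equation}
for all $(\bu_1,\bu_2)$, using $|\tr[X_1\ox X_2]| = |\tr[X_1]|\cdot|\tr[X_2]|$. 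Taking logarithms yields the desired inequality.

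For superadditivity, I would use the dual SDP from Proposition~\ref{prop:SDPs-max-thauma}. Given dual-optimal tuples $(V_i,c_i,f_i,b_i)$ for $\cN_i$ achieving $\tr[J^{\cN_i}V_i]=2^{\theta_{\max}(\cN_i)}$, I propose the tensorized dual variable $V\coloneqq V_1\ox V_2$, with coefficients defined via $g_{(\bv_1,\bv_2),(\bu_1,\bu_2)} \coloneqq g^{(1)}_{\bv_1,\bu_1}\, g^{(2)}_{\bv_2,\bu_2}$ where $g^{(i)}\coloneqq c_i - f_i$, then setting $c = \max(g,0)$, $f = \max(-g,0)$, and $b_{(\bv_1,\bv_2)} = b^{(1)}_{\bv_1} b^{(2)}_{\bv_2}$. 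The constraint $\sum_{\bv_1,\bv_2} b^{(1)}_{\bv_1}b^{(2)}_{\bv_2}\le 1$ follows from $\sum_{\bv_i} b^{(i)}_{\bv_i}\le 1$. The operator inequality $V_1\ox V_2 \le (\sum g^{(1)}A\ox A/d_{B_1})\ox(\sum g^{(2)}A\ox A/d_{B_2})$ follows because any $V_i\ge 0$ satisfying $V_i \le M_i$ forces $M_i\ge 0$, and then the same $A\ox D - A\ox C$ trick shows $V_1\ox V_2 \le M_1\ox M_2$. The objective value equals $\tr[J^{\cN_1}V_1]\cdot\tr[J^{\cN_2}V_2] = 2^{\theta_{\max}(\cN_1)+\theta_{\max}(\cN_2)}$, giving the lower bound after taking logs.

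The main obstacle is the bookkeeping in the superadditivity direction: one must verify that the per-pair constraints $c_{(\bv_1,\bv_2),(\bu_1,\bu_2)} + f_{(\bv_1,\bv_2),(\bu_1,\bu_2)} \le b_{(\bv_1,\bv_2)}$ survive the product construction, which reduces to checking $|g^{(1)}_{\bv_1,\bu_1}g^{(2)}_{\bv_2,\bu_2}| \le b^{(1)}_{\bv_1}b^{(2)}_{\bv_2}$; this in turn follows from $|g^{(i)}_{\bv_i,\bu_i}|\le c^{(i)}_{\bv_i,\bu_i}+f^{(i)}_{\bv_i,\bu_i}\le b^{(i)}_{\bv_i}$. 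Modulo this verification, everything is a tensor-product factorization, and the two inequalities together establish additivity.
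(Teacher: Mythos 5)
Your proof is correct. On the superadditivity direction ($\geq$), you take essentially the same route as the paper: tensor the optimal dual (maximization) SDP solutions, propose $V_1\ox V_2$ as a feasible dual variable, and observe that the objective multiplies; the paper's proof simply asserts feasibility of the tensorized solution as ``easy to verify,'' while you have worked out the $c$, $f$, $b$ bookkeeping and the operator inequality $V_1\ox V_2\le M_1\ox M_2$ via the standard tensor-monotonicity trick, which is a correct and worthwhile filling-in of detail. On the subadditivity direction ($\leq$), however, you take a genuinely different path: you tensor the optimal primal (minimization) SDP solutions $(Y_1,t_1)$ and $(Y_2,t_2)$ and verify feasibility of $(Y_1\ox Y_2, t_1t_2)$, whereas the paper bypasses the primal SDP entirely and instead works with the max-divergence formulation $\theta_{\max}(\cN)=\min_{\cE:\cM(\cE)\le 0}D_{\max}(\cN\|\cE)$, picking $\cE_1\ox\cE_2$ as a feasible CP map (using $\cM(\cE_1\ox\cE_2)=\cM(\cE_1)+\cM(\cE_2)\le 0$) and invoking additivity of the max-relative entropy of channels. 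The paper's route is shorter because it imports a known additivity result; yours is more elementary and self-contained, relying only on the SDP structure and the multiplicativity of the Wigner trace norm over tensor factors. Both are valid, and together the two halves establish the claimed additivity.
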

\begin{proof}
The idea of the proof is to utilize the primal and dual SDPs of $\theta_{\max}(\cN)$ from Proposition~\ref{prop:SDPs-max-thauma}. 
On the one hand, suppose that the optimal solutions to the primal SDPs for $\theta_{\max}(\cN_1)$ and $\theta_{\max}(\cN_2)$ are $\{R_1,b_{\bv_1}\}$ and $\{R_2,b_{\bv_2}\}$, respectively.
It is then easy to verify that $\{R_1\ox R_2,b_{\bv_1}b_{\bv_2}\}$ is a feasible solution to the SDP of $\theta_{\max}(\cN_1\ox\cN_2)$. Thus,
\begin{align}
    \theta_{\max}(\cN_1\ox\cN_2) \ge \log \tr [(J_{A_1B_1}^{\cN_1}\ox J_{A_2B_2}^{\cN_2})(R_1\ox R_2)]
    =\theta_{\max}(\cN_1)+\theta_{\max}(\cN_2).
\end{align}

On the other hand, considering Eq.~\eqref{eq:theta divergence}, suppose that the optimal solutions for $\cN_1$ and $\cN_2$ are $\cE_1$ and $\cE_2$, respectively. Noting that $\cM(\cE_1\ox\cE_2)=\cM(\cE_1)+\cM(\cE_2)\le 0$, and employing \eqref{eq:max-rel-ent-channels} and the additivity of the max-relative entropy, we find that
\begin{align}
    \theta_{\max}(\cN_1\ox\cN_2) & \le D_{\max}(\cN_1\ox\cN_2\|\cE_1\ox\cE_2)\\
    & =\theta_{\max}(\cN_1)+\theta_{\max}(\cN_2).
\end{align}
This concludes the proof.
\end{proof}

\bigskip
The following lemma is essential to establishing subadditivity of max-thauma of channels with respect to serial composition, as stated in Proposition~\ref{prop:thauma-ch-subadditive} below. We suspect that Lemma~\ref{lem:sub-add-max-ch-div} will find wide use in general resource theories beyond the magic resource theory considered in this paper. For example, it leads to an alternative proof of \cite[Proposition~17]{BHKW18}.

\begin{lemma}[Subadditivity of max-divergence of channels]
\label{lem:sub-add-max-ch-div}
Given completely positive maps $\cN^1_{A\to B}$, $\cN^2_{B \to C}$, $\cE^1_{A\to B}$, and $\cE^2_{B\to C}$, the following subadditivity inequality, with respect to serial compositions, holds for the max-channel divergence of \eqref{eq:max-div-channels-1}--\eqref{eq:max-rel-ent-channels}:
\begin{equation}
D_{\max}(\cN_2 \circ\cN_1 \| \cE_2 \circ \cE_1) \leq D_{\max}(\cN_1 \| \cE_1) + D_{\max}(\cN_2 \| \cE_2),
\end{equation}
where we have made the abbreviations $\cN_1\equiv \cN^1_{A\to B}$, $\cN_2 \equiv \cN^2_{B \to C}$, $\cE_1 \equiv \cE^1_{A\to B}$, and $\cE_2\equiv \cE^2_{B\to C}$.
\end{lemma}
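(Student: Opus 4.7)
The plan is to use the characterization of the max-channel divergence in terms of completely positive (CP) differences of maps, rather than directly manipulating Choi operators on the composed channels. Specifically, by \eqref{eq:max-rel-ent-channels}, $D_{\max}(\cN_i\|\cE_i) \le t_i$ is equivalent to the statement that $2^{t_i} \cE_i - \cN_i$ is a completely positive map (when $\cN_i$ and $\cE_i$ are themselves CP maps, since taking the Choi operator is linear and preserves complete positivity). This reformulation turns the inequality into an algebraic statement about CP maps rather than an optimization problem.

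First, I would set $t_1 := D_{\max}(\cN_1 \| \cE_1)$ and $t_2 := D_{\max}(\cN_2 \| \cE_2)$ (assuming both are finite; otherwise the inequality is trivial). Then the Hermiticity-preserving linear maps $\Delta_1 := 2^{t_1}\cE_1 - \cN_1$ and $\Delta_2 := 2^{t_2}\cE_2 - \cN_2$ are both completely positive.

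The key step is the algebraic decomposition
\begin{equation}
2^{t_1+t_2}\, \cE_2\circ\cE_1 - \cN_2\circ\cN_1 \;=\; 2^{t_2}\,\cE_2 \circ \Delta_1 \;+\; \Delta_2 \circ \cN_1,
\end{equation}
which one verifies by expanding the right-hand side and noting the cancellation of the $2^{t_2}\cE_2 \circ \cN_1$ cross-term. Each of the two summands on the right is a composition of CP maps (using that $\cE_2$, $\cN_1$, $\Delta_1$, and $\Delta_2$ are all CP), hence each summand is CP, and so is their sum. Therefore $2^{t_1+t_2}\cE_2\circ\cE_1 - \cN_2\circ\cN_1$ is CP, which by the Choi characterization \eqref{eq:max-rel-ent-channels} means $D_{\max}(\cN_2\circ\cN_1\|\cE_2\circ\cE_1) \le t_1 + t_2$, establishing the claim.

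I do not anticipate a serious obstacle here: once one recognizes the correct telescoping identity, the rest is immediate. The only subtle point is making sure the argument is valid for general CP maps (not just channels), so that the lemma can be iterated or combined with the max-thauma definition, where $\cE_1$ and $\cE_2$ are merely CP with non-positive mana. Since the Choi-operator characterization of $D_{\max}$ extends to CP maps and the telescoping identity is purely algebraic, no additional hypotheses are needed.
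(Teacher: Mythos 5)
Your proof is correct, and it takes a genuinely different route from the paper. The paper first establishes a ``data-processed triangle inequality'' (Eq.~\eqref{eq:DP-triangle}, essentially submultiplicativity of the operator norm combined with data processing) and applies it to carefully chosen states obtained by pushing the maximally entangled state through the pre-processing maps; it thus works at the level of Choi states. You instead work at the level of maps: you use the equivalence $D_{\max}(\cN\|\cE)\le\log t \iff t\cE-\cN \text{ is CP}$ (immediate from Eq.~\eqref{eq:max-rel-ent-channels} and the fact that a Hermiticity-preserving map is CP iff its Choi operator is positive semi-definite), and then verify the telescoping identity $2^{t_1+t_2}\cE_2\circ\cE_1 - \cN_2\circ\cN_1 = 2^{t_2}\cE_2\circ\Delta_1 + \Delta_2\circ\cN_1$, both of whose summands are manifestly CP. Your argument is arguably more elementary and self-contained, since it bypasses the triangle-inequality lemma entirely and reduces everything to one algebraic cancellation plus closure of CP maps under composition and sums. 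The trade-off is generality: the paper's route, as noted in the remark following the lemma, extends verbatim to any channel divergence satisfying the data-processed triangle inequality (e.g., the Hilbert $\alpha$-divergences), whereas your argument exploits the specific operator-inequality characterization of $D_{\max}$ and its compatibility with the linear structure of CP maps, so it does not generalize in the same way. Both proofs correctly handle the case where $\cN_i$, $\cE_i$ are merely CP (not trace-preserving), which is what is needed for the max-thauma application.
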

\begin{proof}
Recall the ``data-processed triangle inequality'' from \cite{Christandl2017}:
\begin{equation}
D_{\max}(\cP(\rho)\| \omega) \leq D_{\max}(\rho\|\sigma) + D_{\max}(\cP(\sigma)\|\omega) \label{eq:DP-triangle},
\end{equation}
which holds for $\cP$ a positive map and $\rho$, $\omega$, and $\sigma$ positive semi-definite operators. Note that one can in fact see this as a consequence of the submultiplicativity of the operator norm and the data-processing inequality of max-relative entropy for positive maps:
\begin{align}
D_{\max}(\cP(\rho)\| \omega) & = 2 \log \left\| \omega^{-1/2} [\cP(\rho)]^{1/2} \right\|_\infty \\
 & = 2 \log \left\| \omega^{-1/2} [\cP(\sigma)]^{1/2} [\cP(\sigma)]^{-1/2} [\cP(\rho)]^{1/2} \right\|_\infty \\
& \leq  2 \log \left\| \omega^{-1/2} [\cP(\sigma)]^{1/2} \right\|_\infty \cdot \left\| [\cP(\sigma)]^{-1/2} [\cP(\rho)]^{1/2} \right\|_\infty \\
& = 2 \log \left\| \omega^{-1/2} [\cP(\sigma)]^{1/2} \right\|_\infty + 2 \log \left\| [\cP(\sigma)]^{-1/2} [\cP(\rho)]^{1/2} \right\|_\infty \\
& = D_{\max}(\cP(\rho)\|\cP(\sigma)) + D_{\max}(\cP(\sigma)\|\omega) \\
& \leq D_{\max}(\rho\|\sigma) + D_{\max}(\cP(\sigma)\|\omega) .
\end{align}
 Let us pick
\begin{equation}
\cP = \operatorname{id} \otimes\cN_2, \quad \rho = (\operatorname{id} \otimes \cN_1)(\Phi),
\quad \sigma = (\operatorname{id} \otimes \cE_1)(\Phi), \quad
\omega = (\operatorname{id} \otimes \cE_2)(\sigma),
\label{eq:DP-op-choices}
\end{equation}
where $\Phi$ denotes the maximally entangled state.
We find that
\begin{align}
& D_{\max}(\cN_2 \circ\cN_1 \| \cE_2 \circ \cE_1) \notag \\
& = D_{\max}((\operatorname{id} \otimes (\cN_2 \circ\cN_1))(\Phi)\| (\operatorname{id} \otimes (\cE_2 \circ \cE_1))(\Phi))  \\
& \leq D_{\max}((\operatorname{id} \otimes \cN_1)(\Phi)\|(\operatorname{id} \otimes \cE_1)(\Phi)) 
+ D_{\max}((\operatorname{id} \otimes (\cN_2 \circ \cE_1))(\Phi) \|(\operatorname{id} \otimes (\cE_2 \circ \cE_1))(\Phi)) \\
& \leq D_{\max}(\cN_1 \| \cE_1) + D_{\max}(\cN_2 \| \cE_2).
\end{align}
The first equality follows from \eqref{eq:max-rel-ent-channels}. The first inequality follows from \eqref{eq:DP-triangle} with the choices in \eqref{eq:DP-op-choices}.
The second inequality follows because $D_{\max}((\operatorname{id} \otimes \cN_1)(\Phi)\|(\operatorname{id} \otimes \cE_1)(\Phi)) = D_{\max}(\cN_1 \| \cE_1)$, as a consequence of \eqref{eq:max-rel-ent-channels}, and the channel divergence $D_{\max}(\cN_2 \| \cE_2)$ involves an optimization over all bipartite input states, one of which is $(\operatorname{id} \otimes \cE_1)(\Phi)$. 
\end{proof}

\begin{remark}
The  proof above applies to any divergence that obeys the data-processed triangle inequality, which includes the Hilbert $\alpha$-divergences of \cite{BG17}, as discussed in \cite[Appendix~A]{BHKW18}.
\end{remark}

 \begin{proposition}[Subadditivity]
 \label{prop:thauma-ch-subadditive}
For two given quantum channels $\cN_1$ and $\cN_2$, the max-thauma is subadditive in the following sense:
\begin{align}
    \theta_{\max}(\cN_2\circ\cN_1)\leq \theta_{\max}(\cN_1)+\theta_{\max}(\cN_2).    
\end{align}
 \end{proposition}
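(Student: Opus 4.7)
The plan is to combine Lemma~\ref{lem:sub-add-max-ch-div} (subadditivity of the max-channel divergence under serial composition) with Proposition~\ref{prop: M subadd} (subadditivity of mana under serial composition, which was already noted to extend to Hermiticity-preserving maps, and in particular to completely positive maps).

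First I would let $\cE_1$ and $\cE_2$ be completely positive maps that (approximately) attain the infima defining $\theta_{\max}(\cN_1)$ and $\theta_{\max}(\cN_2)$ respectively, so that $\cM(\cE_i) \leq 0$ for $i=1,2$ and $D_{\max}(\cN_i \| \cE_i) = \theta_{\max}(\cN_i)$. The next step is to verify that $\cE_2 \circ \cE_1$ is a feasible candidate in the optimization defining $\theta_{\max}(\cN_2 \circ \cN_1)$. Clearly $\cE_2 \circ \cE_1$ is completely positive since both $\cE_1$ and $\cE_2$ are, and by the extended form of Proposition~\ref{prop: M subadd} applied to the Hermiticity-preserving maps $\cE_1$ and $\cE_2$, we obtain
\begin{equation}
\cM(\cE_2 \circ \cE_1) \leq \cM(\cE_1) + \cM(\cE_2) \leq 0,
\end{equation}
so that $\cE_2 \circ \cE_1$ is indeed feasible.

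Having fixed this feasible candidate, I would then invoke Lemma~\ref{lem:sub-add-max-ch-div} to conclude
\begin{align}
\theta_{\max}(\cN_2\circ\cN_1)
& \leq D_{\max}(\cN_2\circ\cN_1 \,\|\, \cE_2\circ\cE_1) \\
& \leq D_{\max}(\cN_1 \| \cE_1) + D_{\max}(\cN_2 \| \cE_2) \\
& = \theta_{\max}(\cN_1) + \theta_{\max}(\cN_2),
\end{align}
which is the desired inequality.

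There is no real obstacle here: all the heavy lifting has already been done in establishing subadditivity of mana for Hermiticity-preserving maps and, crucially, in the data-processed-triangle-inequality argument for $D_{\max}$ embodied in Lemma~\ref{lem:sub-add-max-ch-div}. The only mild subtlety is to ensure that the feasibility constraint $\cM(\cdot) \leq 0$ is preserved under composition, which is exactly what the mana subadditivity gives, and to handle the case where the infimum defining $\theta_{\max}(\cN_i)$ is not attained by passing to a sequence of feasible $\cE_i^{(n)}$ with $D_{\max}(\cN_i \| \cE_i^{(n)}) \to \theta_{\max}(\cN_i)$ and taking limits.
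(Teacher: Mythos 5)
Your proposal is correct and takes essentially the same route as the paper: choose (near-)optimal feasible maps $\cE_1,\cE_2$ with $\cM(\cE_i)\leq 0$, note that $\cE_2\circ\cE_1$ is completely positive and satisfies $\cM(\cE_2\circ\cE_1)\leq 0$ by Proposition~\ref{prop: M subadd}, and then apply Lemma~\ref{lem:sub-add-max-ch-div}. Your extra remark about approximating the infimum with a sequence is a mild refinement of presentation only, not a different argument.
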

\begin{proof}
This is a direct consequence of Lemma~\ref{lem:sub-add-max-ch-div} above.
Let $\cE_i$ be the completely positive map satisfying $\cM(\cE_i)\leq 0$ and that is optimal for $\cN_i$ with respect to the max-thauma $\theta_{\max}$, for $i \in \{1,2\}$. Then applying Lemma~\ref{lem:sub-add-max-ch-div}, we find that
\begin{align}
 D_{\max}(\cN_2 \circ\cN_1 \| \cE_2 \circ \cE_1) & \leq D_{\max}(\cN_1 \| \cE_1) + D_{\max}(\cN_2 \| \cE_2) \\
&  = \theta_{\max}(\cN_1) + \theta_{\max}(\cN_2),
\end{align}
The equality follows from the assumption that $\cE_i$ is the completely positive map satisfying $\cM(\cE_i)\leq 0$, which is optimal for $\cN_i$ with respect to the max-thauma $\theta_{\max}$, for $i \in \{1,2\}$.

Given that, by assumption, $\cM(\cE_i)\leq 0$ for $i \in \{1,2\}$, it follows from Proposition~\ref{prop: M subadd}  that~$\cM\!\left(\cE_2 \circ \cE_1\right)\leq 0$. Since the max-thauma involves an optimization over all completely positive maps $\cE$ satsifying $\cM(\cE)\leq 0$, we conclude that
\begin{equation}
\theta_{\max}(\cN_2 \circ\cN_1)
\leq D_{\max}(\cN_2 \circ\cN_1 \| \cE_2 \circ \cE_1) \leq \theta_{\max}(\cN_1) + \theta_{\max}(\cN_2),
\end{equation}
which is the statement of the proposition.
\end{proof}

\begin{proposition}[Amortization inequality]\label{prop: theta amo}
For any quantum channel $\cN_{A \to B}$, the following inequality holds
\begin{align}
\sup_{\rho_A} [\theta_{\max}(\cN_{A\to B}(\rho_A))-\theta_{\max}(\rho_A)]\le 
\theta_{\max}(\cN),
\label{eq:amortized-thauma-no-ref}
\end{align}
with the optimization performed over input states $\rho_A$.
Moreover, the following inequality also holds 
\begin{align}
\sup_{\rho_{RA}} [\theta_{\max}((\operatorname{id}_R\ox \cN)(\rho_{RA}))-\theta_{\max}(\rho_{RA})]\le \theta_{\max}(\cN_{A\to B}).
\label{eq:amortized-thauma-with-ref}
\end{align}
\end{proposition}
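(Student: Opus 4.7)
The plan is to mirror the argument used for the mana amortization inequality in Proposition~\ref{prop: M amo}, replacing mana with max-thauma throughout and invoking the analogous structural properties of $\theta_{\max}$ that have already been established. The three key ingredients are (i) reduction to states for replacer channels, stated in \eqref{prop:max-thauma-reduction-to-states}; (ii) subadditivity under serial composition, Proposition~\ref{prop:thauma-ch-subadditive}; and (iii) additivity under tensor products, Proposition~\ref{prop:thauma-ch-additive}. Faithfulness \eqref{eq:max-thauma-faithful} will also be used to kill off the max-thauma of the identity channel.

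For \eqref{eq:amortized-thauma-no-ref}, fix an arbitrary input state $\rho_A$ and let $\cN'$ be the replacer channel that prepares $\rho_A$, i.e., $\cN'(\omega) = \tr[\omega]\,\rho_A$. Then the composition $\cN \circ \cN'$ is again a replacer channel, one that outputs $\cN(\rho_A)$. Applying reduction to states to both $\cN'$ and $\cN\circ\cN'$, and then subadditivity under serial composition to $\cN\circ\cN'$, we obtain
\begin{equation}
\theta_{\max}(\cN(\rho_A)) = \theta_{\max}(\cN\circ\cN') \leq \theta_{\max}(\cN)+\theta_{\max}(\cN') = \theta_{\max}(\cN)+\theta_{\max}(\rho_A).
\end{equation}
Rearranging and taking a supremum over $\rho_A$ yields \eqref{eq:amortized-thauma-no-ref}.

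For the reference-system version \eqref{eq:amortized-thauma-with-ref}, the plan is to apply the bound we just derived with the substitution $\cN \to \operatorname{id}_R\otimes \cN_{A\to B}$ and $\rho_A \to \rho_{RA}$. This gives
\begin{equation}
\theta_{\max}((\operatorname{id}_R\otimes \cN)(\rho_{RA})) - \theta_{\max}(\rho_{RA}) \leq \theta_{\max}(\operatorname{id}_R\otimes \cN).
\end{equation}
By additivity (Proposition~\ref{prop:thauma-ch-additive}), the right-hand side equals $\theta_{\max}(\operatorname{id}_R)+\theta_{\max}(\cN)$, and since the identity channel is CPWP, the faithfulness statement in \eqref{eq:max-thauma-faithful} gives $\theta_{\max}(\operatorname{id}_R)=0$. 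Taking the supremum over $\rho_{RA}$ then delivers \eqref{eq:amortized-thauma-with-ref}.

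I do not expect any genuine obstacle here: the argument is a direct transcription of the mana proof, and every ingredient it relies on has been established earlier in the paper. The only mild subtlety is that the one-shot bound in the displayed chain needs to hold for \emph{every} input state before taking the supremum, which is automatic since $\rho_A$ (or $\rho_{RA}$) was arbitrary throughout.
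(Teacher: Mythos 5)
Your proof is correct and follows essentially the same route as the paper: for the no-reference version, express the output state as the action of a replacer channel, invoke reduction to states together with subadditivity under serial composition; for the reference-system version, substitute $\operatorname{id}_R\otimes\cN$, invoke additivity under tensor products, and kill $\theta_{\max}(\operatorname{id}_R)$ via faithfulness of a CPWP channel. Nothing is missing and the argument matches the paper's proof step for step.
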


 \begin{proof}
 The inequality in \eqref{eq:amortized-thauma-no-ref} is a direct consequence of reduction to states (Proposition~\ref{prop:max-thauma-reduction-to-states}) and subadditivity of max-thauma with respect to serial compositions (Proposition~\ref{prop:thauma-ch-subadditive}). Indeed, letting $\cN'$ be a replacer channel that prepares the state $\rho_A$, we find that
\begin{equation}
\theta_{\max}(\cN(\rho_A)) = \theta_{\max}(\cN\circ \cN') \leq \theta_{\max}(\cN) + \theta_{\max}(\cN') = \theta_{\max}(\cN) + \theta_{\max}(\rho_A). \label{eq:amortize-max-thauma-no-ref-no-opt}
\end{equation}
for all input states $\rho_A$,
from which we conclude \eqref{eq:amortized-thauma-no-ref}.

To arrive at the inequality in \eqref{eq:amortized-thauma-with-ref}, we make the substitution $\cN \to \operatorname{id} \otimes \cN$, apply the above reasoning, the additivity in Proposition~\ref{prop:thauma-ch-additive}, and the fact that the identity channel is free (CPWP), to conclude that the following  holds for all input states $\rho_{RA}$
\begin{align}
\theta_{\max}((\operatorname{id}_R \otimes \cN_{A\to B})(\rho_{RA}))-\theta_{\max}(\rho_{RA}) \le 
\theta_{\max}(\operatorname{id} \otimes \cN) = \theta_{\max}(\operatorname{id} ) + \theta_{\max}( \cN)
= \theta_{\max}( \cN),
\end{align}
from which we conclude \eqref{eq:amortized-thauma-with-ref}.
 \end{proof}

\bigskip

To summarize, the properties of $\theta_{\max}(\cN)$ are as follows:
\begin{enumerate}
\item Reduction to states: $\theta_{\max}(\cN) = \theta_{\max}(\sigma)$ when the channel $\cN$ is a replacer channel, acting as $\cN(\rho) = \tr[\rho]\sigma$ for an arbitrary input state $\rho$, where $\sigma$ is a state.
\item Monotonicity of $\theta_{\max}(\cN)$ under CPWP superchannels (including completely stabilizer-preserving superchannels).
\item Additivity under tensor products of channels: $\theta_{\max}(\cN_1\ox\cN_2)=\theta_{\max}(\cN_1)+\theta_{\max}(\cN_2)$.
\item Subadditivity under serial composition of channels: $\theta_{\max}(\cN_2\circ\cN_1)\leq \theta_{\max}(\cN_1)+\theta_{\max}(\cN_2)$.
\item  Faithfulness: $\cN \in \rm{CPWP}$ if and only if $\theta_{\max}(\cN)=0$.
\item Amortization inequality: $\sup_{\rho_{RA}} \theta_{\max}((\operatorname{id}_R\ox \cN)(\rho_{RA}))-\theta_{\max}(\rho_{RA})\le \theta_{\max}(\cN)$.
\end{enumerate}

\begin{remark}
Due to the subadditivity inequality in Proposition~\ref{prop:thauma-ch-subadditive}, the additivity identity in Proposition~\ref{prop:thauma-ch-additive}, and faithfulness in \eqref{eq:max-thauma-faithful}, the following identities hold
\begin{align}
\theta_{\max}(\cN_1) & = \sup_{\cN_2} \theta_{\max}([\operatorname{id} \ox \cN_1]\circ \cN_2) - \theta_{\max}( \cN_2),  \\
\theta_{\max}(\cN_1) & = \sup_{\cN_2} \theta_{\max}( \cN_2 \circ [\operatorname{id} \ox \cN_1]) - \theta_{\max}( \cN_2),  \\
\theta_{\max}(\cN_1) & = \sup_{\cN_2, \cN_3} \theta_{\max}( \cN_2 \circ [\operatorname{id} \ox \cN_1] \circ \cN_3) - \theta_{\max}( \cN_2) - \theta_{\max}( \cN_3),
\end{align}
which have the interpretation that amortization in terms of arbitrary pre- and post-processing does not increase the max-thauma of a quantum channel.
\end{remark}

\section{Distilling magic from quantum channels}

\label{sec:magic gen}

\subsection{Amortized magic}

Since many physical tasks relate to quantum channels and time evolution rather than directly to quantum states, it is of interest to consider the non-stabilizer properties of quantum channels. Now having established suitable measures to quantify the magic of
quantum channels, it is natural to figure out the ability of
a quantum channel to generate magic from input quantum states. Let us begin by defining the amortized magic of a quantum channel:

 \begin{definition}[Amortized magic]
The amortized magic of a quantum channel $\cN_{A\to B}$ is defined relative to a magic measure $m(\cdot)$ via the following formula:
\begin{align}
m^{\mathcal{A}}(\cN) : = \sup_{\rho_{RA}} m((\operatorname{id}_R \otimes \cN)(\rho_{RA}))-m(\rho_{RA}).
 \end{align}
 The strict amortized magic of a quantum channel is defined as
 \begin{align}
\widetilde m^{\mathcal{A}}(\cN) : = \sup_{\rho_{RA}\in \STAB} m((\operatorname{id}_R \otimes \cN)(\rho_{RA})).
 \end{align}
\end{definition}

That is, the amortized magic is  defined as the largest
increase in magic that a quantum channel can realize after it acts on an arbitrary input quantum state. The strict amortized magic  is defined by finding the largest amount of magic that a quantum channel can realize when a stabilizer state is given to it as an input. Such amortized measures of resourcefulness of quantum channels were previously studied in the resource theories of quantum coherence (e.g., \cite{Mani2015,Garcia-Diaz2016,BenDana2017,Diaz2018}) and quantum entanglement (e.g., \cite{BHLS03,LHL03,Takeoka2014,KW17a,WW18}). They have been considered in the context of an arbitrary resource theory in \cite[Section~7]{KW17a}.

\begin{proposition}
Given a quantum channel $\cN_{A\to B}$, the following inequalities hold
\begin{align}
    \cM^{\mathcal{A}}(\cN) & \coloneqq\sup_{\rho_{RA}} \cM((\operatorname{id}_R \otimes \cN_{A\to B})(\rho_{RA}))-\cM(\rho_{RA})
    \le \cM(\cN),\\
    \theta_{\max}^{\mathcal{A}}(\cN) & \coloneqq\sup_{\rho_{RA}} \theta_{\max}((\operatorname{id}_R \otimes \cN_{A\to B})(\rho_{RA}))-\theta_{\max}(\rho_{RA})
    \le \theta_{\max}(\cN).
\end{align}
\end{proposition}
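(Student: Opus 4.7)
The plan is essentially to observe that both inequalities in this proposition are restatements of results already established earlier in the paper. Specifically, the mana bound is exactly the amortization inequality \eqref{eq:amortize-with-ref} of Proposition~\ref{prop: M amo}, and the max-thauma bound is exactly the inequality \eqref{eq:amortized-thauma-with-ref} of Proposition~\ref{prop: theta amo}. So the proof reduces to a two-line citation.

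If I had to reprove either of these inequalities directly, I would follow the same template used in Propositions~\ref{prop: M amo} and~\ref{prop: theta amo}. First, I would introduce a replacer channel $\cN'_{R'A} \to (\operatorname{id}_R\ox\cN_{A\to B})$'s-input that prepares the state $\rho_{RA}$ from a trivial input, so that the output is $\rho_{RA}$ itself and the composition of this replacer with $\operatorname{id}_R\ox \cN$ yields the channel that prepares $(\operatorname{id}_R\ox \cN)(\rho_{RA})$. Then by the reduction-to-states properties (Proposition~\ref{prop:mana-reduction-to-states} for mana, and the identity \eqref{prop:max-thauma-reduction-to-states} for max-thauma), the channel mana and channel max-thauma of the replacer equal the corresponding state quantities $\cM(\rho_{RA})$ and $\theta_{\max}(\rho_{RA})$.

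Next I would invoke subadditivity under serial composition (Proposition~\ref{prop: M subadd} for mana, Proposition~\ref{prop:thauma-ch-subadditive} for max-thauma) to bound the channel mana or max-thauma of this composed channel by $\cM(\operatorname{id}_R\ox \cN) + \cM(\cN')$, respectively $\theta_{\max}(\operatorname{id}_R\ox \cN) + \theta_{\max}(\cN')$. Finally I would use additivity under tensor products (Proposition~\ref{prop: M add} for mana, Proposition~\ref{prop:thauma-ch-additive} for max-thauma) together with the observation that the identity channel is free and hence has vanishing mana and vanishing max-thauma, yielding $\cM(\operatorname{id}_R\ox \cN) = \cM(\cN)$ and $\theta_{\max}(\operatorname{id}_R\ox \cN) = \theta_{\max}(\cN)$. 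Rearranging and taking the supremum over $\rho_{RA}$ produces the desired two inequalities.

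The main ``obstacle'' is really only notational: one has to carry the reference system $R$ through the argument and be careful that the replacer channel is defined so that the output state with reference system $R$ is exactly the input state $\rho_{RA}$ to $\operatorname{id}_R\ox\cN$. Since every ingredient (reduction to states, subadditivity, additivity, and freeness of the identity channel) has already been proved for both $\cM$ and $\theta_{\max}$, no new technical work is needed here.
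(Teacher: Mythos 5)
Your proposal is correct and matches the paper's proof exactly: the paper also observes that both bounds are immediate consequences of the amortization inequalities in Propositions~\ref{prop: M amo} and~\ref{prop: theta amo}. Your further sketch of how those underlying inequalities are proved (replacer channel, reduction to states, subadditivity under composition, additivity under tensor products, and freeness of the identity) is likewise the route the paper itself takes.
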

\begin{proof}
These statements are an immediate consequence of the amortization inequality for $\cM(\rho)$ and $\theta_{\max}(\rho)$ given in Propositions~\ref{prop: M amo} and \ref{prop: theta amo}, respectively.
\end{proof}

\subsection{Distillable magic of a quantum channel}

The most general protocol for distilling some resource by means of a quantum channel $\cN$ employs $n$ invocations of the channel $\cN$ interleaved by free channels \cite[Section~7]{KW17a}. In our case, the resource of interest is magic, and here we take the free channels to be the CPWP channels discussed in Section~\ref{sec:PWP channel}. In such a protocol, the instances of the channel $\cN$ are invoked one at a time, and we can integrate all CPWP channels between one use of $\cN$ and the next into a single CPWP channel, since the CPWP channels are closed under composition. The goal of such a protocol is to distill magic states from the channel.

In more detail, the most general protocol for distilling magic from a quantum channel proceeds as follows: one starts by preparing the systems $R_1 A_1$ in a state $\rho^{(1)}_{R_1 A_1}$ with non-negative Wigner function, by employing a free CPWP channel $\cF^{(1)}_{\emptyset \to R_1 A_1}$, then applies the channel $\cN_{A_1 \to B_1}$, followed by a CPWP channel $\cF^{(2)}_{R_1B_1 \to R_2 A_2}$, resulting in the state
\begin{equation}
\rho^{(2)}_{R_2 A_2} \coloneqq \cF^{(2)}_{R_1B_1 \to R_2 A_2}(( \operatorname{id}_{R_1} \ox\cN_{A_1\to B_1})(\rho^{(1)}_{R_1 A_1})).
\end{equation}
Continuing the above steps, given state $\rho^{(i)}_{R_{i} A_{i}}$ after the action of $i-1$ invocations of the channel $\cN_{A \to B}$ and interleaved CPWP channels, we apply the channel $\cN_{A_{i}\to B_{i}}$ and the CPWP channel $\cF^{(i+1)}_{R_{i}B_{i}\to R_{i+1} A_{i+1}}$, obtaining the state
\begin{equation}
\rho^{(i+1)}_{R_{i+1} A_{i+1}} \coloneqq \cF^{(i+1)}_{R_{i}B_{i} \to R_{i+1} A_{i+1}}(( \operatorname{id}_{R_i} \ox\cN_{A_{i}\to B_{i}})(\rho^{(i)}_{R_i A_i})).
\end{equation}
After $n$ invocations of the channel $\cN_{A \to B}$ have been made, the final free CPWP channel $\cF^{(n+1)}_{R_{n}B_{n}\to S}$ produces a state $\omega_S$ on system $S$, defined as
\begin{equation}
\omega_S \coloneqq \cF^{(n+1)}_{R_{n}B_{n}\to S}(\rho^{(n)}_{R_{n} A_{n}}).
\end{equation}
Such a protocol is depicted in Figure~\ref{fig:magic-distill-channel}.

\begin{figure}
\begin{center}
\includegraphics[width=\textwidth]{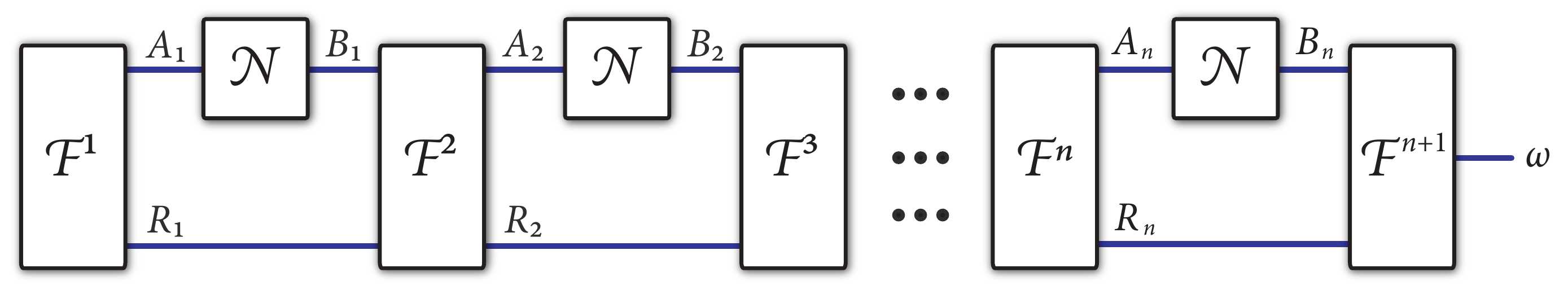}
\caption{The most general protocol for distilling magic from a quantum channel.}
\label{fig:magic-distill-channel}
\end{center}

\end{figure}

Fix $\ve \in [0,1]$ and $k\in \mathbb{N}$. The above procedure is an $(n,k,\ve)$ $\psi$-magic distillation protocol with rate $k/n$ and error $\ve$, if the state $\omega_S$ has a high fidelity with $k$ copies of the target magic state $\psi$,  
\begin{align}
   \bra{\psi}^{\ox {k}}\omega_S \ket {\psi}^{\ox {k}} \ge 1-\ve.
\end{align}
A rate $R$ is achievable for $\psi$-magic state distillation from the channel $\cN$, if for all $\ve\in(0,1]$, $\delta > 0$, and sufficiently large $n$, there exists an $(n, n(R-\delta), \ve)$ $\psi$-magic state distillation protocol of the above form. The $\psi$-distillable magic of the channel $\cN$ is defined to be the supremum of all achievable rates and is denoted by $C_\psi(\cN)$.

A common choice for a non-Clifford gate is the $T$-gate. The qutrit $T$ gate \cite{Howard2012} is given by
\begin{align}
T=\left(\begin{matrix}
\xi & 0 & 0\\
0 & 1 & 0\\
0& 0 & \xi^{-1}
\end{matrix}
\right),
\end{align}
where $\xi=e^{2\pi i/9}$ is a primitive ninth root of unity. The $T$ gate leads to the $T$ magic state
\begin{equation}
\ket{T} \coloneqq T \ket{+},
\end{equation}
by inputting the stabilizer state $\ket{+}$ to the $T$ gate. Furthermore, by the method of state injection \cite{Gottesman1999,Zhou2000}, one can generate a $T$ gate by acting with stabilizer operations on the $T$ state $\ket{T}$.

In what follows, we use quantum hypothesis testing to establish an upper bound on the rate at which one can distill qutrit $T$ states. The proof follows the general method in \cite[Theorem~1]{BHLS03} and \cite[Theorem~1]{BenDana2017}, which was later generalized to an arbitrary resource theory in \cite[Section~7]{KW17a}.

\begin{proposition}
Given a quantum channel $\cN$, the following upper bound holds for the rate $R=k/n$ of an $(n,k,\ve)$ $T$-magic distillation protocol:
\begin{align}
     R \le \frac{1}{\log(1+2\sin(\pi/18))}\left(\theta_{\max}(\cN)+\frac{\log(1/[1-\ve])}{n}\right).
\end{align}
Consequently, the following upper bound holds for the $T$-distillable magic of a quantum channel $\cN$:
\begin{align}
    C_T(\cN) \le \frac{\theta_{\max}(\cN)}{\log (1+2\sin (\pi/18))}.
\end{align}
\end{proposition}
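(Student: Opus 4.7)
The plan is to combine an amortization argument with a hypothesis-testing-style lower bound on $\theta_{\max}(\omega_S)$, in the spirit of~\cite{BHLS03,BenDana2017} and~\cite[Section~7]{KW17a}. First, I would telescope $\theta_{\max}$ along the protocol. The initial state $\rho^{(1)}_{R_1A_1}$ is produced by a CPWP channel acting on the trivial system, so it lies in $\cW_+\subseteq\cW$ and hence $\theta_{\max}(\rho^{(1)})=0$. Each intervening CPWP channel $\cF^{(i+1)}$ does not increase $\theta_{\max}$ on states, because data processing for $D_{\max}$ combines with the fact that CPWP channels map $\cW$ into itself; and each use of $\cN$ raises $\theta_{\max}$ by at most $\theta_{\max}(\cN)$ via the amortization inequality~\eqref{eq:amortized-thauma-with-ref}. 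Iterating over the $n$ uses of $\cN$ yields $\theta_{\max}(\omega_S)\leq n\,\theta_{\max}(\cN)$.

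Next, I would extract a matching lower bound on $\theta_{\max}(\omega_S)$ from the fidelity condition $\langle T|^{\otimes k}\omega_S|T\rangle^{\otimes k}\geq 1-\ve$. For any $\sigma\in\cW$, sandwiching the operator inequality $\omega_S\leq 2^{D_{\max}(\omega_S\|\sigma)}\sigma$ between $\langle T|^{\otimes k}$ and $|T\rangle^{\otimes k}$ gives $1-\ve\leq 2^{D_{\max}(\omega_S\|\sigma)}\langle T|^{\otimes k}\sigma|T\rangle^{\otimes k}$. Expanding $\sigma=\sum_{\bu}W_\sigma(\bu)A_\bu$, applying the triangle inequality, and using $\|\sigma\|_{W,1}\leq 1$ (since $\sigma\in\cW$) together with the factorization $A_{\bu\oplus\bv}=A_\bu\otimes A_\bv$ of composite phase-space operators, one bounds
\begin{equation}
\langle T|^{\otimes k}\sigma|T\rangle^{\otimes k}\leq\|\ketbra{T}{T}\|_{W,\infty}^{k}=(1+2\sin(\pi/18))^{-k},
\end{equation}
where the closing equality is the key numerical input, corresponding to the value $\theta_{\max}(\ketbra{T}{T})=\log(1+2\sin(\pi/18))$ established in~\cite{Wang2018}. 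Taking the infimum over $\sigma\in\cW$ produces $\theta_{\max}(\omega_S)\geq k\log(1+2\sin(\pi/18))+\log(1-\ve)$. Combining with the amortization bound, dividing by $n$, and using $R=k/n$ gives the stated rate inequality; sending $\ve\to 0$ and $n\to\infty$ yields the $C_T(\cN)$ bound.

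The main obstacle is the second step, specifically pinning down the constant: identifying $\|\ketbra{T}{T}\|_{W,\infty}=(1+2\sin(\pi/18))^{-1}$ requires evaluating the discrete Wigner function of the qutrit $T$ state on each of the nine phase-space points and verifying that the largest-magnitude entry is attained at the claimed value. Once this numerical fact is in hand, the remainder is mechanical: the amortization half is iterated application of Proposition~\ref{prop: theta amo}, and the norm-duality expansion is routine for the Wigner norms of Section~\ref{sec:dWF}.
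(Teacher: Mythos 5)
Your telescoping/amortization half matches the paper exactly, and your use of the operator inequality $\omega_S \le 2^{D_{\max}(\omega_S\|\sigma)}\sigma$ is a fine substitute for the paper's data-processing argument with a binary measurement channel. The problem is the constant you obtain in the lower bound on $\theta_{\max}(\omega_S)$. Your Wigner-norm-duality chain correctly gives $\langle T|^{\otimes k}\sigma|T\rangle^{\otimes k}\le\|\sigma\|_{W,1}\,\|\proj T\|_{W,\infty}^k\le\|\proj T\|_{W,\infty}^k$, but the closing equality $\|\proj T\|_{W,\infty}=(1+2\sin(\pi/18))^{-1}$ is numerically false. Indeed, with $A_0=\bigl(\begin{smallmatrix}1&0&0\\0&0&1\\0&1&0\end{smallmatrix}\bigr)$ and $\ket T=\tfrac1{\sqrt3}(\xi\ket0+\ket1+\xi^{-1}\ket2)$, $\xi=e^{2\pi i/9}$, one finds $\tr[A_0\proj T]=\tfrac13(1+2\cos(2\pi/9))\approx 0.844$, so $\|\proj T\|_{W,\infty}\ge 0.844$, whereas $(1+2\sin(\pi/18))^{-1}\approx 0.742$. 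Your bound is therefore a genuine but strictly weaker inequality, and iterating it gives a worse rate constant than the one stated.

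The underlying reason the duality route cannot reproduce the tight constant is that $\|\cdot\|_{W,\infty}$ is the dual to $\|\cdot\|_{W,1}$ over \emph{all} Hermitian operators, whereas the set $\cW$ additionally imposes $\sigma\ge 0$. The optimizer for $\|\proj T\|_{W,\infty}$ is (a rescaling of) a phase-space point operator $A_\bu$, which has a negative eigenvalue and thus lies outside $\cW$; so $\max_{\sigma\in\cW}\langle T|\sigma|T\rangle=(1+2\sin(\pi/18))^{-1}<\|\proj T\|_{W,\infty}$ and the relaxation is strict. The tight bound $\tr[\proj T^{\ox k}\sigma]\le(1+2\sin(\pi/18))^{-k}$ for $\sigma\in\cW$ is the min-thauma result of \cite{Wang2018}, which crucially uses positivity; the paper simply cites it, and your proof should do the same rather than attempt to re-derive it through the norm duality. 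Also note that $\theta_{\max}(\proj T)=\log(1+2\sin(\pi/18))$ does not translate into a statement about $\|\proj T\|_{W,\infty}$: by \eqref{eq:SDP theta min}, $\theta_{\max}$ is expressed through a Wigner \emph{trace}-norm minimization over dominating operators, not a Wigner sup-norm, so the "correspondence" you invoke does not hold.
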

\begin{proof}
Consider an arbitrary  $(n,k,\ve)$ $T$-magic state distillation protocol of the form described previously.
Such a protocol uses the channel $n$ times, starting from the state $\rho^{(1)}_{R_1 A_1}$ with non-negative Wigner function and generating $\rho^{(2)}_{R_2 A_2},\ldots,\rho^{(n)}_{R_n A_n},$ and $ \omega_S $ step by step along the way, such that the final state $\omega_S$ has fidelity $1-\ve$ with $\ket T^{\ox {k}}$, where $\ket T=T \ket +$ is the corresponding magic state of the $T$ gate. By assumption,
it follows that
\begin{align}
    \tr [\proj T^{\ox k}\omega_S] \ge 1-\ve,
\end{align}
while the result in \cite{Wang2018} implies that
\begin{align}
    \tr [\proj T^{\ox k}\sigma_S] \le (1+2\sin(\pi/18))^{-k}
\end{align}
for all $\sigma_S\in\cW$ with the same dimension as $\omega_S$. Applying the data processing inequality for the max-relative entropy, with respect to the measurement channel
\begin{equation}
(\cdot) \to \tr[\proj T^{\otimes k}(\cdot)] \proj 0 + \tr[(I^{\otimes k} - \proj T^{\otimes k})(\cdot)] \proj 1,
\end{equation}
 we find that
\begin{align}
    {\theta_{\max}(\omega_S)}&\ge \log[(1-\ve)(1+2\sin(\pi/18))^{k}]\\
    & \ge \log(1-\ve)+k\log(1+2\sin(\pi/18)).
\end{align}

Moreover, by labeling $\omega_S$ as $\rho^{(n+1)}$, we find that
\begin{align}
    \theta_{\max}(\rho^{(n+1)}) &= \sum_{j=1}^n [\theta_{\max}(\rho^{(j+1)}) -\theta_{\max}(\rho^{(j)})] \\
    &=\sum_{j=1}^n [\theta_{\max}((\cF^{(j+1)}\circ[\operatorname{id}\ox \cN])(\rho^{(j)})) -\theta_{\max}(\rho^{(j)})]\\
    &\le\sum_{j=1}^n [\theta_{\max}([\operatorname{id}\ox \cN](\rho^{(j)})) -\theta_{\max}(\rho^{(j)})]\\
    &\le n\theta_{\max}(\cN).
\end{align}
The first equality follows because $\theta_{\max}(\rho^{(1)}) = 0$ and by adding and subtracting terms. The first inequality follows because the max-thauma of a state does not increase under the action of a CPWP channel. The last inequality follows from applying Proposition~\ref{prop: theta amo}.


Hence,
\begin{align}
    n\theta_{\max}(\cN) \ge  \log(1-\ve)+k\log(1+2\sin(\pi/18)),
\end{align}
which implies that
\begin{align}
    R = k/n\le \frac{1}{\log(1+2\sin(\pi/18))}\left(\theta_{\max}(\cN)+\frac{\log(1/[1-\ve])}{n}\right).
\end{align}
This concludes the proof.
\end{proof}

\bigskip
We note here that one could also use the subadditivity inequality in Proposition~\ref{prop:thauma-ch-subadditive} to establish the above result.
We further note here that similar results in terms of max-relative entropies have been found in the context of other resource theories. Namely, a channel's max-relative entropy of 
entanglement is an upper bound on its distillable secret key when assisted by LOCC channels \cite{Christandl2017}, the max-Rains information of a quantum channel is an upper bound on its distillable entanglement when assisted by completely PPT preserving channels \cite{BW17}, and the max-$k$-unextendibility of a quantum channel is an upper bound on its distillable entanglement when assisted by $k$-extendible channels \cite{KDWW18}.

\subsection{Injectable quantum channel}

In any resource theory of quantum channels, it tends to simplify for those channels that can be implemented by the action of a free channel on the tensor product of the channel input state and a resourceful state \cite[Section~7]{KW17a} and \cite[Section~6]{Wilde2018}. The situation is no different for the resource theory of magic channels. In fact, particular channels with the aforementioned structure have been considered for a long time in the context of magic states, via the method of state injection \cite{Gottesman1999,Zhou2000}. Here we formally define an injectable channel as follows:

\begin{definition}[Injectable channel]
A quantum channel $\cN$ is called injectable with associated resource state $\omega_C$ if there exists  a CPWP channel $\Lambda_{AC \to B}$ such that the following equality holds for all input states~$\rho_A$:
\begin{align}
	\cN_{A\to B}(\rho_A)=\Lambda_{AC\to B}(\rho_A\ox \omega_{C}).
\end{align}
\end{definition}

The notion of a resource-seizable channel was introduced in \cite{BHKW18,Wilde2018}, and here we consider the application of this notion in the context of magic resource theory:

\begin{definition}[Resource-seizable channel]
Let $\cN_{A \to B}$ be an injectable channel with associated resource state $\omega_C$. The channel $\cN$ is resource-seizable if there exists a free state $\kappa^{\operatorname{pre}}_{RA}$ with non-negative Wigner function and a post-processing free CPWP channel $\cF^{\operatorname{post}}_{RB\to C}$ such that
\begin{equation}
\cF^{\operatorname{post}}_{RB\to C}(\cN_{A \to B}(\kappa^{\operatorname{pre}}_{RA})) = \omega_C.
\end{equation}
In the above sense, one seizes the resource state $\omega_C$ by employing free pre- and post-processing of the channel~$\cN_{A \to B}$.
\end{definition}

An interesting and prominent example of an injectable channel that is also resource seizable is the channel $\cT$ corresponding to the $T$ gate. This channel $\cT$  has the following action $\cT(\rho)\coloneqq T \rho T^\dag$ on an input state $\rho$.
This channel is injectable with associated resource state  $\omega_C = \proj T$, since one can use the method of circuit injection \cite{Zhou2000} to obtain the channel $\cT$ by acting on $\proj T$ with stabilizer operations. It is resource seizable because one can act on the free state $\proj +$ with the channel $\cT$ in order to seize the underlying resource state $\proj T = \cT(\proj + )$.

As a generalization of the $\cT$ channel example above, consider the channel $\Delta^{\mathbf{p}} \circ T$, where $\Delta^{\mathbf{p}}$ is a dephasing channel of the form
\begin{equation}
\Delta^{\mathbf{p}}(\rho) = p_0 \rho + p_1 Z \rho Z^\dag + p_2 Z^2 \rho (Z^2)^\dag
\end{equation}
where $\mathbf{p} = (p_0, p_1, p_2)$, $p_0,p_1,p_2\geq 0$, and $p_0 + p_1 + p_2 = 1$. The channel is injectable with resource state $\Delta^{\mathbf{p}}(\proj T)$, because the same method of circuit injection leads to the channel $\Delta^{\mathbf{p}} \circ T$ when acting on the resource state $\Delta^{\mathbf{p}}(\proj T)$. Furthermore, the channel $\Delta^{\mathbf{p}} \circ T$ is resource seizable because one recovers the resource state $\Delta^{\mathbf{p}}(\proj T)$ by acting with $\Delta^{\mathbf{p}} \circ T$ on the free state $\proj +$.

For such injectable channels, the resource theory of magic channels simplifies in the following sense:
\begin{proposition}
\label{prop:injectable-simplify}
Let $\cN$ be an injectable channel with associated resource state $\omega_C$. Then the following inequalities hold
\begin{equation}
\cM(\cN)  \leq \cM(\omega_C), \qquad
\boldsymbol{\theta}(\cN)  \leq \boldsymbol{\theta}(\omega_C), \label{eq:mana-inj-1}
\end{equation}
where $\boldsymbol{\theta}$ denotes the generalized thauma measures from Section~\ref{sec:gen-thauma}.
If $\cN$ is also resource seizable, then the following equalities hold
\begin{equation}
\cM(\cN)  = \cM(\omega_C), \qquad
\boldsymbol{\theta}(\cN)  = \boldsymbol{\theta}(\omega_C). \label{eq:mana-seizable-1}
\end{equation}
\end{proposition}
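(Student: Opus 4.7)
The plan is to derive the two inequalities in \eqref{eq:mana-inj-1} directly from the injectability decomposition $\cN_{A\to B}(\rho_A)=\Lambda_{AC\to B}(\rho_A\ox \omega_C)$ with $\Lambda\in\operatorname{CPWP}$, and then use the amortization and data-processing properties already established for $\cM$ and $\boldsymbol{\theta}$ to reverse the inequalities under the extra resource-seizability hypothesis.

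For the channel mana, I would expand $\omega_C=\sum_{\bw} W_{\omega_C}(\bw) A_C^{\bw}$, push it through $\Lambda$ to write $\cN(A_A^{\bu})=\sum_{\bw} W_{\omega_C}(\bw)\,\Lambda(A_A^{\bu}\ox A_C^{\bw})$, and apply the Wigner-trace-norm triangle inequality. Faithfulness of the mana (Proposition~\ref{prop:faithfulness-mana}) together with $\Lambda\in\operatorname{CPWP}$ gives $\|\Lambda(A_A^{\bu}\ox A_C^{\bw})\|_{W,1}\le 1$ for every composite phase-space point, so $\|\cN(A_A^{\bu})\|_{W,1}\le \sum_{\bw}|W_{\omega_C}(\bw)|=2^{\cM(\omega_C)}$, and maximizing over $\bu$ and taking a logarithm gives the first inequality.

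For the generalized thauma, I would take $\sigma_C\geq 0$ nearly optimal in the state definition \eqref{eq:GT-state} of $\boldsymbol{\theta}(\omega_C)$ and define the completely positive map $\cE_{A\to B}(\rho_A):=\Lambda_{AC\to B}(\rho_A\ox \sigma_C)$. The same triangle-inequality argument as above, now with $\sum_{\bw}|W_{\sigma_C}(\bw)|\le 1$, shows $\cM(\cE)\le 0$, so $\cE$ is feasible in the infimum defining $\boldsymbol{\theta}(\cN)$. Data processing of $\mathbf{D}$ under $\Lambda$ followed by the standard invariance of generalized divergences under tensoring a common state then yields
\begin{equation}
\mathbf{D}(\cN\|\cE)\le \sup_{\psi_{RA}}\mathbf{D}(\psi_{RA}\ox\omega_C\|\psi_{RA}\ox\sigma_C)=\mathbf{D}(\omega_C\|\sigma_C),
\end{equation}
and infimizing over $\sigma_C$ gives $\boldsymbol{\theta}(\cN)\le \boldsymbol{\theta}(\omega_C)$.

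For the equalities \eqref{eq:mana-seizable-1}, I would exploit the identity $\omega_C=\cF^{\operatorname{post}}_{RB\to C}(\cN_{A\to B}(\kappa^{\operatorname{pre}}_{RA}))$ with $\kappa^{\operatorname{pre}}\in\cW_+$ (so $\cM(\kappa^{\operatorname{pre}})=0$) and $\cF^{\operatorname{post}}\in\operatorname{CPWP}$. For the mana, the amortization inequality (Proposition~\ref{prop: M amo}) gives $\cM((\operatorname{id}_R\ox\cN)(\kappa^{\operatorname{pre}}))\le \cM(\kappa^{\operatorname{pre}})+\cM(\cN)=\cM(\cN)$, and the CPWP post-processing cannot increase mana (monotonicity), yielding $\cM(\omega_C)\le \cM(\cN)$. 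For the generalized thauma, for any feasible $\cE'$ with $\cM(\cE')\le 0$ I would chain two applications of data processing,
\begin{equation}
\mathbf{D}(\cN\|\cE')\ge \mathbf{D}(\cN(\kappa^{\operatorname{pre}})\|\cE'(\kappa^{\operatorname{pre}}))\ge \mathbf{D}\!\left(\omega_C\,\big\|\,\cF^{\operatorname{post}}(\cE'(\kappa^{\operatorname{pre}}))\right),
\end{equation}
and check via the subadditivity of mana (Proposition~\ref{prop: M subadd}, extended to CP maps) that $\cF^{\operatorname{post}}(\cE'(\kappa^{\operatorname{pre}}))$ has $\cM\le 0$, so it is feasible for the state generalized thauma of $\omega_C$; infimizing over $\cE'$ then gives $\boldsymbol{\theta}(\cN)\ge \boldsymbol{\theta}(\omega_C)$.

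The main obstacle is the bookkeeping for verifying that the auxiliary objects $\cE=\Lambda(\cdot\ox\sigma_C)$ and $\cF^{\operatorname{post}}(\cE'(\kappa^{\operatorname{pre}}))$ lie inside the feasible sets of non-positive mana used by the generalized thauma; conceptually this reduces to the triangle inequality on Wigner functions combined with CPWP-ness of $\Lambda$ and $\cF^{\operatorname{post}}$, but one must be careful because both are CP (not necessarily trace-preserving) and the mana inequalities for composition/prep/post stated earlier for channels must be invoked in their CP-map generalizations (cf.~Remark~\ref{rem:mono-extend-to-CP}).
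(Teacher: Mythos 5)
Your proof of the two inequalities in \eqref{eq:mana-inj-1} is essentially the paper's argument: for the mana, expanding $\omega_C$ in the phase-space point operators and applying the triangle inequality is precisely what the paper's Lemma~\ref{lem:DP-Wigner-tnorm} (data processing of the Wigner trace norm under CPWP maps) encapsulates, and your deduction of $\|\Lambda(A^\bu_A\ox A^\bw_C)\|_{W,1}\le 1$ from $\cM(\Lambda)=0$ is exactly the right point; for the generalized thauma, choosing $\cE=\Lambda(\cdot\ox\sigma_C)$ and invoking data processing plus tensor-invariance matches the paper step for step.

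For the equalities \eqref{eq:mana-seizable-1} under resource seizability your route differs from the paper's and is also correct. The paper packages the entire map $\cN\mapsto\cF^{\operatorname{post}}(\cN(\kappa^{\operatorname{pre}}))$ as a CPWP superchannel taking $\cN$ to a preparation channel for $\omega_C$, then invokes monotonicity of $\cM$ and $\boldsymbol{\theta}$ under CPWP superchannels (Theorems~\ref{prop: M monotone} and \ref{thm:gen-thauma-monotone-PWP-superch}) together with reduction to states. You instead unbundle this: for the mana you combine the amortization inequality (Proposition~\ref{prop: M amo}, with $\cM(\kappa^{\operatorname{pre}})=0$ since $\kappa^{\operatorname{pre}}\in\cW_+$) with monotonicity of the state mana under CPWP post-processing (Lemma~\ref{lem:DP-Wigner-tnorm}); for the thauma you apply data processing twice and verify feasibility of $\cF^{\operatorname{post}}\big((\operatorname{id}_R\ox\cE')(\kappa^{\operatorname{pre}})\big)$ in the state-level infimum via the CP-map extensions of additivity and subadditivity. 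Both routes are valid; the paper's is more unified and emphasizes the superchannel viewpoint, whereas yours is more elementary and makes the feasibility bookkeeping explicit. The one point you flag as a potential obstacle — that the auxiliary objects must have $\cM\le 0$ and that the channel-level mana propositions are needed in their Hermiticity-preserving/CP-map forms — is in fact already covered by the paper (Propositions~\ref{prop: M add} and \ref{prop: M subadd} are stated to extend to Hermiticity-preserving linear maps, and Remark~\ref{rem:mono-extend-to-CP} extends the monotonicity), so this is not a gap, just a citation to make carefully.
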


\begin{proof}
We first prove the first inequality in \eqref{eq:mana-inj-1}. Consider that
\begin{align}
\cM(\cN) & = \log \max_{\bu} \| \cN_{A \to B}(A^{\bu}_A)\|_{W,1} \\
& = \log \max_{\bu} \| \Lambda_{AC \to B}(A^{\bu}_A \otimes \omega_C)\|_{W,1} \\
& \leq \log \max_{\bu} \| A^{\bu}_A \otimes \omega_C\|_{W,1} \\
& = \log \max_{\bu} \| A^{\bu}_A \|_{W,1} + \log \| \omega_C \|_{W,1} \\
& = \log \| \omega_C \|_{W,1} \\
& = \cM(\omega_C).
\end{align}
The first two equalities follow from definitions. The inequality follows from Lemma~\ref{lem:DP-Wigner-tnorm} in the appendix. The third equality follows because the Wigner trace norm is multiplicative for tensor-product operators. The fourth equality follows because $\| A^{\bu}_A \|_{W,1} = 1$ for any phase-space point operator $A^{\bu}$. 

We now prove the second inequality in \eqref{eq:mana-inj-1}:
\begin{align}
\boldsymbol{\theta}(\cN) & = \inf_{\cE : \cM(\cE)\leq 0}\sup_{\psi_{RA}}
\mathbf{D}( \cN_{A \to B}(\psi_{RA})\|\cE_{A \to B}(\psi_{RA})) \\
&  = \inf_{\cE : \cM(\cE)\leq 0}\sup_{\psi_{RA}}
\mathbf{D}( \Lambda_{AC \to B}(\psi_{RA} \otimes \omega_C)\|\cE_{A \to B}(\psi_{RA})) \\
& \leq \inf_{\sigma_C\geq 0 : \cM(\sigma_C)\leq 0}\sup_{\psi_{RA}}
\mathbf{D}( \Lambda_{AC \to B}(\psi_{RA} \otimes \omega_C)\|\Lambda_{AC \to B}(\psi_{RA} \otimes \sigma_C)) \\
& \leq \inf_{\sigma_C\geq 0 : \cM(\sigma_C)\leq 0}\sup_{\psi_{RA}}
\mathbf{D}( \psi_{RA} \otimes \omega_C \|\psi_{RA} \otimes \sigma_C) \\
& = \inf_{\sigma_C\geq 0 : \cM(\sigma_C)\leq 0}
\mathbf{D}(  \omega_C \| \sigma_C) \\
& = \boldsymbol{\theta}(\omega_C).
\end{align}
The first two equalities follow from definitions. The first inequality follows because the completely positive map $\cE = \Lambda_{AC\to B}(\cdot \otimes \sigma_C)$ with $\sigma_C \in \cW$ is a special kind of completely positive map such that $\cM(\cE)\leq 0$, due to the first inequality in \eqref{eq:mana-inj-1}. The second inequality follows from data processing under the channel $\Lambda_{AC \to B}$. The third equality follows because the generalized divergence is invariant under tensoring its two arguments with the same state $\psi_{RA}$ (again a consequence of data processing \cite{Wilde2014a}). The final equality follows from the definition in \eqref{eq:GT-state}.

The inequalities in \eqref{eq:mana-seizable-1} are a direct consequence of the definition of a resource-seizable channel, the fact that both the mana and the generalized thauma are monotone under the action of a CPWP superchannel (Theorems~\ref{prop: M monotone} and \ref{thm:gen-thauma-monotone-PWP-superch}, respectively), and with $\cF^{\operatorname{post}}_{RB\to C}(\cN_{A \to B}(\kappa^{\operatorname{pre}}_{RA}))$ understood as a particular kind of superchannel that manipulates $\cN_{A \to B}$ to the state $\omega_C$. Furthermore, it is the case that the channel measures reduce to the state measures when evaluated for preparation channels that take as input a trivial one-dimensional system, for which the only possible ``state'' is the number one, and output a state on the output system (see Proposition~\ref{prop:mana-reduction-to-states} and \eqref{prop:max-thauma-reduction-to-states}).
\end{proof}

\bigskip

Applying Proposition~\ref{prop:injectable-simplify} to the channel $\cT$ and applying some of the results in \cite{Wang2018}, we find that
\begin{align}
\theta_{\max}(\cT) = \theta(\cT) = \theta_{\max}(\proj T) = \theta(\proj T) =  \log (1+2\sin(\pi/18)).
 \end{align}

The notion of an injectable channel also improves the upper bounds  on the distillable magic of a quantum channel:
\begin{proposition}
Given an injectable quantum channel $\cN$ with associated resource state $\omega_C$, the following upper bound holds for the rate $R=k/n$ of an $(n,k,\ve)$ $T$-magic distillation protocol:
\begin{align}
     R  \le \frac{1}{\log(1+2\sin(\pi/18))(1-\ve)}\left(\theta(\omega_C)+\frac{h_2(\ve)}{n}\right),
     \label{eq:weak-conv-bnd-1}
\end{align}
where $h_2(\ve) \coloneqq -\ve \log_2 \ve - (1-\ve)\log_2(1-\ve)$.
Consequently, the following upper bound holds for the $T$-distillable magic of the injectable quantum channel~$\cN$:
\begin{align}
    C_T(\cN) \le \frac{\theta(\omega_C)}{\log (1+2\sin (\pi/18))}.
    \label{eq:inj-ch-dist-magic-bnd}
\end{align}
\end{proposition}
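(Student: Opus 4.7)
The plan is to reduce the adaptive $n$-round distillation protocol to a single CPWP processing of $n$ fresh copies of the resource state $\omega_C$, and then combine monotonicity and subadditivity of the thauma $\theta(\rho)=\inf_{\sigma\in\cW}D(\rho\|\sigma)$ with a two-outcome hypothesis-testing lower bound. The key structural input is injectability: writing $\cN_{A\to B}(\cdot)=\Lambda_{AC\to B}(\cdot\ox\omega_C)$ lets me replace each invocation of $\cN$ in the protocol of Figure~\ref{fig:magic-distill-channel} by injecting a fresh $\omega_C$ through the CPWP channel $\Lambda$. Since CPWP channels are closed under composition and tensor product, the entire protocol---including the preparation of the free initial state $\rho^{(1)}\in\cW_+$---collapses to $\omega_S=\Psi(\omega_C^{\ox n})$ for a single CPWP channel $\Psi$.

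The next step is to argue $\theta(\omega_S)\le n\,\theta(\omega_C)$. Monotonicity of $\theta$ under CPWP channels follows because $\Psi(\sigma)\in\cW$ whenever $\sigma\in\cW$ (using non-negativity and normalization of $W_\Psi(\bv|\bu)$ from Theorem~\ref{th: PWP}, propagation of quasi-probabilities in Lemma~\ref{lem:prop-dWF}, and the triangle inequality, one gets $\|\Psi(\sigma)\|_{W,1}\le\|\sigma\|_{W,1}\le 1$), after which data processing of $D$ yields $\theta(\Psi(\cdot))\le\theta(\cdot)$. Subadditivity under tensor products follows from additivity of $\cM$ (Proposition~\ref{prop: M add})---which guarantees that tensor products of state optimizers remain in $\cW$---together with additivity of the relative entropy.

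On the other side, I would lower bound $\theta(\omega_S)$ via the two-outcome measurement $\{\proj T^{\ox k},I-\proj T^{\ox k}\}$. Data processing of $D$, combined with the magic-state inequality $\tr[\proj T^{\ox k}\sigma]\le(1+2\sin(\pi/18))^{-k}$ for $\sigma\in\cW$ (already invoked in the preceding proposition), and the elementary classical bound $p\log(p/q)+(1-p)\log((1-p)/(1-q))\ge-h_2(p)-p\log q$ applied with $p=1-\ve$, produces
\begin{align}
\theta(\omega_S)\ge(1-\ve)\,k\log(1+2\sin(\pi/18))-h_2(\ve).
\end{align}
Combining with $\theta(\omega_S)\le n\,\theta(\omega_C)$ and solving for $R=k/n$ gives \eqref{eq:weak-conv-bnd-1}; the asymptotic bound \eqref{eq:inj-ch-dist-magic-bnd} follows on sending $n\to\infty$ with $\ve\to 0$. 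The main obstacle I anticipate is the two-outcome data-processing step, where the factor $(1-\ve)$ in the denominator (as opposed to a factor of $1$, which would give a strong converse) must be traced through carefully; using the Umegaki relative entropy rather than the max-relative entropy---which produced the preceding strong-converse bound via $\theta_{\max}$---forces one to absorb the $(1-\ve)$ factor into the rate rather than bounding $\log((1-\ve)/q)$ more crudely.
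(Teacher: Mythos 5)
Your proposal is correct and takes essentially the same route as the paper: injectability collapses the $n$-use adaptive protocol into a single CPWP channel $\Psi$ acting on $\omega_C^{\otimes n}$, after which a state-level weak converse for $T$-magic distillation expressed via the thauma $\theta$ gives the rate bound. The paper delegates that second step to Proposition~4 of the cited Wang~2018 reference together with standard hypothesis-testing relative-entropy inequalities; you have reconstructed precisely that argument from scratch---CPWP-monotonicity and tensor-subadditivity of $\theta$, two-outcome data processing, and the elementary binary bound $D\big((p,1-p)\,\big\|\,(q,1-q)\big)\ge -h_2(p)-p\log q$, which is exactly what produces the $(1-\ve)$ and $h_2(\ve)$ factors---and the pieces all check out.
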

\begin{proof}
Consider an arbitrary $(n,k,\ve)$ $T$-magic state distillation protocol of the form described previously. Due to the injection property, it follows that such a protocol is equivalent to a CPWP channel acting on the resource state $\omega_C^{\otimes n}$ (see Figure~5 of \cite{KW17a}). So the channel distillation problem reduces to a state distillation problem.
Applying Proposition~4 of \cite{Wang2018} and standard inequalities for the hypothesis testing relative entropy from \cite{Wang2012}, we conclude the bound in \eqref{eq:weak-conv-bnd-1}.
Then taking limits, we arrive at \eqref{eq:inj-ch-dist-magic-bnd}.
\end{proof}

\section{Magic cost of a quantum channel}\label{sec: magic cost}
\subsection{Magic cost of exact channel simulation}
Beyond magic distillation via quantum channels, the magic measures of quantum channels can also help us investigate the magic cost in quantum gate synthesis. In the past two decades, tremendous progress has been accomplished in the area of gate synthesis for qubits (e.g., \cite{Bremner2002,Amy2014,Jones2013a,Selinger2012,Ross2016,Gosset2014,Bocharov2015,Wiebe2016a}) and qudits (e.g., \cite{Muthukrishnan2000,Brennen2006,Bullock2004,Di2013,Di2015}).  Elementary two-qudit gates include the controlled-increment gate \cite{Brennen2006} and the generalized controlled-$X$ gate \cite{Di2013,Di2015}. More recently, the synthesis of single-qutrit gates was studied in \cite{Prakash2018,Glaudell2018}.

Of particular interest is to study exact gate synthesis of multi-qudit unitary gates from elements of the Clifford group supplemented by $T$ gates. More generally, a fundamental question is to determine how many instances of a given quantum channel $\mathcal{N}'$ are required simulate another quantum channel $\mathcal{N}$, when supplemented with CPWP channels. That is, such a channel synthesis protocol has the following form:
\begin{equation}
\cN_{A \to B} = \cF^{n+1}_{R_n B_n'\to B} \circ \cN'_{A_n' \to B_n'} \circ \cF^{n}_{R_{n-1} B_{n-1}'\to R_{n} A_{n}'} \circ \cdots \circ
\cF^{2}_{R_1 B_1'\to R_2 A_2'} \circ \cN'_{A_1' \to B_1'} \circ \cF^{1}_{A \to R_{1} A_{1}'},
\label{eq:ch-simulation-syn}
\end{equation}
as depicted in Figure~\ref{fig:synthesis-channel}.
Let $S_{\cN'}(\cN)$ denote the smallest number of $\cN'$ channels required to implement the quantum channel $\cN$ exactly.
Note that it might not always be possible to have an \textit{exact} simulation of the channel $\cN$ when starting from another channel $\cN'$. For example, if $\cN$ is a unitary channel and $\cN'$ is a noisy depolarizing channel, then this is not possible. In this case, we define $S_{\cN'}(\cN)=\infty$.

\begin{figure}
\begin{center}
\includegraphics[width=\textwidth]{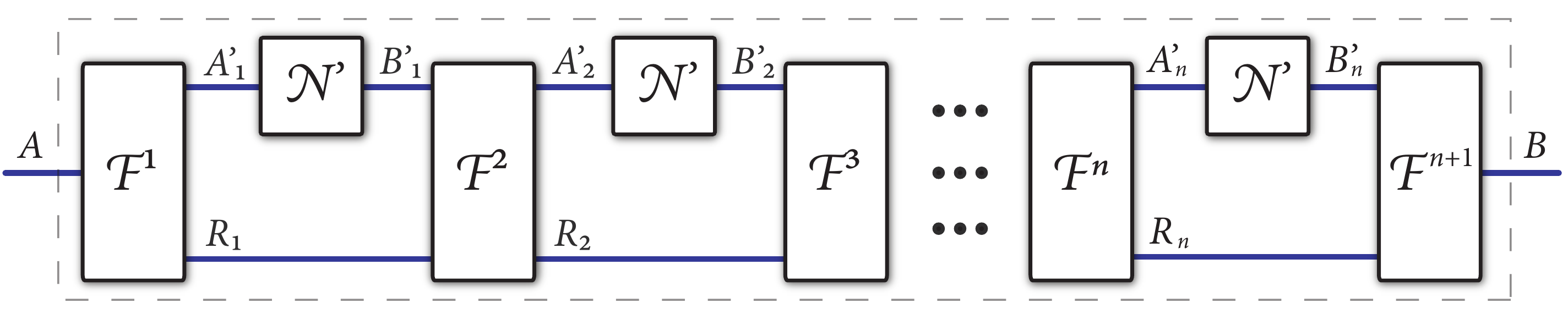}
\caption{The most general protocol for exact synthesis of a channel $\cN_{A\to B}$ starting from $n$ uses of another quantum channel $\cN'$, along with free CPWP channels $\cF^i$, for $i\in \{1,\ldots,n\}$.}
\label{fig:synthesis-channel}
\end{center}

\end{figure}

In the following, we establish lower bounds on gate synthesis by employing the channel measures of magic introduced previously.
\begin{proposition}
\label{prop:lower-bnd-ch-syn}
For any qudit quantum channel $\cN$, the number of channels $\cN'$ required to implement it is bounded from below as follows:
\begin{align}
S_{\cN'}(\cN) \ge \max\left\{\frac{\cM(\cN)}{\cM(\cN')}, \frac{\theta_{\max}(\cN)}{\theta_{\max}(\cN')}\right\}.
\end{align}
If the channel $\cN$ is injectable with associated resource state $\omega_C$, then the following bound holds
\begin{align}
S_{\cN'}(\cN) \ge \max\left\{\frac{\cM(\cN)}{\cM(\omega_C)}, \frac{\theta_{\max}(\cN)}{\theta_{\max}(\omega_C)}\right\}.
\label{eq:inj-statement-synthesis}
\end{align}
\end{proposition}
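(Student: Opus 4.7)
The plan is to combine three building blocks already established for the channel measures in this paper: subadditivity under serial composition (Propositions~\ref{prop: M subadd} and \ref{prop:thauma-ch-subadditive}), additivity under tensor products (Propositions~\ref{prop: M add} and \ref{prop:thauma-ch-additive}), and the vanishing of both measures on every CPWP channel, including the identity (Proposition~\ref{prop:faithfulness-mana} and \eqref{eq:max-thauma-faithful}). The first bound follows from a direct application of these to the synthesis form \eqref{eq:ch-simulation-syn}.

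Starting from \eqref{eq:ch-simulation-syn}, I would iterate serial subadditivity along the composition to obtain
\[
\cM(\cN)\;\leq\;\sum_{i=1}^{n+1}\cM(\cF^i)\;+\;\sum_{i=1}^{n}\cM\!\left(\operatorname{id}_{R_{i-1}}\otimes \cN'_{A_i'\to B_i'}\right).
\]
Each $\cM(\cF^i)$ vanishes by faithfulness because $\cF^i$ is CPWP, and additivity together with $\cM(\operatorname{id})=0$ collapses every remaining term to $\cM(\cN')$, yielding $\cM(\cN)\leq n\,\cM(\cN')$ and hence $n\geq \cM(\cN)/\cM(\cN')$. Repeating the identical chain of reasoning with $\theta_{\max}$ in place of $\cM$, and with the companion propositions substituted, gives $n\geq \theta_{\max}(\cN)/\theta_{\max}(\cN')$. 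Taking the maximum of the two establishes the main bound.

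For the injectable refinement, I would use the injection identity $\cN'_{A'\to B'}(\cdot)=\Lambda_{A'C\to B'}(\,\cdot\,\otimes \omega_C)$ and substitute it into every occurrence of $\cN'$ in \eqref{eq:ch-simulation-syn}. Because CPWP channels are closed under tensor product and serial composition (both immediate from part~3 of Theorem~\ref{th: PWP}, since products and sums of non-negative conditional quasi-probabilities remain non-negative), the whole composite circuit collapses to a single CPWP map $\widetilde\Lambda$ satisfying $\cN(\rho_A)=\widetilde\Lambda(\rho_A\otimes \omega_C^{\otimes n})$. Thus $\cN$ itself is injectable with resource state $\omega_C^{\otimes n}$, so Proposition~\ref{prop:injectable-simplify} gives $\cM(\cN)\leq \cM(\omega_C^{\otimes n})$ and $\theta_{\max}(\cN)\leq \theta_{\max}(\omega_C^{\otimes n})$; additivity of the state-level mana and max-thauma under tensor products turns these into $n\,\cM(\omega_C)$ and $n\,\theta_{\max}(\omega_C)$, and rearranging yields \eqref{eq:inj-statement-synthesis}. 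The only point needing explicit verification, rather than a real obstacle, is the closure of the CPWP class under composition and tensor product used in this step; the rest is a clean instance of the resource-theoretic template that converts composition properties of a channel magic measure into lower bounds on simulation cost.
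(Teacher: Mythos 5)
Your argument is correct, and the main bound goes exactly the paper's way: iterate serial subadditivity (Propositions~\ref{prop: M subadd} and~\ref{prop:thauma-ch-subadditive}) along the synthesis form \eqref{eq:ch-simulation-syn}, kill the $\cF^i$ terms by faithfulness, and use additivity of the channel measures together with $\cM(\operatorname{id})=0$ and $\theta_{\max}(\operatorname{id})=0$ to reduce each $\cM(\operatorname{id}_R\otimes\cN')$ to $\cM(\cN')$. The paper suppresses this last reduction; you make it explicit, modulo a harmless index slip ($R_{i-1}$ should be $R_i$ to match the labeling in \eqref{eq:ch-simulation-syn}). For the injectable refinement, however, your route is genuinely different from the paper's. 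The paper's proof is a one-liner: apply Proposition~\ref{prop:injectable-simplify} to the \emph{consumed} channel to obtain $\cM(\cN')\leq\cM(\omega_C)$ and $\theta_{\max}(\cN')\leq\theta_{\max}(\omega_C)$, and then simply weaken the denominators in the first bound. You instead substitute the injection identity $\cN'(\cdot)=\Lambda(\cdot\otimes\omega_C)$ into every use of $\cN'$, observe that the resulting tensor-and-compose concatenation of CPWP maps is itself CPWP, conclude that the target $\cN$ is injectable with resource $\omega_C^{\otimes n}$, and close with the state-level inequality of Proposition~\ref{prop:injectable-simplify} plus tensor-product additivity of state mana and max-thauma. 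This is sound, but note that ``immediate from part~3 of Theorem~\ref{th: PWP}'' slightly undersells the closure step: besides non-negativity of $W_{\cN}$ you also need the channel analogues of the propagation rule in Lemma~\ref{lem:prop-dWF}, namely $W_{\cN_2\circ\cN_1}(\bw|\bu)=\sum_{\bv}W_{\cN_2}(\bw|\bv)W_{\cN_1}(\bv|\bu)$ and multiplicativity of $W$ under tensor products, which are not stated explicitly in the paper (though both are routine). Both proofs are valid; the paper's is shorter, while yours extracts the reusable structural fact that any $n$-use CPWP simulation from an injectable $\cN'$ renders $\cN$ itself injectable with resource $\omega_C^{\otimes n}$. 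Finally, one observation worth making explicit: the proposition statement says ``the channel $\cN$ is injectable,'' but under that literal reading Proposition~\ref{prop:injectable-simplify} gives $\cM(\cN)\leq\cM(\omega_C)$, so the claimed bound could never exceed the trivial $S_{\cN'}(\cN)\geq 1$; both your proof and the paper's own proof correctly take $\cN'$ to be the injectable channel, which is evidently the intended reading.
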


\begin{proof}
Suppose that the simulation of $\cN$ is realized as in \eqref{eq:ch-simulation-syn}. Applying Proposition~\ref{prop: M subadd} iteratively, we find that
\begin{align}
\cM(\cN) & \leq n \cM(\cN') + \sum_{i=1}^{n+1} \cM(\cF^i)  
  = n \cM(\cN')  ,
\end{align}
where the equality follows from Proposition~\ref{prop:faithfulness-mana} and the assumption that each $\cF^i$ is a CPWP channel. Then $n \geq \frac{\cM(\cN)}{\cM(\cN')}$. Since this inequality holds for an arbitrary channel synthesis protocol, we find that $S_{\cN'}(\cN) \ge \frac{\cM(\cN)}{\cM(\cN')}$.

Applying Propositions~\ref{prop:thauma-ch-subadditive} and \ref{eq:faithful-gen-thauma} in a similar way, we conclude that $S_{\cN'}(\cN) \ge \frac{\theta_{\max}(\cN)}{\theta_{\max}(\cN')}$.

If the channel is injectable, then the upper bounds in \eqref{eq:mana-inj-1} apply, from which we conclude \eqref{eq:inj-statement-synthesis}.
\end{proof}

\bigskip
As a direct application, we investigate gate synthesis of elementary gates. In the following, we prove that four $T$ gates are necessary to synthesize a controlled-controlled-X qutrit gate (CCX gate) exactly.
\begin{proposition}
To implement a controlled-controlled-$X$ qutrit gate, at least four qutrit $T$ gates are required. 
\end{proposition}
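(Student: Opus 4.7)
The plan is to apply Proposition~\ref{prop:lower-bnd-ch-syn} with $\cN'$ equal to the qutrit $T$-gate channel $\cT$ and $\cN$ equal to the three-qutrit $\text{CCX}$ channel. Since $\cT$ is injectable with associated resource state $\proj{T}$ via standard circuit injection \cite{Gottesman1999,Zhou2000}, and since it is also resource-seizable (acting with $\cT$ on the stabilizer state $\proj{+}$ returns $\proj{T}$), the identity
\[
\theta_{\max}(\cT) = \theta_{\max}(\proj{T}) = \log(1 + 2\sin(\pi/18))
\]
recorded just after Proposition~\ref{prop:injectable-simplify} applies. Combining this with Proposition~\ref{prop:lower-bnd-ch-syn} and using that $S_{\cT}(\cdot)$ is a non-negative integer, the claim reduces to showing
\[
\theta_{\max}(\text{CCX}) \;>\; 3\log(1 + 2\sin(\pi/18)).
\]

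The second step is to evaluate $\theta_{\max}(\text{CCX})$ via the semidefinite program in Proposition~\ref{prop:SDPs-max-thauma}. The Choi matrix $J^{\text{CCX}}_{AB}$ of $\text{CCX}$ is a rank-one operator on a space of dimension $d^3\otimes d^3 = 27\otimes 27$ for $d=3$, and the constraints in the SDP are phrased entirely in terms of the phase-space point operators, which are explicitly computable as tensor products of single-qutrit phase-space point operators. Plugging these into the primal SDP in \eqref{eq:theta max channel min} and solving numerically yields a value strictly greater than $3\log(1+2\sin(\pi/18)) \approx 1.289$, which closes the argument by forcing $S_{\cT}(\text{CCX}) \geq 4$.

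The main obstacle is the numerical step: the SDP lives in a $729$-dimensional operator space with one absolute-value constraint per phase-space point $\bu$, so care is needed to linearize these constraints and to certify the numerical output rigorously. A cleaner route would be to exploit the symmetry of $\text{CCX}$ under swapping the two control qutrits in order to restrict the optimization to a symmetric subspace, or to construct an explicit feasible point of the dual SDP whose objective already exceeds the threshold $3\log(1+2\sin(\pi/18))$; this would give an analytic rather than purely numerical proof.
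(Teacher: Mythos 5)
You correctly identify Proposition~\ref{prop:lower-bnd-ch-syn} as the tool and correctly handle the resource-seizability of $\cT$. However, your route differs from the paper's in a way that both makes the argument harder and leaves a genuine gap. The paper uses the \emph{mana} branch of the bound,
\begin{equation*}
S_T(CCX)\ \ge\ \frac{\cM(CCX)}{\cM(\proj T)}\ \ge\ \frac{2.1876}{0.6657}\ \approx\ 3.2861,
\end{equation*}
and the two numbers here are obtained by \emph{direct evaluation}, not by solving an SDP: $\cM(\cN)=\log\max_\bu\|\cN(A_\bu)\|_{W,1}$ is a finite maximization over $3^{6}=729$ phase points, each term an explicit sum of absolute values of traces of known $27\times 27$ matrices, so it can be computed to arbitrary precision without any optimization solver. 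You instead switch to the $\theta_{\max}$ branch, which requires solving the SDP of Proposition~\ref{prop:SDPs-max-thauma} on a $729$-dimensional operator space — a strictly harder computational object to evaluate and to certify rigorously — when the simpler mana branch already does the job.

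More importantly, your proof is incomplete where it matters: the entire claim hinges on $\theta_{\max}(CCX) > 3\log(1+2\sin(\pi/18)) \approx 1.289$, and you never produce this value or a certificate for it. You note that $\theta_{\max}(\cT)=\theta_{\max}(\proj T)=\log(1+2\sin(\pi/18))$, which is correct and in fact gives a \emph{smaller} denominator than $\cM(\proj T)\approx 0.6657$, so the bound could in principle be favorable. But since $\theta_{\max}(CCX)\le\cM(CCX)\approx 2.1876$ and could a priori be significantly smaller, there is no guarantee from the information you have cited that the ratio exceeds $3$. Without an actual numerical value (or an explicit dual feasible point, as you suggest in the last paragraph but do not construct), this is a plan for a proof rather than a proof. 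To close the gap in the spirit of your approach you would need to exhibit the SDP value or a rigorous dual certificate; alternatively, the cleanest fix is simply to adopt the paper's mana-based bound, where the relevant quantities are explicitly computable and give $S_T(CCX)\ge 3.28$, hence $S_T(CCX)\ge 4$.
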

\begin{proof}
By direct numerical evaluation,
we find that 
\begin{align}
    S_T(CCX) \ge \frac{\cM(CCX)}{\cM(\proj T)} \ge \frac{2.1876}{0.6657} \ge 3.2861,
\end{align}
which means that four qutrit $T$ gates are necessary to implement a qutrit $CCX$ gate. 
\end{proof}

\bigskip
For NISQ devices, it is natural to consider gate synthesis under realistic quantum noise. One common noise model in quantum information processing is the depolarizing channel:
\begin{equation}
{\cD_p}(\rho) = (1-p) \rho + \frac{p}{d^2-1} \sum_{\substack{0\leq i,j\leq d-1 \\ (i,j)\neq (0,0)}} X^i Z^j \rho (X^i Z^j)^\dagger.
\end{equation}

Suppose that a $T$ gate is not available, but instead only a noisy version $\cD_p\circ T$ of it is. Then it is reasonable to consider the number of noisy $T$ gates required to implement a low-noise CCX gate, and the resulting lower bound is depicted in Figure~\ref{fig: num T CCX}. Considering the depolarizing noise ($p=0.01$) and applying Proposition~\ref{prop:lower-bnd-ch-syn}, the lower bound is given by 
\begin{align}
\frac{\cM(\cD_{0.01}^{\ox 3}\circ CCX)}{\cM{(\cD_p\circ T)}}.
\end{align}

\begin{figure}
\begin{center}
\includegraphics[width=8.5cm]{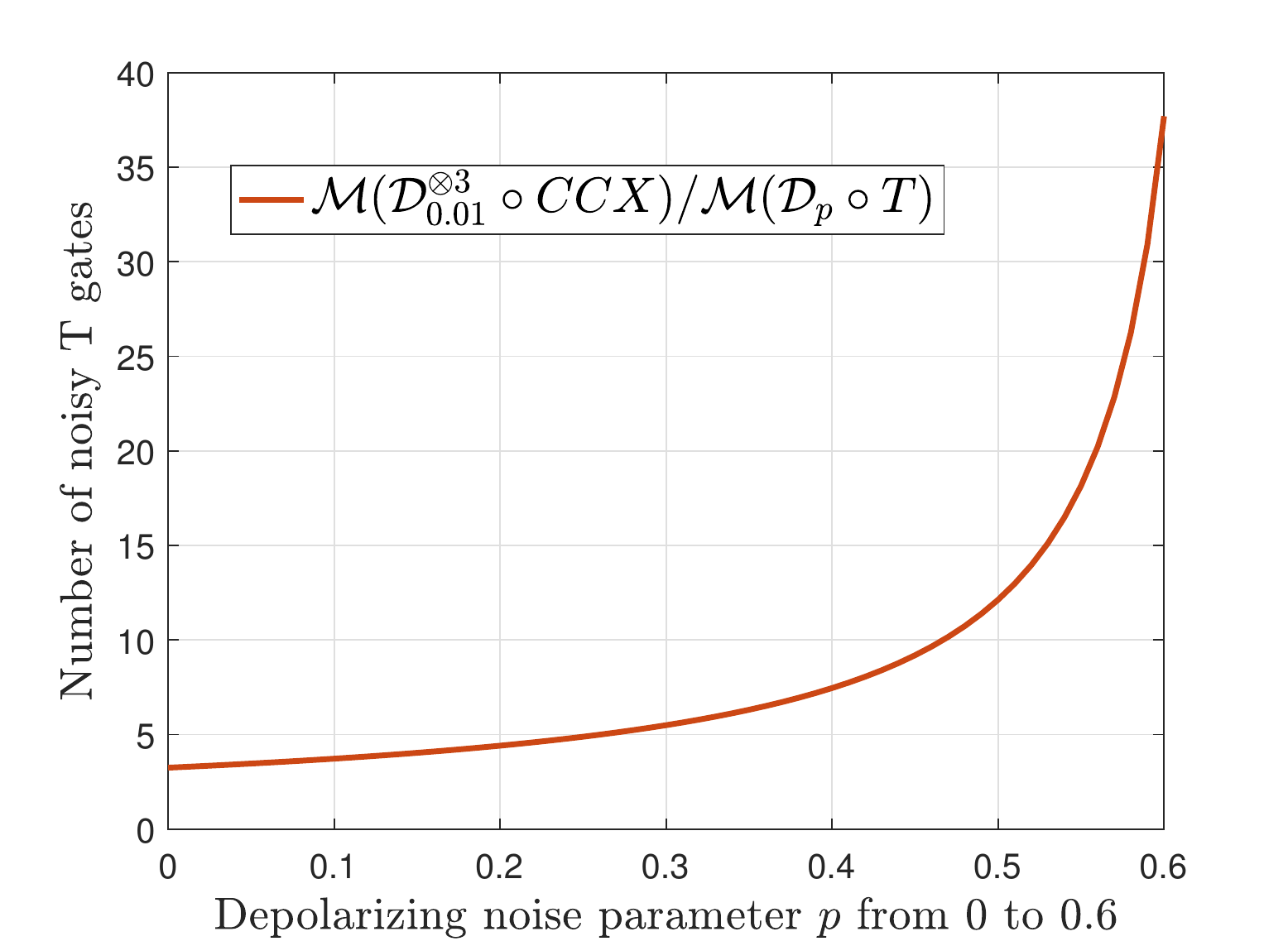}
\caption{Lower bound on the number of noisy $T$ gates required to implement a low-noise CCX gate.}
\label{fig: num T CCX}
\end{center}
\end{figure}

\subsection{Magic cost of approximate channel simulation}

Here we consider the magic cost of approximate channel simulation, which allows for a small error in the simulation process. To be specific, we establish the following proposition:
\begin{proposition}
For any qudit channel $\cN$ (with odd dimensions), the following lower bound for the number of channels $\cN'$ required to implement it with error tolerance $\ve$:
\begin{equation}
\begin{split}
S_{\cN'}^{\ve}(\cN) \ge  \min\{k: k\cM(\cN')\ge \cM(\widetilde\cN), \frac12\|\widetilde \cN - \cN\|_\di \leq \ve, \widetilde\cN\in \text{CPTP}\},
\end{split}
\end{equation}
where $\|\cdot\|_\di:=\sup_{k\in \mathbb N} \sup_{\|X\|_1 \leq 1}\|(\cdot\ox \id_k)(X)\|_1$ denotes the diamond norm.
\end{proposition}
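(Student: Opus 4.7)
The plan is to lift the exact lower bound of Proposition~\ref{prop:lower-bnd-ch-syn} to the approximate setting by the obvious route: a protocol that $\ve$-simulates $\cN$ using $k$ instances of $\cN'$ exactly simulates some nearby channel $\widetilde\cN$, so the mana of that nearby channel must still be bounded by $k\cM(\cN')$. Concretely, suppose that $k$ uses of $\cN'$ interleaved with free CPWP channels, as in the protocol structure of \eqref{eq:ch-simulation-syn}, produce an effective channel $\widetilde\cN$ satisfying $\tfrac12\|\widetilde\cN-\cN\|_\di\le\ve$. Since each $\cF^i$ and $\cN'$ are CPTP, $\widetilde\cN$ is CPTP, so it is a feasible point of the minimization on the right-hand side provided we can verify the mana constraint.

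The first step is to bound $\cM(\widetilde\cN)$ by iterating Proposition~\ref{prop: M subadd} along the serial decomposition of $\widetilde\cN$. The intermediate invocations of $\cN'$ act as $\operatorname{id}_{R_{i-1}}\ox \cN'_{A_i'\to B_i'}$ on the reference-extended system; by the additivity identity in Proposition~\ref{prop: M add} and the fact that the identity channel is CPWP (hence has zero mana by Proposition~\ref{prop:faithfulness-mana}), each such composite map has mana equal to $\cM(\cN')$. The same faithfulness statement gives $\cM(\cF^i)=0$ for every CPWP channel~$\cF^i$. Assembling these ingredients,
\begin{equation}
\cM(\widetilde\cN)\;\le\;\sum_{i=1}^{k+1}\cM(\cF^i)\,+\,k\,\cM(\operatorname{id}_R\ox\cN')\;=\;k\,\cM(\cN').
\end{equation}

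The second step is simply to combine the two observations: $\widetilde\cN$ is CPTP, satisfies $\tfrac12\|\widetilde\cN-\cN\|_\di\le\ve$, and obeys $k\,\cM(\cN')\ge\cM(\widetilde\cN)$, so the pair $(k,\widetilde\cN)$ is feasible for the minimization. Hence any $k$ achievable by an $\ve$-approximate protocol is at least the stated minimum, which gives the desired lower bound on $S_{\cN'}^{\ve}(\cN)$. The argument is essentially a transcription of the exact case, and there is no serious obstacle: the only point that requires care is the tensoring with reference systems inside the serial composition, which is handled by invoking additivity of the mana before subadditivity. Notably, no continuity property of the mana is needed, because the right-hand side of the stated bound already internalizes the approximation by optimizing over the simulated channel $\widetilde\cN$ rather than over $\cN$ itself.
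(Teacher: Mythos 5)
Your proof is correct and follows essentially the same route as the paper's (which is just a one-line remark that the exact bound of Proposition~\ref{prop:lower-bnd-ch-syn} is minimized over CPTP maps $\ve$-close to $\cN$ in diamond norm). You supply the details the paper leaves implicit — in particular the careful handling of the $\operatorname{id}_R \ox \cN'$ factors via additivity and faithfulness of mana — but the underlying argument, namely that an $\ve$-approximate protocol exactly realizes some $\widetilde\cN$ that is then a feasible point of the minimization, is identical.
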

To get this bound, we minimize the lower bound of exact magic cost in Proposition~\ref{prop:lower-bnd-ch-syn} over the quantum channels that are $\ve$-close to the target channel in terms of diamond norm.
Using the SDP form of the diamond norm \cite{Watrous2009} the above lower bound can be computed via the following SDP:
\begin{equation}
\begin{split}
 \log \min \ & t\\
 \text{s.t. } & t2^{\cM(\cN')}\ge \|\widetilde\cN(A_\bu)\|_{W,1}, \forall \bu,\\
  & \tr_B Y_{AB}\le \ve \1_A, Y\ge 0, Y\ge J_{\cN} - J_{\widetilde\cN},\\
  & J_{\widetilde\cN}\ge 0, \tr_B J_{\widetilde\cN}=\1_A.
\end{split}
\end{equation}
Moreover, one could also replace mana with thauma in the above resource estimation. Also, it is possible and interesting to exactly characterize the minimum error of channel simulation under CPWP bipartite channels. We leave these for future study.

\section{Classical simulation of quantum channels}

\label{sec: classical simulation}

\subsection{Classical algorithm for simulating noisy quantum circuits}

\label{sec: classical simulation mana}

An operational meaning associated with mana is that it quantifies the rate at which a quantum circuit can be simulated on a classical computer. Inspired by \cite{PWB15}, we propose an algorithm for simulating quantum circuits in which the operations can potentially be noisy. We show that the complexity of this algorithm scales with the mana (the logarithmic negativity) of quantum channels, establishing mana as a useful measure for measuring the cost of classical simulation of a (noisy) quantum circuit. For recent independent and related work, see \cite{RLCK19}.

Let $\mathcal{H}_d^{\otimes n}$ be the Hilbert space of an $n$-qudit system. Consider an evolution that consists of the sequence  $\{\mathcal{N}_l\}_{l=1}^L$ of channels acting on an input state $\rho$. Then the probability of  observing the POVM measurement outcome $E$, where $0\leq E \leq I$, can be computed according to the Born rule  as
\begin{equation}
	\tr\big[E(\mathcal{N}_L\circ\cdots\circ \mathcal{N}_1)(\rho)\big]
	=\sum_{\overrightarrow{\bu}}W(E|\bu_L)\prod_{l=1}^{L}W_{\mathcal{N}_l}(\bu_l|\bu_{l-1})W_\rho(\bu_0),
\end{equation}
where $\overrightarrow{\bu}=(\bu_0,\ldots,\bu_L)$ represents a vector in the discrete phase space and $W(E|\cdot)$ is the discrete Wigner function of the measurement operator $E$ (cf., Eq.~\eqref{eq:wig of measurement}). For the base case $L=1$, this follows from the properties of the discrete Wigner function:
\begin{align}
    \sum_{\bu,\bu'}W(E|\bu')W_{\mathcal{N}}(\bu'|\bu)W_\rho(\bu)
	&=\sum_{\bu,\bu'}\tr\!\big[EA_{\bu'}\big]
	\frac{1}{d}\tr\!\big[A_{\bu'}\mathcal{N}(A_{\bu})\big]
	\frac{1}{d}\tr\!\big[A_{\bu}\rho\big]\\
	&=\sum_{\bu'}\tr\!\big[EA_{\bu'}\big]
	\frac{1}{d}\tr\!\big[A_{\bu'}\mathcal{N}(\rho)\big]\\
	&=\tr\!\big[E\mathcal{N}(\rho)\big].
\end{align}
The case of general $L$ follows by induction.

Our goal is to estimate $\tr\big[E(\mathcal{N}_L\circ\cdots\circ \mathcal{N}_1)(\rho)\big]$ with additive error. In what follows, we assume that the input state is $\rho=|0^n\rangle\langle 0^n|$ and the desired outcome is $|0\rangle\langle 0|$. This assumption is without loss of generality, since we can reformulate both the state preparation and the measurement as quantum channels
\begin{equation}
	\mathcal{N}_1(\sigma)=\tr(\sigma)\rho,\qquad
	\mathcal{N}_L(\sigma)=\tr(E\sigma)|0\rangle\langle 0|+\tr((\mathds{1}-E)\sigma)|1\rangle\langle 1|.
\end{equation}
Consequently, we have
\begin{equation}
	\tr\big[|0\rangle\langle 0|(\mathcal{N}_L\circ\cdots\circ \mathcal{N}_1)(|0^n\rangle\langle 0^n|)\big]
	=\sum_{\overrightarrow{\bu}}W({|0\rangle\langle 0|}|\bu_L)\prod_{l=1}^{L}W_{\mathcal{N}_l}(\bu_l|\bu_{l-1})W_{|0^n\rangle\langle 0^n|}(\bu_0).
\end{equation}

To describe the simulation algorithm, we define the \emph{negativity} of quantum states and channels as
\begin{equation}
\begin{aligned}
	\cM_\rho&\coloneqq\|\rho\|_{W,1}=\sum_{\bu}|W_\rho(\bu)|,\\
	\cM_{\mathcal{N}}(\bu)&\coloneqq\|\cN(A_{\bu})\|_{W,1}=\sum_{\bu'}|W_{\mathcal{N}}(\bu'|\bu)|,\\
	\cM_{\mathcal{N}}&\coloneqq2^{\cM(\cN)}=\max_{\bu}\cM_\mathcal{N}(\bu).
\end{aligned}
\end{equation}
Then a noisy circuit comprised of the channels $\{\mathcal{N}_l\}_{l=1}^L$ can be simulated as follows. We sample the initial phase point $\bu_0$ according to the distribution $|W_{|0^n\rangle\langle 0^n|}(\bu_0)|/\cM_{|0^n\rangle\langle 0^n|}$ and, for each $l=1,\ldots,L$, we sample a phase point $\bu_l$ according to the conditional distribution $|W_{\mathcal{N}_l}(\bu_l|\bu_{l-1})|/\cM_{\mathcal{N}_l}(\bu_{l-1})$, after which we output the estimate
\begin{equation}
	\cM_{|0^n\rangle\langle 0^n|}\mathrm{Sign}\big[W_{|0^n\rangle\langle 0^n|}(\bu_0)\big]
	\prod_{l=1}^{L}\cM_{\mathcal{N}_l}(\bu_{l-1})\mathrm{Sign}\big[W_{\mathcal{N}_l}(\bu_l|\bu_{l-1})\big]
	W({|0\rangle\langle 0|}|\bu_L).
\end{equation}
This gives an unbiased estimate of the output probability since
\begin{equation}
\begin{aligned}
    &\mathbb{E}\Big[
    \cM_{|0^n\rangle\langle 0^n|}\mathrm{Sign}\big[W_{|0^n\rangle\langle 0^n|}(\bu_0)\big]
	\prod_{l=1}^{L}\cM_{\mathcal{N}_l}(\bu_{l-1})\mathrm{Sign}\big[W_{\mathcal{N}_l}(\bu_l|\bu_{l-1})\big]
	W({|0\rangle\langle 0|}|\bu_L)
    \Big]\\
    =&\sum_{\overrightarrow{\bu}}
    \frac{|W_{|0^n\rangle\langle 0^n|}(\bu_0)|}{\cM_{|0^n\rangle\langle 0^n|}}\prod_{l=1}^{L}\frac{|W_{\mathcal{N}_l}(\bu_l|\bu_{l-1})|}{\cM_{\mathcal{N}_l}(\bu_{l-1})}\\
    &\quad\times\cM_{|0^n\rangle\langle 0^n|}\mathrm{Sign}\big[W_{|0^n\rangle\langle 0^n|}(\bu_0)\big]
	\prod_{l=1}^{L}\cM_{\mathcal{N}_l}(\bu_{l-1})\mathrm{Sign}\big[W_{\mathcal{N}_l}(\bu_l|\bu_{l-1})\big]
	W({|0\rangle\langle 0|}|\bu_L)\\
	=&\sum_{\overrightarrow{\bu}}W({|0\rangle\langle 0|}|\bu_L)\prod_{l=1}^{L}W_{\mathcal{N}_l}(\bu_l|\bu_{l-1})W_{|0^n\rangle\langle 0^n|}(\bu_0)\\
	=&\tr\big[|0\rangle\langle 0|(\mathcal{N}_L\circ\cdots\circ \mathcal{N}_1)(|0^n\rangle\langle 0^n|)\big].
\end{aligned}
\end{equation}

Note that $\cM_{|0^n\rangle\langle 0^n|}=1$ since $|0^n\rangle$ is trivially a stabilizer state. Also for any stabilizer POVM $\{E_k\}$, we have $\tr\big[E_kA_{\bu}\big]\geq 0$ and
\begin{equation}
\sum_k\tr\big[E_kA_{\bu}\big]=\tr\big[A_{\bu}\big]=1,
\end{equation}
which implies that
\begin{equation}
\max_{\bu}|W({E_k}|\bu)|=\max_{\bu}\left|\tr\big[E_kA_{\bu}\big]\right|=\max_{\bu}\tr\big[E_kA_{\bu}\big]\leq 1.
\end{equation}
Therefore, the estimate that we output has absolute value bounded from above by
\begin{equation}
\cM_{\rightarrow}=
\prod_{l=1}^{L}\cM_{\mathcal{N}_l}.
\end{equation}
By the Hoeffding inequality, it suffices to take
\begin{equation}
	\frac{2}{\epsilon^2}\cM_{\rightarrow}^2\log\Big(\frac{2}{\delta}\Big)
\end{equation}
samples to estimate the probability of a fixed measurement outcome with accuracy $\epsilon$ and success probability $1-\delta$. 

In the description of the above algorithm, we have used the discrete Wigner representation of quantum states, channels, and measurement operators. However, the algorithm can be generalized using the frame and dual frame representation along the lines of the work \cite{PWB15}. Specifically, for any frame $\{F(\lambda):\lambda\in\Lambda\}$ and its dual frame $\{G(\lambda):\lambda\in\Lambda\}$ on a $d$-dimensional Hilbert space, we define the corresponding quasiprobability representation of a state, a channel, and a measurement operator respectively as
\begin{equation}
\begin{aligned}
	W_{\rho}(\lambda)&= \tr[F(\lambda)\rho],\\
	\cW_{\cN}(\lambda'|\lambda)&=\tr [F(\lambda')\cN(G(\lambda))],\\
	W(E|\lambda)&=\tr\!\big[EG(\lambda)\big].
\end{aligned}
\end{equation}
Then, we have similar rules for computing the measurement probability
\begin{equation}
	\sum_{\lambda,\lambda'}W(E|\lambda')W_{\mathcal{N}}(\lambda'|\lambda)W_\rho(\lambda)
	=\tr\big[E\mathcal{N}(\rho)\big]
\end{equation}
and the above discussion carries through without any essential change. For simplicity, we omit the details here. Note that for the discrete Wigner function representation that we used in our paper, the correspondence is $F(\lambda) = A_{\bu}/d$ and $G(\lambda) = A_\bu$.

\subsection{Comparison of classical simulation algorithms for noisy quantum circuits}

Recently, the channel robustness and the magic capacity were introduced  to quantify the magic of multi-qubit noisy circuits \cite{Seddon2019}. To be specific, given a quantum channel $\cN$, the channel robustness is defined as
\begin{align}\label{eq:rom decom N}
\cR_{*}(\cN)\coloneqq\min_{\Lambda_{\pm}\in\SCPO}\left\{2p+1: (1+p)\Lambda_+ - p\Lambda_- =\cN, p\ge0 \right\},
\end{align}
and  the magic capacity is defined as
\begin{align}\label{eq:cap N}
C(\cN)\coloneqq\max_{\ket \phi \in \STAB} \cR[(\operatorname{id}\ox \cN) (\proj \phi)].
\end{align}
They are related by the inequality \cite{Seddon2019}
\begin{align}
\label{eq:sandwich}
\cR(\Phi_\cN)\le C(\cN) \le \cR_*(\cN),
\end{align}
where $\cR(\cdot)$ is the robustness of magic (cf.~Section~\ref{sec: preliminary}) and $\Phi_\cN$ is the normalized \Choi operator of $\cN$. The authors of \cite{Seddon2019} further developed two matching simulation algorithms that scale quadratically with these channel measures. Here, we compare their approach with the one described in Section~\ref{sec: classical simulation mana} for simulating noisy qudit circuits. Note that neither the proof of \eqref{eq:sandwich} nor the static Monte Carlo algorithm of \cite{Seddon2019} depend on the dimensionality of the underlying system, so those results can be generalized to any qudit system with odd prime dimension.

Thus, we consider an $n$-qudit system with the underlying Hilbert space $\mathcal{H}_d^{\otimes n}$, where $d$ is an odd prime. Consider a noisy circuit consisting of the sequence  $\{\mathcal{N}_l\}_{l=1}^L$ of channels acting on the initial state $|0^n\rangle$, after which a computational basis measurement is performed. To describe the simulation algorithm based on channel robustness \cite{Seddon2019}, we assume each $\mathcal{N}_j$ has the optimal decomposition with respect to the set of CSPOs
\begin{equation}
	\mathcal{N}_j=(1+p_j)\mathcal{N}_{j,0}-p_j\mathcal{N}_{j,1},
\end{equation}
where $\cR_{*}(\cN_j)=2p_j+1$. For any $\vec{k}\in\mathbb{Z}_2^L$, define
\begin{equation}
	p_{\vec{k}}=\prod_{k_j=0}(1+p_j)\prod_{k_j=1}(-p_j)\qquad
	\|p\|_1=\sum_{\vec{k}}|p_{\vec{k}}|=\prod_j\cR_{*}(\cN_j)=\cR_{*}.
\end{equation}
We sample a $\vec{k}\in\mathbb{Z}_2^L$ from the distribution $|p_{\vec{k}}|/\|p\|_1$ and simulate the evolution $\mathcal{N}_{L,k_L}\circ\cdots\circ\mathcal{N}_{2,k_2}\circ\mathcal{N}_{1,k_1}$ \footnote{Strictly speaking, this evolution can be directly simulated only when the circuit is noiseless. Otherwise, we need to further decompose each $\mathcal{N}_{j,k_j}$ as a finite combination of stabilizer-preserving operations, each of which has a single Kraus operator. This does not affect the sample complexity of the simulator \cite{Seddon2019}.}. To achieve accuracy $\epsilon$ and success probability $1-\delta$, it suffices to take
\begin{equation}
	\frac{2}{\epsilon^2}\cR_{*}^2\log\Big(\frac{2}{\delta}\Big)
\end{equation}
samples.

To compare it with the mana-based simulation algorithm, we first prove that that the exponentiated mana of a quantum channel is always smaller than or equal to the channel robustness. To establish the separation, we introduce the robustness of magic with respect to non-negative Wigner function as follows.
\begin{definition}  
Given a quantum state $\rho$,  the robustness of magic with respect to non-negative Wigner function is defined as
\begin{align}\label{eq:rom decom}
\cR_{\cW_+} (\rho)&\coloneqq \min\left\{2p+1: \rho =(1+p)\sigma -p\omega,  \omega,\sigma \in \cW_+\right\}.
\end{align}
\end{definition}
Since $\STAB\subset \cW_+$, we have
\begin{align}\label{eq:rom ineq}
\cR_{\cW_+} (\Phi_\cN) \le \cR(\Phi_\cN) \le C(\cN) \le \cR_*(\cN).
\end{align}

\begin{proposition}
\label{prop:rom_mana gap}
Given a quantum channel $\cN$, the following inequality holds
\begin{equation}
2^{\cM(\cN)}=\cM_{\cN} \le \cR_{*}(\cN),
\end{equation}
and the inequality can be strict.
\end{proposition}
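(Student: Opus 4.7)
The plan is to take an optimal affine decomposition of $\cN$ into CSPO channels and then apply the triangle inequality in the discrete Wigner representation, exploiting the fact that CSPO channels are a subset of CPWP channels (as indicated in Figure~\ref{fig: CQ EB 1}). Concretely, suppose that $\cR_*(\cN) = 2p+1$ is achieved by the decomposition $\cN = (1+p)\Lambda_+ - p\Lambda_-$ with $\Lambda_\pm \in \SCPO$. Since every CSPO channel is in particular CPWP, Theorem~\ref{th: PWP} together with Proposition~\ref{prop:faithfulness-mana} gives that $W_{\Lambda_\pm}(\bv|\bu)$ is a genuine conditional probability distribution, so that $W_{\Lambda_\pm}(\bv|\bu) \geq 0$ and $\sum_{\bv} W_{\Lambda_\pm}(\bv|\bu) = 1$ for every $\bu$.

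Next, since the discrete Wigner function of a channel is linear in the channel (this is immediate from Definition~\ref{def:dWf-channel}), one has
\begin{equation}
W_{\cN}(\bv|\bu) = (1+p)\, W_{\Lambda_+}(\bv|\bu) - p\, W_{\Lambda_-}(\bv|\bu).
\end{equation}
The triangle inequality together with non-negativity of $W_{\Lambda_\pm}$ then yields $|W_{\cN}(\bv|\bu)| \leq (1+p) W_{\Lambda_+}(\bv|\bu) + p\, W_{\Lambda_-}(\bv|\bu)$ for every $\bu,\bv$, so that
\begin{equation}
\sum_{\bv}|W_{\cN}(\bv|\bu)| \leq (1+p)\sum_{\bv} W_{\Lambda_+}(\bv|\bu) + p \sum_{\bv} W_{\Lambda_-}(\bv|\bu) = (1+p) + p = 2p+1.
\end{equation}
Taking the maximum over $\bu$ and recalling the definition in \eqref{eq:dWf-mana-def}, I obtain $\cM_{\cN} = 2^{\cM(\cN)} \leq 2p+1 = \cR_*(\cN)$.

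For the strictness claim, the natural structural reason is that $\SCPO \subsetneq \operatorname{CPWP}$, so one can find a channel $\cN$ that lies in CPWP but not in CSPO. For any such channel, Proposition~\ref{prop:faithfulness-mana} gives $\cM_{\cN} = 2^{\cM(\cN)} = 1$, whereas $\cR_*(\cN) > 1$ (either because no CSPO decomposition with $p=0$ exists, forcing $2p+1 > 1$, or, in the worst case, because no finite decomposition exists and $\cR_*(\cN) = \infty$). A concrete witness can be obtained by selecting an appropriate CPWP channel explicitly, or by appealing to a numerically computable example, and verifying the gap by evaluating both $\cM_{\cN}$ (from the closed-form formula in \eqref{eq:dWf-mana-def}) and $\cR_*(\cN)$ (via its SDP/LP characterization from \cite{Seddon2019}). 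The main obstacle I anticipate is the strictness part: the upper-bound inequality is essentially a one-line application of the triangle inequality once the CSPO $\subset$ CPWP containment is invoked, but exhibiting an explicit, cleanly verifiable example where $\cM_{\cN} < \cR_*(\cN)$ requires either constructing a specific CPWP channel outside CSPO or carrying out numerical computations of both quantities on a candidate channel.
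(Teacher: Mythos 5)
Your proof of the inequality $\cM_{\cN}\le\cR_*(\cN)$ is correct and follows essentially the same route as the paper: take the optimal CSPO decomposition $\cN=(1+p)\Lambda_+-p\Lambda_-$, apply the triangle inequality to the Wigner representation, and use the fact that $\|\Lambda_\pm(A_\bu)\|_{W,1}=1$. You are a little more explicit than the paper about \emph{why} $\|\Lambda_\pm(A_\bu)\|_{W,1}=1$, deducing it from the containment $\SCPO\subset\operatorname{CPWP}$ together with Theorem~\ref{th: PWP} and the normalization $\sum_{\bv}W_{\Lambda_\pm}(\bv|\bu)=1$; the paper simply asserts this from $\Lambda_\pm\in\SCPO$. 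This is a harmless (and arguably cleaner) expansion of the same argument.

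The strictness claim is where your proposal falls short of a complete proof, as you yourself acknowledge. Your ``structural'' route requires producing a channel in $\operatorname{CPWP}\setminus\SCPO$. The strict inclusion is suggested by Figure~\ref{fig: CQ EB 1} but is never actually established in the paper, and exhibiting such a channel is not immediate: by Theorem~\ref{th: PWP} a CPWP channel corresponds to a classical stochastic map on discrete phase space, but not every stochastic map on phase space arises from a completely positive trace-preserving map, so one cannot simply write down an arbitrary non-symplectic stochastic matrix. The paper instead takes your ``numerical'' route: it introduces the computable lower bound $\cR_{\cW_+}(\Phi_\cN)\le\cR_*(\cN)$ (Eq.~\eqref{eq:rom ineq}), considers the one-parameter family of diagonal unitaries $U_\theta$, and exhibits a numerical gap $\cM_{U_\theta}<\cR_{\cW_+}(\Phi_{U_\theta})$ over the range $\pi\le\theta\le 2\pi$ (Figure~\ref{fig: comp R M}). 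Note that the channels $U_\theta$ used in the paper are \emph{not} CPWP in general (they have $\cM(U_\theta)>0$), so the paper's witness to strictness is not of the form you propose; the separation there comes from a genuine gap in the magnitudes of the two measures rather than from one being zero and the other positive. To finish your argument along the paper's lines, you would need to either (i) construct and verify a concrete element of $\operatorname{CPWP}\setminus\SCPO$, or (ii) carry out the numerical evaluation of $\cM_{\cN}$ and a computable lower bound on $\cR_*(\cN)$ for an explicit family, which is what the paper does.
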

\begin{proof}
Suppose $\{p,\Lambda_{\pm}\}$ is the optimal solution to Eq.~\eqref{eq:rom decom N} of $\cR_*(\cN)$.

Then, we have
\begin{align}
	\cM_{\cN}&=\max_{\bu} \|\cN(A_\bu)\|_{W,1}\\
	&=\max_{\bu} \|(1+p)\Lambda_+(A_\bu)-p\Lambda_-(A_\bu)\|_{W,1}\\
	&\le \max_{\bu}\left[(1+p) \|\Lambda_+(A_\bu)\|_{W,1}+p\|\Lambda_-(A_\bu)\|_{W,1}\right]\label{eq: M R triangle}\\
	&=2p+1 \label{eq: M R final}\\
	&=\cR_*(\cN).
\end{align}
The inequality in \eqref{eq: M R triangle} follows due to the triangle inequality.  The equality in \eqref{eq: M R final} follows since $\Lambda_{\pm}\in \SCPO$ and then $\|\Lambda_\pm(A_\bu)\|_{W,1}=1$ for any $\bu$.

Furthermore, we demonstrate the strict separation between $2^{\cM(\cN)}$ and $\cR_{*}(\cN)$ via the following example. Let us consider the diagonal unitary
\begin{align}
U_{\theta}=\left(\begin{matrix}
e^{i\theta/9} & 0 & 0\\
0 & 1 & 0\\
0& 0 & e^{-i\theta/9}
\end{matrix}
\right).
\end{align}
Note that the  $T$ gate is a special case, given by $U_{2\pi}$. Due to Eq.~\eqref{eq:rom ineq}, the separation between  $\cM({U_\theta})$ and $\log \cR_{\cW_+} (\Phi_{U_\theta})$  in Figure~\ref{fig: comp R M} indicates that the mana of a channel can be strictly smaller than the channel robustness and magic capacity, i.e.,
\begin{align}\label{eq: gap mana rom}
\cM_{U_\theta} < \cR_{\cW_+} (\Phi_{U_\theta}) \le C({U_\theta})\le R_{*}({U_\theta}).
\end{align}
This concludes the proof.
\end{proof}

\bigskip

\begin{figure}
\begin{center}
\includegraphics[width=8.7cm]{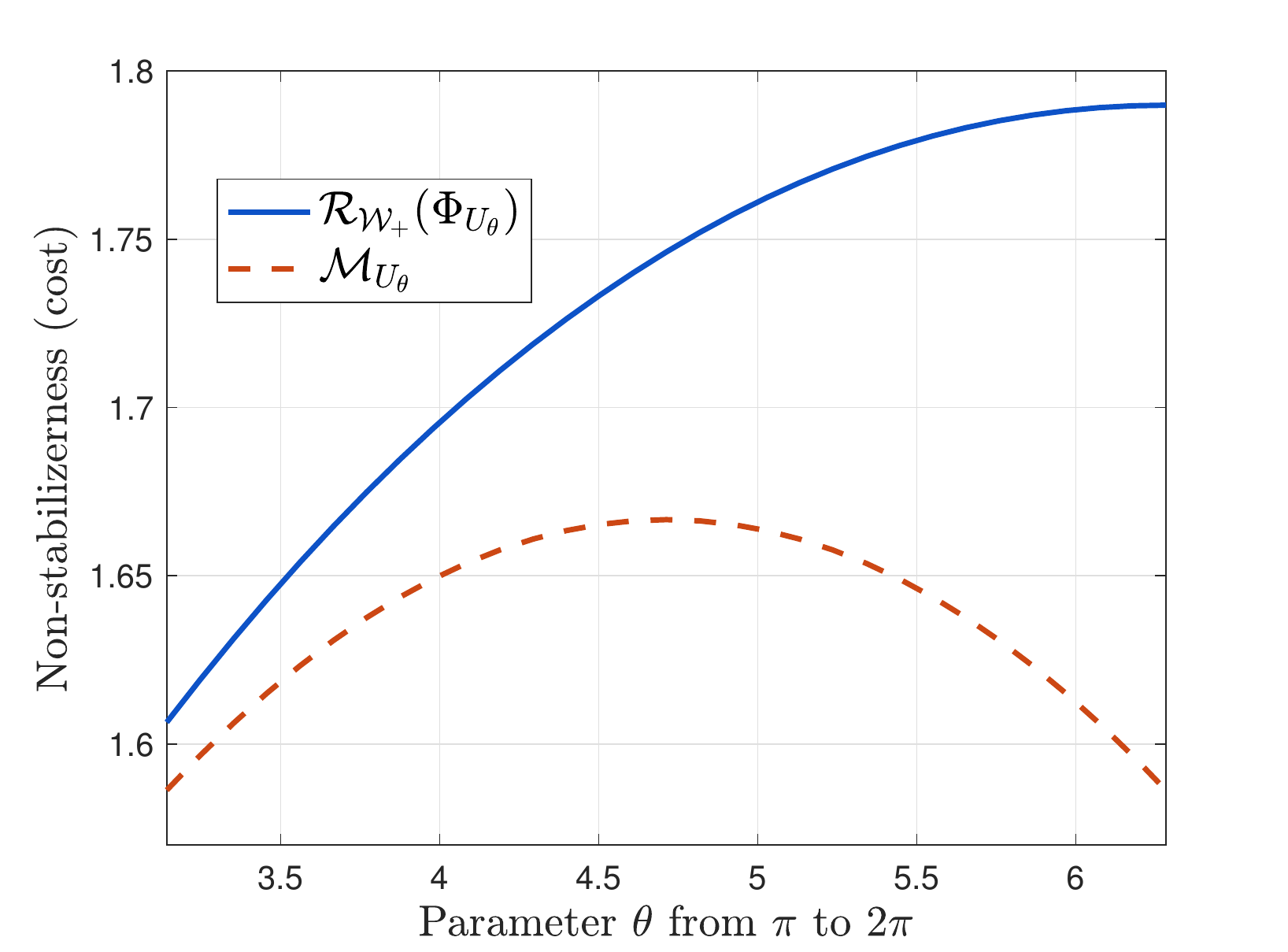}
\caption{Comparison between $\cM_{U_\theta}$ and $ \cR_{\cW_+} (\Phi_{U_\theta})$ for $\pi\le \theta \le 2\pi$. The gap indicates that $\cM_{U_\theta}$ is strictly smaller than $C({U_\theta})$ and $R_{*}({U_\theta})$.}
\label{fig: comp R M}
\end{center}
\end{figure}

Applying Proposition~\ref{prop:rom_mana gap} to the channels $\{\mathcal{N}_l\}_{l=1}^L$, we find that
\begin{equation}
	\cM_{\rightarrow}=
	\prod_{j}\cM_{\mathcal{N}_j}
	\leq\prod_j\cR_{*}(\cN_j)=\cR_{*}.
\end{equation}
Thus for an $n$-qudit system with odd prime dimension, the sample complexity of the mana-based approach is never worse than the algorithm of \cite{Seddon2019} based on channel robustness. Furthermore, the separation demonstrated in \eqref{eq: gap mana rom} indicates that the mana-based algorithm can be strictly faster for certain quantum circuits. 

The above example shows that the mana of a quantum channel can be smaller than its magic capacity \cite{Seddon2019}, due to \eqref{eq: gap mana rom}, but it is not clear whether this relation holds for general quantum channels.

 \section{Examples}

\subsection{Non-stabilizerness under depolarizing noise}
For near term quantum technologies, certain physical noise may occur during quantum information processing. One common quantum noise model is given by the depolarizing channel:
\begin{align}
{\cD_p}(\rho) = (1-p) \rho + \frac{p}{d^2-1} \sum_{\substack{0\leq i,j\leq d-1 \\ (i,j)\neq (0,0)}} X^i Z^j \rho (X^i Z^j)^\dagger,
\end{align} 
where $p\in[0,1]$ and $X,Z$ are the generalized Pauli operators. 
 
Let us suppose that depolarizing noise occurs after the implementation of a $T$ gate. From Figure~\ref{fig: noise T}, we find that if the depolarizing noise parameter $p$ is higher than or equal to $0.62$, then the channel $\cD_p \circ \cT$ cannot generate any non-stabilizerness. That is, the channel $\cD_p \circ \cT$ becomes CPWP after this cutoff.
\begin{figure}
\begin{center}
\includegraphics[width=8.2cm]{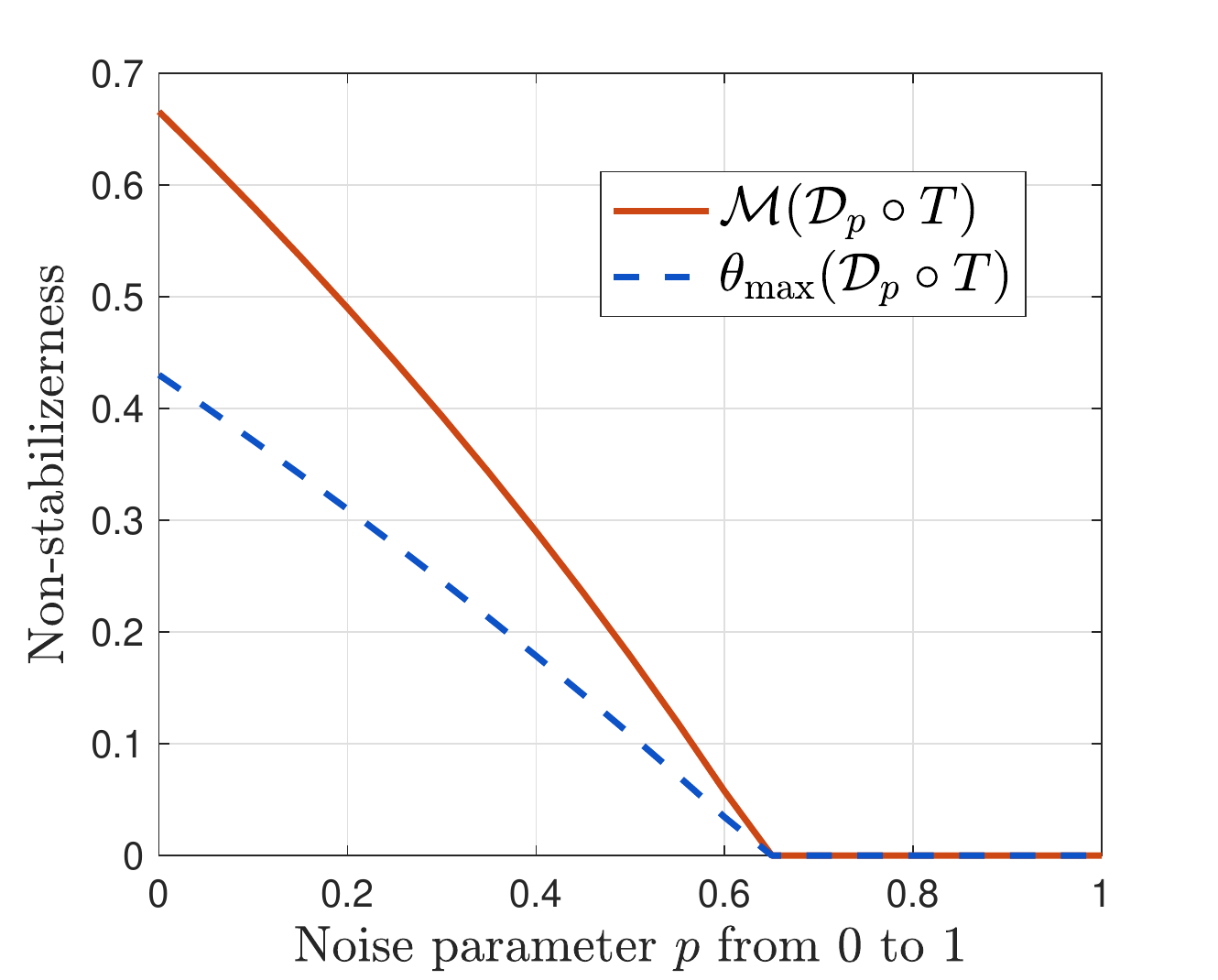}
\caption{Non-stabilizerness of $T$ gate after depolarizing noise. The solid red line quantifies the classical simulation cost of noisy circuits $\cD_p\circ \cT$. The dashed blue line gives the upper bound of magic generating capacity of $\cD_p\circ \cT$.}
\label{fig: noise T}
\end{center}
\end{figure}

Another interesting case is the CCX gate. Let us suppose that depolarizing noise occurs in parallel after the implementation of the CCX gate. The mana of $\cD_p^{\ox 3} \circ CCX$ is plotted in Figure~\ref{fig: noise CCX}, where we see that it decreases linearly and becomes equal to zero at around $p \approx 0.75$.
\begin{figure}
\begin{center}
\includegraphics[width=8.5cm]{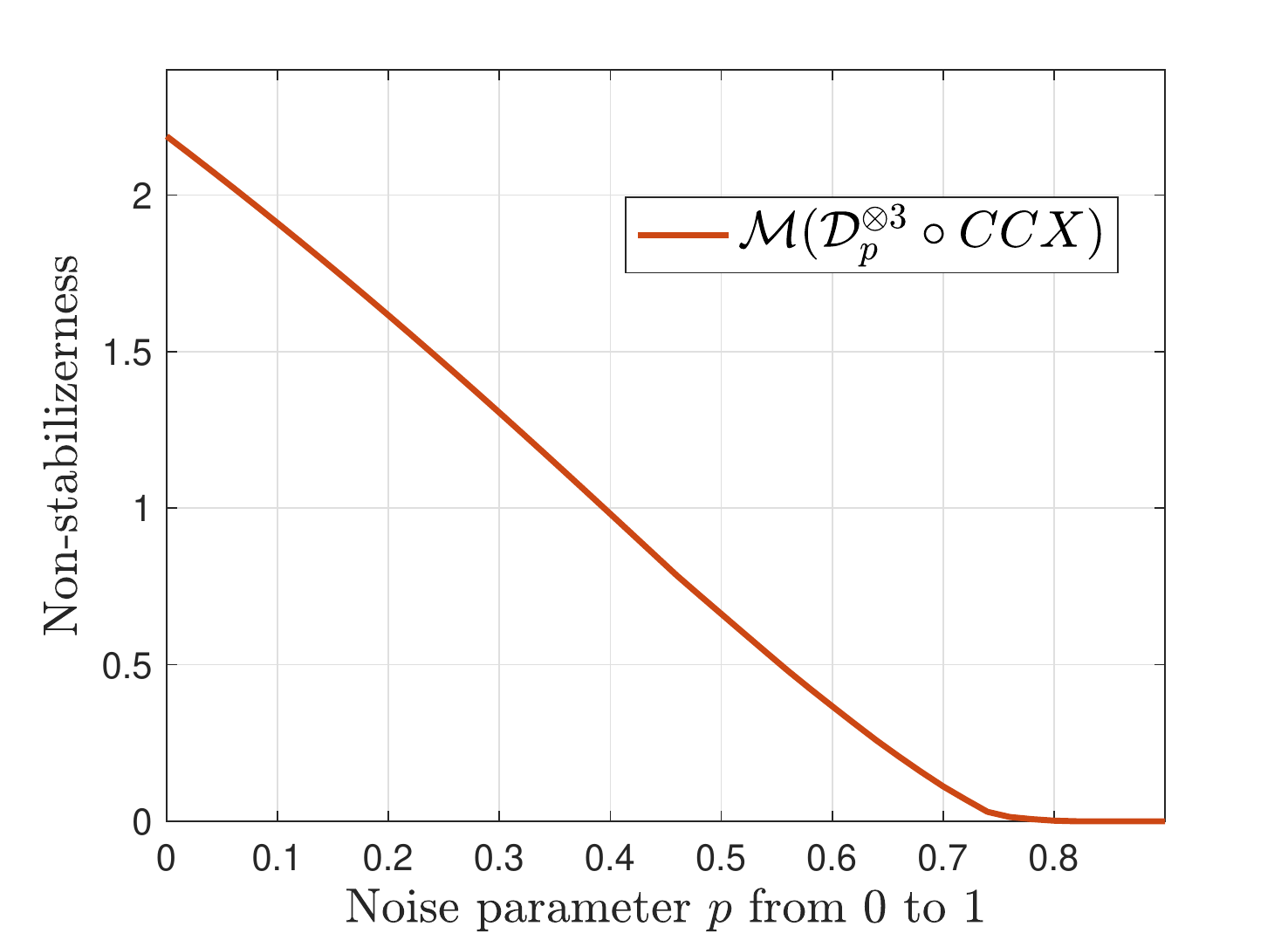}
\caption{Non-stabilizerness of CCX gate after depolarizing noise. The solid line also quantifies the classical simulation cost of the noisy circuit $\cD_p^{\ox 3}\circ CCX$.}
\label{fig: noise CCX}
\end{center}
\end{figure}

\subsection{Werner-Holevo channel}

An interesting qutrit channel is the qutrit Werner-Holevo channel \cite{WH02}:
\begin{align}
\cN_{\operatorname{WH}}(V)=\frac{1}{2}[(\tr V) \1- V^T].
\end{align}
In what follows, we find that the Werner-Holevo channel maps any quantum state to a free state in $\cW_+$ (state with non-negative Wigner function), while its amortized magic is given by our channel measure. This also indicates that the ancillary reference system is necessary to consider in the study of the resource theory of magic channels.
\begin{proposition}
For the qutrit Werner-Holevo channel $\cN_{\operatorname{WH}}$,
\begin{align}
    \cN_{\operatorname{WH}}(\rho)\in \cW_+,
\end{align}
for any input state $\rho$ (which is restricted to be a state of the channel input system),
while
\begin{align}
    \cM^{\cA}(\cN_{\operatorname{WH}}) &= \cM(\cN_{\operatorname{WH}})=\frac{5}{3},\\
    \theta_{\max}^{\cA}(\cN_{\operatorname{WH}}) & = \theta_{\max}(\cN_{\operatorname{WH}})=\frac{5}{3}.
\end{align}
\end{proposition}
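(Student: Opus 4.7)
The plan is to verify all four equalities by direct computation using $\cN_{\operatorname{WH}}(X) = \tfrac{1}{2}[\tr(X)\,\1 - X^T]$, leveraging two $d=3$ facts: the phase-space point operators $A_\bu$ have eigenvalues $\{1,1,-1\}$ (so $\|A_\bu\|_\infty=1$ and $|W_\rho(\bu)|\leq 1/3$ for every state $\rho$), and $A_\bu^T$ is again some phase-space point operator (denote its index by $\bu^T$, using property~6 of Section~\ref{sec:dWF}). For the inclusion $\cN_{\operatorname{WH}}(\rho)\in\cW_+$, expanding with $\tr(\rho)=1$ yields $W_{\cN_{\operatorname{WH}}(\rho)}(\bu) = \tfrac{1}{6} - \tfrac{1}{2}W_{\rho^T}(\bu) \geq 0$.

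For the channel mana, the same identity applied to $A_\bu$ gives $\cN_{\operatorname{WH}}(A_\bu) = \tfrac{1}{2}(\1 - A_{\bu^T})$ with Wigner function $\tfrac{1}{2d} - \tfrac{1}{2}\delta_{\bv,\bu^T}$, so $\|\cN_{\operatorname{WH}}(A_\bu)\|_{W,1} = (d-1)(d+2)/(2d) = 5/3$ at $d=3$, independent of $\bu$; hence $\cM(\cN_{\operatorname{WH}}) = \log(5/3)$. The inequalities $\cM^{\cA}\leq\cM$ and $\theta_{\max}^{\cA}\leq\theta_{\max}\leq\cM$ come from Propositions~\ref{prop: M amo}, \ref{prop: theta amo}, and the corollary $\theta_{\max}\leq\cM$. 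For the matching lower bounds, I use $\rho_{RA}=\Phi_{RA}$: since $\Phi\in\cW_+$, $\cM(\Phi) = \theta_{\max}(\Phi) = 0$, and the output is the Choi state $J^{\cN_{\operatorname{WH}}}/d = (\1-F)/(2d)$. A direct computation gives $W_{J^{\cN_{\operatorname{WH}}}/d}(\bu,\bv) = (1 - d\,\delta_{\bu,\bv})/(2d^3)$, whose $W,1$-norm again equals $5/3$. This settles $\cM^{\cA}(\cN_{\operatorname{WH}}) = \log(5/3)$, and reduces the $\theta_{\max}^{\cA}$ and $\theta_{\max}$ claims to the state-level bound $\theta_{\max}(J^{\cN_{\operatorname{WH}}}/d) \geq \log(5/3)$, via the amortization $\theta_{\max}^{\cA}(\cN)\geq \theta_{\max}((\id_R\otimes\cN)(\Phi))-\theta_{\max}(\Phi)$.

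For that state-level bound, I dualize the SDP in~\eqref{eq:SDP theta min} to $\theta_{\max}(\rho) = \log\max\{\tr[\rho M] : M\geq 0,\ \|M\|_{W,\infty}\leq 1\}$, and propose the covariant witness $M = \1 - \tfrac{2}{3}F$ on $\CC^3\otimes\CC^3$. One verifies: (a) $M\geq 0$ (eigenvalues $1/3$ on the symmetric and $5/3$ on the antisymmetric subspace); (b) using $\tr[(A^\bu\otimes A^\bv)F] = \tr[A^\bu A^\bv] = d\,\delta_{\bu,\bv}$, one gets $\tr[(A^\bu\otimes A^\bv)M] = 1 - \tfrac{2d}{3}\delta_{\bu,\bv}$, whose absolute value is at most $1$ at $d=3$; and (c) $(\1-F)(\1 - \tfrac{2}{3}F) = \tfrac{5}{3}(\1 - F)$, so $\tr[\rho M] = \tfrac{1}{6}\cdot\tfrac{5}{3}\tr(\1-F) = 5/3$. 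Plugged into the dual, this yields $\theta_{\max}(J^{\cN_{\operatorname{WH}}}/d) \geq \log(5/3)$, closing all four equalities.

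The main obstacle is identifying the right dual witness $M$. Naively the dual has many degrees of freedom, but the $U\otimes U^*$-symmetry of both $\rho = (\1-F)/6$ and the feasibility constraints $\{M\geq 0,\ \|M\|_{W,\infty}\leq 1\}$ suggests restricting to $\operatorname{span}\{\1,F\}$, which turns the dual into a two-parameter linear program whose unique extreme point is $M=\1 - \tfrac{2}{3}F$. Justifying this symmetry reduction cleanly (rather than just guessing the witness and verifying) is the only subtle step; a safe fallback is to skip the reduction and directly verify the three-line check (a)--(c) above.
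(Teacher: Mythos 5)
Your proof is correct and in fact substantially more rigorous than the paper's own argument, which asserts the mana and max-thauma values purely by ``numerical calculations.'' Your route is genuinely different: for the mana of the channel you compute $\cN_{\operatorname{WH}}(A_\bu) = \tfrac{1}{2}(\1 - A_{\bu^T})$ analytically and evaluate its Wigner trace norm in closed form; for the lower bound on the amortized measures you compute the Wigner function of the Choi state $(\1-F)/(2d)$ directly; and for the max-thauma lower bound you correctly dualize the state SDP in~\eqref{eq:SDP theta min} to $\theta_{\max}(\rho)=\log\max\{\tr[\rho M]: M\geq 0,\ \|M\|_{W,\infty}\leq 1\}$ and exhibit a feasible witness $M=\1-\tfrac{2}{3}F$ that is easy to verify (the symmetry-reduction heuristic is unnecessary; the direct check suffices, as you note). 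These replace all three numerical claims in the paper with clean analytical derivations.

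One point worth raising: the paper's proposition writes $\cM(\cN_{\operatorname{WH}})=\tfrac{5}{3}$ and $\theta_{\max}(\cN_{\operatorname{WH}})=\tfrac{5}{3}$, but given the paper's own definitions of mana and max-thauma (both are logarithmic quantities), the value must be $\log\tfrac{5}{3}$, which is exactly what your calculation yields. The quantity $\tfrac{5}{3}$ is the \emph{exponentiated} mana, i.e.\ $\cM_{\cN_{\operatorname{WH}}}=2^{\cM(\cN_{\operatorname{WH}})}=\max_\bu\|\cN_{\operatorname{WH}}(A_\bu)\|_{W,1}$ in the notation of Section~\ref{sec: classical simulation mana}. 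So your result identifies a notational slip in the stated proposition. Everything else in your argument, including the chain
\begin{equation}
\log\tfrac{5}{3} \ \le\ \theta_{\max}\big(J^{\cN_{\operatorname{WH}}}/d\big) - \theta_{\max}(\Phi)\ \le\ \theta_{\max}^{\cA}(\cN_{\operatorname{WH}})\ \le\ \theta_{\max}(\cN_{\operatorname{WH}})\ \le\ \cM(\cN_{\operatorname{WH}}) = \log\tfrac{5}{3},
\end{equation}
is airtight and closes both sets of equalities simultaneously.
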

\begin{proof}
On the one hand, for any input state $\rho$, we find that
\begin{align}
W_{\cN_{\operatorname{WH}}(\rho)}(\bu)  =\frac{1}{6}\tr A_{\bu}(\1-\rho^T) =\frac{1}{6}(1-\tr A_{\bu}\rho^T)
 \geq \frac{1}{6}(1-\| A_{\bu}\|_\infty)\ge 0,
\end{align}
where the first inequality follows because $\|A_{\bu}\|_{\infty} = \|A_{0}\|_{\infty} = 1$, since $A_{\bu} = T_\bu A_0 T_\bu^\dag$, the matrix $T_\bu$ is unitary, and $A_0$ for a qutrit is explicitly given by the following unitary transformation:
\begin{equation}
A_0 = \begin{bmatrix}
1 & 0 & 0 \\
0 & 0 & 1\\
0 & 1 & 0
\end{bmatrix}.
\end{equation}

On the other hand, we can set $\rho_{RA}$ to be the maximally entangled state, and we find from numerical calculations that
\begin{align}
    \cM((\operatorname{id}_R \ox \cN_{\operatorname{WH}})(\rho_{RA}))-\cM(\rho_{RA}) = \frac{5}{3}.
\end{align}
Meanwhile, we find from numerical calculations that $\cM(\cN_{\operatorname{WH}})=5/3$, which by Proposition~\ref{prop: M amo} means that 
\begin{align}
    \sup_{\rho_{RA}} [\cM((\operatorname{id}_R \ox \cN_{\operatorname{WH}})(\rho_{RA}))-\cM(\rho_{RA})] = \cM(\cN_{\operatorname{WH}}) = \frac{5}{3}.
\end{align}

Similarly, we find from numerical calculations that
\begin{align}
    \sup_{\rho_{RA}} [\theta_{\max}((\operatorname{id}_R \ox \cN_{\operatorname{WH}})(\rho_{RA}))-\theta_{\max}(\rho_{RA})] = \theta_{\max}(\cN_{\operatorname{WH}})=\frac{5}{3}.
\end{align}
This concludes the proof.
\end{proof}

 \section{Conclusion}
 We have introduced two efficiently computable magic measures of quantum channels to quantify and characterize the non-stabilizer resource possessed by quantum channels. These two channel measures have application in evaluating magic generating capability, gate synthesis, and classical simulation of noisy quantum circuits. More generally, our work establishes fundamental limitations on the processing of quantum magic using noisy quantum circuits, opening new perspectives for the investigation of the resource theory of quantum channels in fault-tolerant quantum computation. 

One future direction is to explore tighter evaluations of the distillable magic of quantum channels. We think that it would also be interesting to explore other applications of our channel measures and generalize our approach to the multi-qubit case.

\section*{Acknowledgements}
We are grateful to the anonymous referees for several comments that helped to improve our paper.
XW acknowledges support from the Department of Defense. MMW acknowledges support from NSF under grant no.~1714215. YS acknowledges support from the Army Research Office (MURI award W911NF-16-1-0349), the Canadian Institute for Advanced Research, the National Science Foundation (grants 1526380 and 1813814), and the U.S.\ Department of Energy, Office of Science, Office of Advanced Scientific Computing Research, Quantum Algorithms Teams and Quantum Testbed Pathfinder programs.

\appendix

\section{On completely positive maps with non-positive mana}

\label{app:tni-lemma}

\begin{lemma}
\label{lem:mana-CP-maps}
Let $\cE$ be a completely positive map. If $\cM(\cE) \leq 0$, then $\cE$ is trace non-increasing on the set~$\cW_+$ (quantum states with non-negative Wigner function).
\end{lemma}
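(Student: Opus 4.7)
The plan is to translate the claim into a statement about the discrete Wigner quasiprobabilities and then invoke the hypothesis $\cM(\cE)\le 0$, which by the definition in \eqref{eq:channel-mana} is equivalent to $\sum_{\bv}|W_\cE(\bv|\bu)|\le 1$ for every $\bu$. Since $\rho\in\cW_+$ admits the expansion $\rho=\sum_\bu W_\rho(\bu) A_\bu$ (property~5 from Section~\ref{sec:dWF}) with $W_\rho(\bu)\ge 0$ and $\sum_\bu W_\rho(\bu)=\tr[\rho]=1$ by \eqref{eq:dWF-normalized}, the trace non-increasing property should reduce to a purely classical inequality between marginals of quasiprobabilities.

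First I would apply $\cE$ to the phase-space expansion of $\rho$ and take the trace to obtain
\begin{align}
\tr[\cE(\rho)] = \sum_\bu W_\rho(\bu)\, \tr[\cE(A_\bu)].
\end{align}
Then I would evaluate $\tr[\cE(A_\bu)]$ by expanding $\cE(A_\bu)$ in the output phase-space basis and invoking property~4 ($\tr[A_\bv]=1$), together with the identification $W_{\cE(A_\bu)}(\bv)=W_\cE(\bv|\bu)$ read off from Definition~\ref{def:dWf-channel}. This yields $\tr[\cE(A_\bu)]=\sum_\bv W_\cE(\bv|\bu)$.

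Combining these observations and bounding the signed sum by its absolute-value sum, I would conclude
\begin{align}
\tr[\cE(\rho)]
= \sum_\bu W_\rho(\bu)\sum_\bv W_\cE(\bv|\bu)
\le \sum_\bu W_\rho(\bu)\sum_\bv |W_\cE(\bv|\bu)|
\le \sum_\bu W_\rho(\bu)
= \tr[\rho],
\end{align}
where the first inequality crucially uses $W_\rho(\bu)\ge 0$ (so that multiplying by $W_\rho(\bu)$ and summing preserves the bound), and the second inequality uses the mana hypothesis. Since $\tr[\rho]=1$ for $\rho\in\cW_+$, this gives $\tr[\cE(\rho)]\le 1=\tr[\rho]$, establishing trace non-increase on $\cW_+$.

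I do not anticipate any real obstacle; the argument is essentially bookkeeping with the phase-space expansion, and the proof does not even require the full propagation rule of Lemma~\ref{lem:prop-dWF} — only its trace-marginalized version. The one point worth double-checking is that the mana and the associated Wigner-function formulas are indeed being used in their form for general completely positive (or merely Hermiticity-preserving) maps rather than only for channels; the excerpt explicitly grants this generality just after \eqref{eq:dWf-mana-def} (\emph{``More generally, we define the mana of a Hermiticity-preserving linear map $\cP_{A\to B}$ via the same formula''}), so trace preservation of $\cE$ is nowhere required.
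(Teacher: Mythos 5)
Your proof is correct and follows essentially the same route as the paper: both reduce to the identity $\tr[\cE(\rho)] = \sum_{\bu,\bv} W_{\cE}(\bv|\bu)\,W_{\rho}(\bu)$ (you via the phase-space expansion $\rho=\sum_\bu W_\rho(\bu)A_\bu$ together with $\tr[\cE(A_\bu)]=\sum_\bv W_\cE(\bv|\bu)$, the paper via Lemma~\ref{lem:prop-dWF} with a trace), and then bound the double sum using $W_\rho(\bu)\ge 0$ and the hypothesis $\sum_\bv|W_\cE(\bv|\bu)|\le 1$. The only cosmetic difference is that the paper first writes $\tr[\cE(\rho)]=|\tr[\cE(\rho)]|$ (invoking complete positivity) and then applies the triangle inequality, whereas you bound the signed sum directly, so your chain uses only Hermiticity preservation; this is a slightly leaner way to reach the same bound but does not constitute a different argument.
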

\begin{proof}
The assumption $\cM(\cE) \leq 0$ is equivalent to $\max_{\bu} \sum_{\bv}  |W_{\cE}(\bv|\bu)| \leq 1$. For  $\rho \in \cW_+$, then consider that
\begin{align}
\tr[\cE(\rho)] & = \left| \tr[\cE(\rho)] \right|   = \left| \sum_{\bv, \bu} W_{\cE}(\bv|\bu) W_{\rho}(\bu)\right| \\
& \leq  \sum_{\bv, \bu} \left| W_{\cE}(\bv|\bu)\right| \left| W_{\rho}(\bu)\right|  \leq  \sum_{\bu}  \left| W_{\rho}(\bu)\right| = \sum_{\bu}   W_{\rho}(\bu) = 1.
\end{align}
The first equality follows from the assumption that $\cE$ is completely positive and $\rho$ is positive semi-definite. The second equality follows from Lemma~\ref{lem:prop-dWF}. The first inequality follows from the triangle inequality and the second from the assumption $\cM(\cE) \leq 0$. The final two equalities follow from the assumption $\rho \in \cW_+$ and  \eqref{eq:dWF-normalized}.
\end{proof}

\section{Data processing inequality for the Wigner trace norm}

\label{app:DP-Wig-t-norm}

\begin{lemma}
\label{lem:DP-Wigner-tnorm}
For any operator $Q$ and CPWP channel $\Pi$, the following inequality holds
\begin{align}
    \|\Pi(Q)\|_{W,1} \le \|Q\|_{W,1}.
\end{align}
\end{lemma}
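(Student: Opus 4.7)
The approach is to translate the Wigner trace norm into the $\ell_1$-norm of the discrete Wigner quasi-probability vector, and then exploit the fact, established in Theorem~\ref{th: PWP}, that a CPWP channel $\Pi$ acts on the Wigner representation as a classical stochastic (conditional probability) map. Once this reformulation is in place, the inequality becomes the familiar classical statement that stochastic matrices are $\ell_1$-contractions, and follows from one application of the triangle inequality.

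First I would expand $Q$ in the phase-space operator basis using property~5 of the $\{A_{\bu}\}$ recalled in Section~\ref{sec:dWF}, writing $Q = \sum_{\bu} W_Q(\bu)\, A_\bu$ and therefore, by linearity, $\Pi(Q) = \sum_{\bu} W_Q(\bu)\, \Pi(A_\bu)$. Applying Definition~\ref{def:dWf-channel} on both sides gives the propagation identity
\begin{equation}
W_{\Pi(Q)}(\bv) \;=\; \sum_{\bu} W_{\Pi}(\bv|\bu)\, W_Q(\bu),
\end{equation}
which is the single-system version (no reference) of Lemma~\ref{lem:prop-dWF}, extended from states to arbitrary operators by the linearity of the Wigner function and of $\Pi$.

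Next I would invoke two structural facts: Theorem~\ref{th: PWP} implies that $W_{\Pi}(\bv|\bu) \ge 0$ for all $\bu,\bv$ because $\Pi$ is CPWP, and \eqref{eq:normalized-dWF-channel} yields $\sum_{\bv} W_{\Pi}(\bv|\bu) = 1$ because $\Pi$ is trace preserving. Together with the triangle inequality, these give
\begin{equation}
\|\Pi(Q)\|_{W,1} \;=\; \sum_{\bv} \bigl|W_{\Pi(Q)}(\bv)\bigr| \;\le\; \sum_{\bu,\bv} W_{\Pi}(\bv|\bu)\, |W_Q(\bu)| \;=\; \sum_{\bu} |W_Q(\bu)| \;=\; \|Q\|_{W,1},
\end{equation}
which is the claimed inequality.

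There is no genuine obstacle to overcome here; the only point worth flagging is that the hypothesis places no restriction on $Q$ (in particular $Q$ need not be Hermitian or positive semi-definite), but this causes no trouble because $\{A_\bu\}_{\bu}$ is a basis for the entire operator space and all of the manipulations above are linear in $Q$ and remain valid when $W_Q(\bu)$ takes complex values. The argument is, in essence, the exact quasi-probability analogue of the standard classical data-processing inequality for the total-variation norm under a stochastic matrix.
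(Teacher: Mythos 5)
Your proof is correct and takes essentially the same route as the paper's: expand $Q$ in the phase-space point operator basis, recognize that this turns the action of $\Pi$ into a convolution with the channel's discrete Wigner function $W_\Pi(\bv|\bu)$, and then use the triangle inequality together with the non-negativity and normalization of $W_\Pi$ (i.e.\ its stochasticity, from Theorem~\ref{th: PWP} and \eqref{eq:normalized-dWF-channel}) to conclude. The paper folds the propagation identity directly into the expression for $\|\Pi(Q)\|_{W,1}$ rather than stating it as a separate step, but the content is identical.
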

\begin{proof}
Let us suppose that $Q=\sum_\bv q_\bv A_\bv$. Then we have $\|Q\|_{W,1}=\sum_\bv |q_\bv|$. Furthermore,
\begin{align}
    \|\Pi(Q)\|_{W,1} & = \left\|\sum_\bv q_\bv \Pi(A_\bv)\right\|_{W,1} \\
    & = \sum_\bu \left|\sum_\bv q_\bv\tr [A_\bu\Pi(A_\bv)]\right| \\
    & = \sum_\bu \left|\sum_\bv q_\bv W_{\Pi}(\bu|\bv)\right|\\
    & \le \sum_\bu \sum_\bv |q_\bv| W_{\Pi}(\bu|\bv)\\
    & =\sum_\bv |q_\bv| 
     = \|Q\|_{W,1}.
\end{align}
This concludes the proof.
\end{proof}

\bibliographystyle{myhamsplain2}
\bibliography{smallbib}

 \end{document}